\newcommand{\Gtri}{\ensuremath{G_{\Delta}}}
\newcommand{\Geqt}{\ensuremath{G_\text{eqt}}}
\newcommand{\pweight}[1]{\overline{p}(\sigma,#1)}
\newcommand{\cshort}{\nu}
\newcommand{\M}{\mathcal{M}}
\newcommand{\A}{\mathcal{A}}
\newcommand{\sthree}{1.732050808}  
\newcommand{\cbound}{2+\sqrt{2}}   
\newif\ifcomment
\newif\iftikz
\newtheorem{theorem}{Theorem}[section]
\newtheorem{lemma}[theorem]{Lemma}
\newtheorem{corollary}[theorem]{Corollary}
\newtheorem{property}{Property}
\title{A Stochastic Approach to Shortcut Bridging in Programmable Matter}
\author{Marta {Andr\'es\ Arroyo}$^1$\thanks{{\tt martaandres@correo.ugr.es}.} \and
        Sarah Cannon$^2$\thanks{{\tt sarah.cannon@berkeley.edu}. Supported in part by NSF DGE-1148903 and a grant from the Simons Foundation (\#361047 to Sarah Cannon).} \and
        Joshua J. Daymude$^3$\thanks{{\tt jdaymude@asu.edu}. Supported in part by NSF CCF-1422603, CCF-1637393, and CCF-1733680.} \and
        Dana Randall$^4$\thanks{{\tt randall@cc.gatech.edu}. Supported in part by NSF CCF-1526900, CCF-1637031, and CCF-1733812.} \and
        Andr\'ea W. Richa$^3$\thanks{{\tt aricha@asu.edu}. Supported in part by NSF CCF-1422603, CCF-1637393, and CCF-1733680.}}
\date{\small $^1$ University of Granada, Spain \\ $^2$ Computer Science Division, University of California, Berkeley, Berkeley, CA 94709 \\ $^3$ Computing, Informatics, and Decision Systems Engineering, Arizona State University, Tempe, AZ 85281 \\ $^4$ College of Computing, Georgia Institute of Technology, Atlanta, GA 30332-0765}
\begin{document}

\maketitle
\setcounter{footnote}{0}  

\begin{abstract}
In a \emph{self-organizing particle system}, an abstraction of programmable matter, simple computational elements called \emph{particles} with limited memory and communication self-organize to solve system-wide problems of movement, coordination, and configuration.
In this paper, we consider a stochastic, distributed, local, asynchronous algorithm for ``shortcut bridging'', in which particles self-assemble bridges over gaps that simultaneously balance minimizing the length and cost of the bridge.
Army ants of the genus \emph{Eciton} have been observed exhibiting a similar behavior in their foraging trails, dynamically adjusting their bridges to satisfy an efficiency trade-off using local interactions. 
Using techniques from Markov chain analysis, we rigorously analyze our algorithm, show it achieves a near-optimal balance between the competing factors of path length and bridge cost, and prove that it exhibits a dependence on the angle of the gap being ``shortcut'' similar to that of the ant bridges.
We also present simulation results that qualitatively compare our algorithm with the army ant bridging behavior.
Our work gives a plausible explanation of how convergence to globally optimal configurations can be achieved via local interactions by simple organisms (e.g., ants) with some limited computational power and access to random bits.
The proposed algorithm also demonstrates the robustness of the stochastic approach to algorithms for programmable matter, as it is a surprisingly simple extension of our previous stochastic algorithm for compression. 
\end{abstract}

\section{Introduction} \label{sec:intro}

To develop a system of \emph{programmable matter}, one endeavors to create a material or substance that utilizes user input or stimuli from its environment to change its physical properties in a programmable fashion.
Many such systems have been proposed (e.g., DNA tiles, synthetic cells, and reconfigurable modular robots) and each attempts to perform tasks subject to domain-specific capabilities and constraints.
In our work on \emph{self-organizing particle systems}, we abstract away from specific settings and envision a system of computationally limited devices (which we call \emph{particles}) that can actively move and individually execute distributed, local, asynchronous algorithms to cooperatively achieve macro-scale tasks of movement and coordination.

The phenomenon of local interactions yielding emergent, collective behavior is often found in natural systems; for example, honey bees choose hive locations based on decentralized recruitment~\cite{Camazine1999} and cockroach larvae perform self-organizing aggregation using pheromones with limited range~\cite{Jeanson2005}.
In this paper, we present an algorithm inspired by the work of Reid et al.~\cite{Reid2015}, who found that army ants continuously modify the shape and position of foraging bridges --- constructed and maintained by their own bodies --- across holes and uneven surfaces in the forest floor.
These bridges appear to stabilize in a structural formation that balances the ``benefit of increased foraging trail efficiency'' with the ``cost of removing workers from the foraging pool to form the structure''~\cite{Reid2015}.

We attempt to capture this inherent trade-off in our algorithm for ``shortcut bridging'' in self-organizing particle systems (formally defined in Section~\ref{subsec:problem}).
Our algorithm is an extension of the stochastic, distributed algorithm for \emph{compression} introduced in~\cite{Cannon2016}, demonstrating that many fundamental elements of our stochastic approach can be generalized to applications beyond the specific context of compression, in which a particle system gathers together as tightly as possible.
In particular, this stochastic approach may be of future interest in the molecular programming domain, where simpler variations of bridging have been studied.
Groundbreaking works in this area, such as that of Mohammed et al.~\cite{Mohammed2017}, focus on forming molecular structures that connect some fixed points; our work may offer insights on further optimizing the quality and/or cost of the resulting bridges.

Shortcut bridging is an attractive goal for programmable matter systems, as many application domains envision deploying programmable matter on surfaces with structural irregularities or dynamic topologies.
For example, one commonly imagined application of smart sensor networks is to detect and span small cracks in infrastructure such as roads or bridges; dynamic bridging behavior would enable the system to remain connected and shift position as cracks form.

\subsection{Related Work} \label{subsec:relwork}



When considering recently proposed and realized systems of programmable matter, one can distinguish between \emph{passive} and \emph{active} systems.
In passive systems, computational units cannot control their movements and have (at most) very limited computational abilities, relying instead on their physical structure and interactions with the environment to achieve locomotion (e.g.,~\cite{Woods2015,Angluin2006,Reid2016}).
A large body of research in molecular self-assembly falls under this category, which has mainly focused on shape formation (e.g.,~\cite{Douglas2009,Cheung2011,Wei2012}).
In contrast, our work examines building dynamic bridges whose exact shape is not predetermined.
Mohammed et al.~studied a similar problem of connecting DNA origami landmarks with DNA nanotubes, using a carefully designed process of nanotube nucleation, growth, and diffusion to achieve and maintain the desired connections~\cite{Mohammed2017}.
Significant differences between their approach and ours are: $(i)$ the bridges we consider already connect their endpoints at the start and we focus on the specific goal of optimizing their shape with respect to a parameterized objective function, and $(ii)$ our system is active as opposed to passive.

Active systems are composed of computational units that can control their actions to solve a specific task.
Examples include \emph{swarm robotics}, various other models of modular robotics, and the \emph{amoebot model}, which is our computational framework (detailed in Section~\ref{subsec:model}).

Swarm robotic systems usually involve collections of autonomous robots moving freely in space with limited sensing and communication ranges.
These systems can perform a variety of tasks including gathering~\cite{Cieliebak2012}, shape formation~\cite{Flocchini2008,Rubenstein2014}, and imitating the collective behavior of natural systems~\cite{Chazelle2009}; however, the individual robots typically have more powerful communication and processing capabilities than those we consider.
\emph{Modular self-reconfigurable robotic systems} focus on the motion planning and control of kinematic robots to achieve dynamic morphology~\cite{Yim2007}, and \emph{metamorphic robots} form a subclass of self-reconfiguring robots~\cite{Chirikjian1994} that share some characteristics with our geometric amoebot model.
Walter et al.~have conducted some algorithmic research on these systems (e.g.,~\cite{Walter2004-chains,Walter2004-envelop}), but focus on problems disjoint from those we consider.

In the context of molecular programming, our model most closely relates to the {\it nubot} model by Woods et al.~\cite{Woods2013,Chen2015}, which seeks to provide a framework for rigorous algorithmic research on self-assembly systems composed of active molecular components, emphasizing the interactions between molecular structure and active dynamics.
This model shares many characteristics with our amoebot model (e.g., space is modeled as the triangular lattice, nubot monomers have limited computational abilities, and there is no global orientation) but differs in that nubot monomers can replicate or die and can perform coordinated rigid body movements.
These additional capabilities prohibit direct translation of results under the nubot model to our amoebot model.

\subsection{The Amoebot Model} \label{subsec:model}

Our computational framework is the \emph{amoebot model}~\cite{sops-amoebot}, originally proposed in~\cite{Derakhshandeh2014} as an abstract model for programmable matter designed to enable rigorous algorithmic research on nano-scale systems.
We envision programmable matter as a collection of individual, homogeneous computational elements called \emph{particles}.
The structure of a particle system is represented as a connected subgraph of the infinite, undirected graph $G = (V,E)$, where $V$ is the set of all locations a particle can occupy relative to its structure and $E$ is the set of all atomic movements between locations in $V$.
Each location in $V$ can be occupied by at most one particle at a time.
For shortcut bridging (and many other problems), we further assume the {\it geometric} amoebot model, in which $G = \Gtri$, the {\it triangular lattice}\footnote{Our past works refer to $\Gtri$ as the \emph{equilateral triangular grid graph} $\Geqt$ and the triangular lattice $\Gamma$.} (Figure~\ref{fig:modelgrid}).

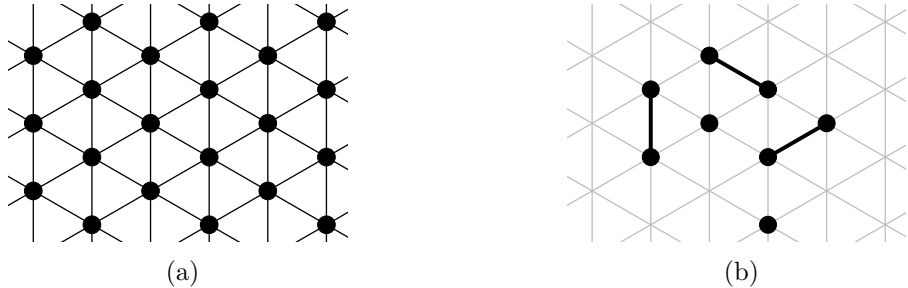
\begin{figure}[t]
\centering
\begin{subfigure}{.45\textwidth}
	\centering
	\iftikz\begin{tikzpicture}[scale=0.9]
    \clip (0.5,-0.25) rectangle (5.5,3.25);
    \foreach \i in {0,...,10} \draw[black,line width=.5pt] (\i*\sthree / 2,-5)--(\i*\sthree / 2,5);
    \foreach \i in {-10,...,10}
    {
    \draw[black,line width=.5pt] (0,\i)--(5*\sthree,\i + 5);
    \draw[black,line width=.5pt] (0,\i)--(5*\sthree,\i - 5);
    }
    \foreach \i in {0,2,...,10}
    \foreach \j in {-5,...,5}
    \draw[fill] (\i*\sthree / 2,\j) circle (0.13);
    \foreach \i in {1,3,...,10}
    \foreach \j in {-4.5,...,4.5}
    \draw[fill] (\i*\sthree / 2,\j) circle (0.13);
\end{tikzpicture}\fi
	\caption{}
	\label{fig:modelgrid}
\end{subfigure}%
\begin{subfigure}{.45\textwidth}
	\centering
	\iftikz\begin{tikzpicture}[scale=0.9]
    \clip (0.5,-0.25) rectangle (5.5,3.25);
    \foreach \i in {0,...,10}
    \draw[lightgray,line width=.5pt] (\i*\sthree / 2,-5)--(\i*\sthree / 2,5);
    \foreach \i in {-10,...,10}
    {
    \draw[lightgray,line width=.5pt] (0,\i)--(5*\sthree,\i + 5);
    \draw[lightgray,line width=.5pt] (0,\i)--(5*\sthree,\i - 5);
    }
    \draw[fill] (1*\sthree,1) circle (0.125);
    \draw[black,line width=1.5pt](1*\sthree,1)--(1*\sthree,2);
    \draw[fill] (1*\sthree,2) circle (0.125);
    \draw[fill] (2*\sthree,1) circle (0.125);
    \draw[black,line width=1.5pt](2*\sthree,1)--(2.5*\sthree,1.5);
    \draw[fill] (2.5*\sthree,1.5) circle (0.125);
    \draw[fill] (1.5*\sthree,2.5) circle (0.125);
    \draw[black,line width=1.5pt](1.5*\sthree,2.5)--(2*\sthree,2);
    \draw[fill] (2*\sthree,2) circle (0.125);
    \draw[fill] (1.5*\sthree,1.5) circle (0.125);
    \draw[fill] (2*\sthree,0) circle (0.125);
\end{tikzpicture}\fi
	\caption{}
	\label{fig:modelparticles}
\end{subfigure}
\caption{(a) A section of the triangular lattice $\Gtri$; (b) expanded and contracted particles.}
\label{fig:model}
\end{figure}

Each particle is either \emph{contracted}, occupying a single location, or \emph{expanded}, occupying a pair of adjacent locations in $\Gtri$ (Figure~\ref{fig:modelparticles}).
Particles move via a series of {\it expansions} and {\it contractions}: a contracted particle may expand into an adjacent unoccupied location, and completes its movement by contracting to once again occupy a single location.
An expanded particle's \emph{head} is the location it last expanded into and the other location it occupies is its \emph{tail}; a contracted particle's head and tail are the same location.

Two particles occupying adjacent locations in $\Gtri$ are said to be {\it neighbors}.
Each particle is {\it anonymous}, lacking a unique identifier, but can locally identify each of its neighboring locations and can determine which of those locations are occupied by particles.
Each particle has a constant-size, local memory that its neighbors can directly read from for communication.
A particle's memory stores whether it is contracted or expanded and identifies if neighboring locations are incident to its head or tail.
Particles do not have access to any global information such as a global compass or an estimate of the size of the system.

We assume the standard asynchronous model from distributed computing (see, e.g.,~\cite{Lynch1996}), where a system progresses through {\it atomic actions}.
A classical result under this model states that for any concurrent asynchronous execution of atomic actions, there is a sequential ordering of actions producing the same end result, provided conflicts that arise in the concurrent execution are resolved.
In our setting, an atomic action is an activation of a single particle.
Once activated, a particle can perform an arbitrary, bounded amount of computation involving its local memory and the memories of its neighbors, and can perform at most one contraction or expansion.
We assume conflicts arising from simultaneous particle expansions into the same unoccupied location are resolved arbitrarily such that at most one particle is expanding into a given location at a time.
Thus, while in reality many particles may be active concurrently, it suffices when analyzing our algorithm to consider a sequence of activations where only one particle is active at a time.

\paragraph{Terminology for Particle Systems}

In addition to the formal model, we introduce some terminology for our application of shortcut bridging.
Just as the uneven surfaces of the forest floor affect the foraging behavior of army ants, the collective behavior of particle systems should change when $\Gtri$ is non-uniform.
Here, we focus on system behaviors when the locations of $\Gtri$ are either \emph{gap} (unsupported) or \emph{land} (supported).
A particle can tell whether its location is a gap location or a land location.
An \emph{object} is a static particle that does not perform computation; these are used to keep the particle system connected to certain fixed sites.

A particle system \emph{configuration} is the finite set of occupied locations of $\Gtri$.
An \emph{edge} of a configuration is an edge of $\Gtri$ where both endpoints are occupied by particle tails\footnote{Lattice edges incident to a node occupied by an expanded particle's head are not counted as configuration edges, since these are exploratory and temporary. This is explained further in Section~\ref{subsec:localalg}.}.
When referring to a \emph{path}, we mean a path of such edges.
Two particles are \emph{connected} if there exists a path between them, and a configuration is \emph{connected} if all pairs of particles are.
A \emph{hole} in a configuration is a maximal finite component of adjacent unoccupied locations.
We specifically consider connected configurations with no holes, and our algorithm --- if starting at such a configuration --- will maintain these properties, a fact we will prove in Section~\ref{subsec:mcprops}.

Let $\sigma$ be a connected configuration with no holes.
The (single, external) \emph{boundary} of $\sigma$ is the walk composed of all edges in $\sigma$ between particles that are not surrounded (i.e., those with less than $6$ neighbors)\footnote{Note that an edge may appear twice in the boundary if it is a cut-edge (e.g., the bottom-left most edges in Figure~\ref{fig:simvshape}).}.
In order to analyze the strength of the solutions our algorithm produces, we define the \emph{weighted perimeter} $\pweight{c}$ to be the summed weight of the edges on the boundary of $\sigma$, where edges between land locations have weight $1$, edges between gap locations have weight $c > 1$, and edges with one endpoint on land and one endpoint in the gap have weight $(1+c)/2$.

\subsection{Problem Description} \label{subsec:problem}

In the \emph{shortcut bridging problem}, we consider an instance $(L,O,\sigma_0,c,\alpha),$ where $L \subseteq V$ is the set of land locations, $O$ is the set of (two) objects to bridge between, $\sigma_0$ is the initial configuration of the particle system, $c > 1$ is a fixed weight for edges between gap locations, and $\alpha > 1$ is a parameter capturing our error tolerance.
An instance is \emph{valid} if $(i)$ the objects of $O$ and particles of $\sigma_0$ all occupy locations in~$L$, $(ii)$ $\sigma_0$ connects the objects, and $(iii)$ $\sigma_0$ is connected.
A (distributed) algorithm \emph{solves} a valid instance $(L, O, \sigma_0, c, \alpha)$ if, beginning from $\sigma_0$, it reaches and remains in a set of configurations $\Sigma^*$ such that any $\sigma \in \Sigma^*$ has weighted perimeter $\pweight{c}$ within an $\alpha$-factor of its minimum possible value, with high probability\footnote{An event occurs \emph{with high probability (w.h.p.)} if the probability of success is at least $1 - 1 / \text{poly}(n)$; here, $n$ is the number of particles.}.

In analogy to the apparatus used in~\cite{Reid2015} (Figure~\ref{fig:apparatus}), we are particularly interested in instances where $L$ forms a V-shape,~$O$ has two objects positioned at either base of $L$, and~$\sigma_0$ lines the interior sides of $L$, as in Figure~\ref{fig:initconfig_V}.
However, our algorithm is not limited to this setting; for example, we show simulation results for an N-shaped land mass (Figure~\ref{fig:initconfig_N}) in Section~\ref{sec:simulations}.

\begin{figure}[t]
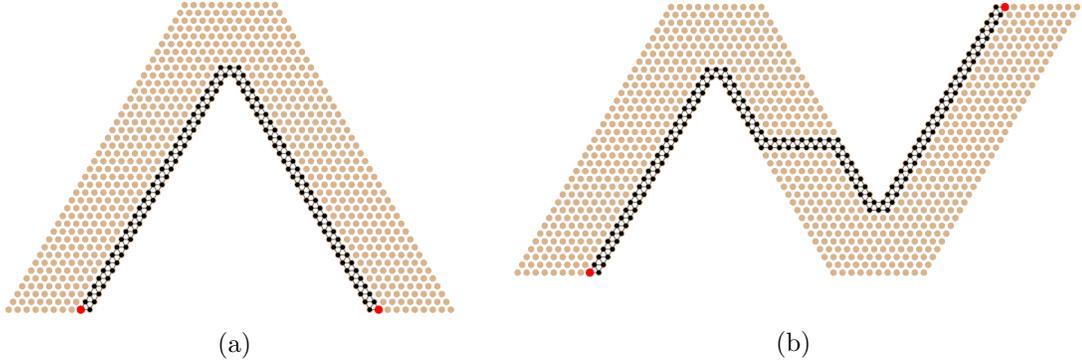

\centering
\begin{subfigure}{.45\textwidth}
	\centering
	\iftikz\input{fig_angle60_lambda4_gamma2_0.tex}\fi
	\caption{}
	\label{fig:initconfig_V}
\end{subfigure}%
\begin{subfigure}{.45\textwidth}
	\centering
	\iftikz\input{fig_N_initial.tex}\fi
	\caption{}
	\label{fig:initconfig_N}
\end{subfigure}
\caption{Example initial configurations $\sigma_0$ of particles (black) connecting two objects $O$ (large, red) on land masses $L$ (brown and black) for two instances of the shortcut bridging problem for which we present simulation results (Section~\ref{sec:simulations}).}
\label{fig:initconfigs}
\end{figure}

The weighted perimeter balances the trade-off observed in~\cite{Reid2015} between the competing objectives of establishing a short path between the fixed endpoints while not having too many particles in the gap.
Although both metrics are amenable to our analysis, we focus on weighted perimeter instead of the number of particles in the gap for two reasons.
First, the structure and thickness of bridges produced using weighted perimeter more closely resemble those of ant bridges, while using particles in the gap results in consistently thin, jagged structures (see Figure~\ref{fig:simvshape} vs.~\ref{fig:simnumingap}).
Second, only particles on the perimeter can move, and thus recognize the potential risk of being in the gap.

\begin{figure}
\centering
\begin{subfigure}{.31\textwidth}
	\centering
	\includegraphics[scale = 0.2, trim = 50 0 100 0, clip]{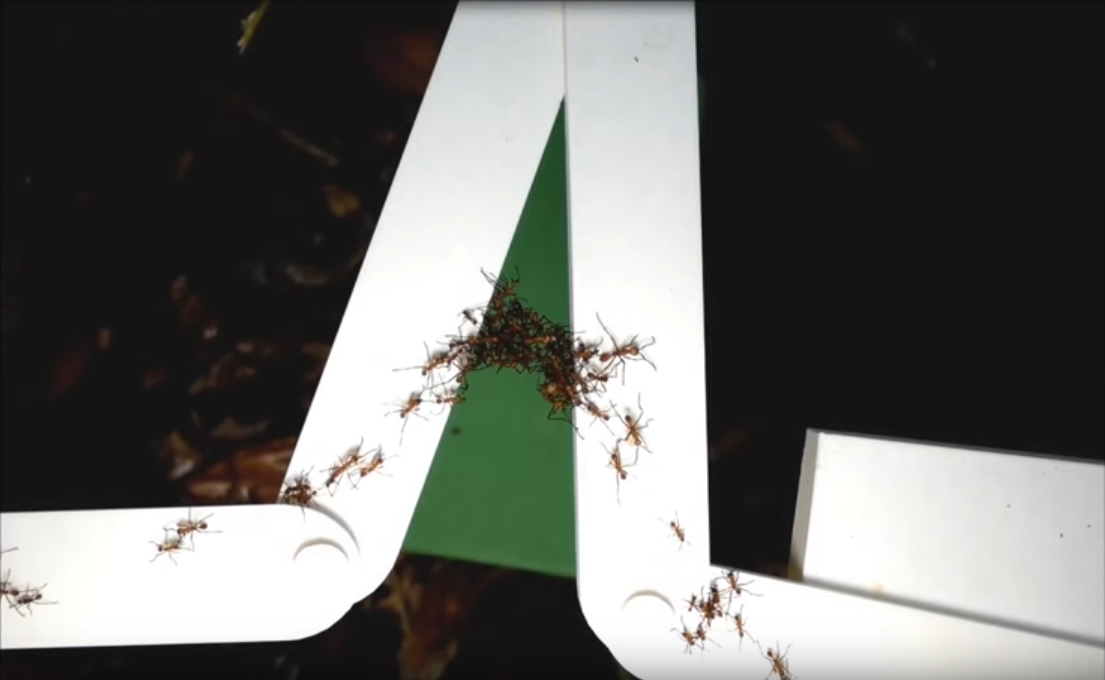}
	\caption{}
	\label{fig:apparatus}
\end{subfigure}%
\begin{subfigure}{.34\textwidth}
	\centering
	\iftikz\input{fig_60_20000000.tex}\fi
	\caption{}
	\label{fig:simvshape}
\end{subfigure}%
\begin{subfigure}{.34\textwidth}
    \centering
    \iftikz\input{fig_numParticlesInGap_10000000.tex}\fi
    \caption{}
    \label{fig:simnumingap}
\end{subfigure}
\caption{(a) In this image from~\cite{Reid2015}, army ants of the genus \emph{Eciton} build a dynamic bridge which balances the benefit of a shortcut path with the cost of committing ants to the structure. (b) Our shortcut bridging algorithm also balances competing objectives and converges to similar configurations. (c) Minimizing the number of particles in the gap instead of the weighted perimeter results in thin bridges with large clusters of particles on land that do not resemble the ant bridges as closely.}
\label{fig:antcomparison}
\end{figure}

\section{Approach, Techniques, and Results} \label{sec:background}

In~\cite{Cannon2016}, we introduced a stochastic, distributed algorithm for compression in the amoebot model; here we extend that work to show our stochastic approach is more widely applicable.

\subsection{The Stochastic Approach to Particle Systems} \label{subsec:approach}

In the stochastic approach to self-organizing particle systems, we use concepts from statistical physics to design our algorithms, a process we outline here.
At a high level, we define an energy function that captures our objectives for the particle system and then design a Markov chain that, in the long run, favors configurations with desirable energy values.
Care is taken to ensure this Markov chain can be executed in a distributed, asynchronous manner by each particle individually.
While understanding our approach and motivation is not necessary for understanding our results, it provides further insights into our methodologies.

In statistical physics, ensembles of particles similar to those we consider represent physical systems and demonstrate that local micro-behavior can induce global macro-scale changes to the system~\cite{Baxter1980,Blanca2018,Restrepo2013}. 
Like a spring relaxing, physical systems favor configurations that minimize energy.
Each configuration $\sigma$ has energy determined by a \emph{Hamiltonian} $H(\sigma)$, and we then assign each a weight $w(\sigma) = e^{-B \cdot H(\sigma)}$, where $B = 1/T$ is inverse temperature. 
Markov chains have been well-studied as a tool for sampling configurations of these systems with probability proportional to $w(\sigma)$, that is, with probability $w(\sigma)/Z$, where $Z = \sum_{\tau} e^{-B \cdot H(\tau)}$ is the normalizing constant known as the \emph{partition function}.
The configurations with the lowest values of $H(\sigma)$ -- those with the least energy -- are most likely to be sampled.

For shortcut bridging, we introduce a Hamiltonian over particle system configurations that assigns the lowest energy values to configurations with desirable bridge structures; we then design our algorithm to favor these configurations with small Hamiltonians.
We assign each configuration~$\sigma$ a Hamiltonian $H(\sigma) = \pweight{c}$, its weighted perimeter.
Setting $\lambda = e^{B}$, we get $w(\sigma,c) = \lambda^{-\pweight{c}}$, where $w(\sigma, c)$ is the likelihood with which we want our algorithm to yield $\sigma$.
As~$\lambda$ gets larger (by increasing $B$, effectively lowering temperature), these weights increasingly favor configurations where $H(\sigma) = \pweight{c}$ is small and the desired bridging behavior is exhibited.
Using a Markov chain, we will ensure that the eventual probability with which we are at state~$\sigma$ is $w(\sigma, c)/Z$, where $Z = \sum_\tau w(\tau,c)$ in the necessary normalizing factor.


\subsection{Markov Chains} \label{subsec:markovchains}

We briefly review relevant terminology on Markov chains.
A {\it Markov chain} $\M$ is a memoryless stochastic process defined on a state space $\Omega$.
We only consider $\Omega$ which are finite and discrete; in particular, the states of $\Omega$ will be connected, hole-free configurations with a common land mass~$L$, objects $O$, and number of particles $n$.
The transition matrix $Q: \Omega \times \Omega \to [0,1]$  of a Markov chain $\M$ is defined so that $Q(\sigma,\tau)$
is the probability of moving from state $\sigma$ to state $\tau$ in one step, for any pair of states $\sigma,\tau \in \Omega$.
For our Markov chain, transitions will correspond to one particle moving one unit in one direction, and the probabilities of these transitions will be chosen carefully.
The $t$-step transition probability $Q^t(\sigma,\tau)$ is the probability of moving from~$\sigma$ to $\tau$ in exactly $t$ steps.

A Markov chain is \emph{irreducible}, or its state space is \emph{connected}, if there is a sequence of valid transitions from any state to any other state, i.e., for all $\sigma,\tau \in \Omega$, there is a $t$ such that $Q^t(\sigma,\tau) > 0$.
A Markov chain is \emph{aperiodic} if for all $\sigma,\tau \in \Omega$, $\gcd\{t : Q^t(\sigma,\tau) > 0\} = 1$.
A Markov chain is \emph{ergodic} if it is both irreducible and aperiodic.
Any finite, ergodic Markov chain converges to a unique {\it stationary distribution} $\pi$ given by, for all $\sigma, \tau \in \Omega$, $\lim_{t \to \infty} Q^t(\sigma,\tau) = \pi(\tau)$.
Any distribution $\pi'$ satisfying $\pi'(\sigma)Q(\sigma,\tau) = \pi'(\tau)Q(\tau,\sigma)$ for all $\sigma,\tau \in \Omega$ (the {\it detailed balance condition}) must be this unique stationary distribution (see, e.g.,~\cite{Feller1968}).

Given a state space $\Omega$, a set of allowable transitions between states, and a desired stationary distribution $\pi$ on $\Omega$ (e.g., $\pi(\sigma) \sim w(\sigma, c)$), the celebrated Metropolis-Hastings algorithm~\cite{Hastings1970} gives a Markov chain on $\Omega$ that uses only allowable transitions and has stationary distribution $\pi$. This is accomplished by carefully setting the probabilities of the state transitions as follows.
Starting at $\sigma \in \Omega$, pick a neighbor $\tau \in \Omega$ (i.e., a state $\tau$ to which $\sigma$ has an allowable transition) uniformly with probability $1/(2\Delta)$, where $\Delta$ is the maximum number of neighbors of any state, and move to~$\tau$ with probability $\min\{1, \pi(\tau)/\pi(\sigma)\}$; with the remaining probability stay at $\sigma$ and repeat.
If the allowable transitions suffice to go between any two states of $\Omega$, then $\pi$ must be the stationary distribution by detailed balance.
While calculating $\pi(\tau)/\pi(\sigma)$ seems to require global knowledge, this ratio can often be calculated using only local information when many terms cancel out.
For shortcut bridging, because our desired stationary distribution will be $\pi(\sigma) = w(\sigma, c)/Z = \lambda^{-\overline{p}(\sigma,{c})}/Z$ where $Z = \sum_{\tau} w(\tau, c)$, the Metropolis-Hastings probabilities can be written as $\min\{1, \lambda^{\overline{p}(\sigma,c) - \overline{p}(\tau,c)}\}$.
Using this probability calculation to decide whether or not to make a transition is a \emph{Metropolis filter}.
Importantly, if $\sigma$ and $\tau$ only differ by one particle $P$, as is the case with all moves of our algorithm, then $\overline{p}(\sigma,c) - \overline{p}(\tau,c)$, the difference in weighted perimeter due to particle $P$'s move, can be calculated using only local information from the neighborhood of $P$ (Lemma~\ref{lem:local}). 

\subsection{Results} \label{subsec:results}

We present a Markov chain $\M$ for \emph{shortcut bridging} in the geometric amoebot model that translates directly to a fully distributed, local, asynchronous algorithm $\A$.
This Markov chain $\M$ uses only local moves and, using a Metropolis filter, eventually reaches a distribution that favors configurations proportional to their weight $w(\sigma,c)$.
Thus, configurations with smaller weighted perimeter $\pweight{c}$ are more likely, as desired.
Rather than terminating $\M$ at some point and using the resulting configuration as a random sample (as is often done with Markov chains) we instead run $\M$ indefinitely, moving among different configurations but remaining at the stationary distribution $\pi$, which we prove meets our desired objectives with high probability.

We prove that $\M$ (and by extension, $\A$) solves the shortcut bridging problem: for any constant $\alpha > 1$, for appropriately chosen values of parameters, the long run probability that $\M$ is in a configuration $\sigma$ with $\pweight{c}$ larger than~$\alpha$ times its minimum possible value is exponentially small.
The key tool used to establish this is a careful \emph{Peierls argument}, used in statistical physics to study non-uniqueness of limiting Gibbs measures and in computer science to establish slow mixing of Markov chains (see, e.g.,~\cite{Levin2009}, Chapter 15).
We then specifically consider V-shaped land masses with an object on each branch of the V, and prove that the resulting bridge structures vary with the interior angle of the V-shaped gap being shortcut --- a phenomenon also observed by Reid et al.~\cite{Reid2015} in the army ant bridges --- and show in simulation that they are qualitatively similar to those of the ants (e.g., Figure~\ref{fig:antcomparison}).


Our shortcut bridging algorithm and others developed with the stochastic approach (e.g.,~\cite{Cannon2016}) have several advantages over other algorithms for programmable matter and self-organizing particle systems.
They are nearly oblivious, only needing to store at most one bit of information between activations.
They are also more robust to failures; even if particles crash and stop moving, our algorithm will converge to the best bridge possible with respect to the crashed particles' fixed locations.
On the other hand, other algorithms for particle systems (e.g.~\cite{Daymude2017,Derakhshandeh2017}) would fail even with a single particle crash.
Finally, our algorithm requires little to no communication between particles.
Since these algorithms are derived from stochastic processes, powerful tools developed to analyze Markov chains can be employed to rigorously understand their behavior.



\section{A Stochastic Algorithm for Shortcut Bridging} \label{sec:algorithm}

Recall that for the shortcut bridging problem, we desire for our algorithm to achieve small weighted perimeter, where boundary edges in the gap cost a factor of $c > 1$ more than those on land.
The algorithm must balance the competing objectives of having a short path between the two objects while not forming too large of a bridge.
We capture these factors by preferring configurations $\sigma$ that have both small \emph{perimeter} $p(\sigma)$, the length of the walk around the boundary of the particle system, and small \emph{gap perimeter} $g(\sigma)$, the number of perimeter edges that are in the gap, where edges with one endpoint in the gap and one endpoint on land count as half an edge in the gap.
While these objectives may appear to be aligned rather than competing, decreasing the length of the overall perimeter increases the gap perimeter and vice versa in the problem instances we consider (e.g., Figure~\ref{fig:initconfigs}).
We note that $\pweight{c} = p(\sigma) + (c-1) g(\sigma)$, and thus minimizing weighted perimeter is equivalent to simultaneously minimizing both perimeter and gap perimeter.

Our Markov chain algorithm incorporates two bias parameters: $\lambda$ and $\gamma$.
The value of $\lambda$ controls the preference for having small perimeter, while $\gamma$ controls the preference for having small gap perimeter.
In this paper, we only consider $\lambda > 1$ and $\gamma > 1$, which correspond to favoring small perimeter and small gap perimeter, respectively.
Using a Metropolis filter, we ensure our algorithm converges to stationary distribution $\pi$ given by $\pi(\sigma) = \lambda^{-p(\sigma)}\gamma^{-g(\sigma)} / Z$ where $Z = \sum_\tau\lambda^{-p(\tau)}\gamma^{-g(\tau)}$ is the normalizing factor necessary to make $\pi$ a probability distribution.
Arithmetic shows:
\[\lambda^{-\pweight{c}} = \lambda^{-p(\sigma) - (c-1) g(\sigma)} = \lambda^{-p(\sigma)} (\lambda^{c-1})^{-g(\sigma)},\]
so setting $\gamma = \lambda^{c-1}$ yields our desired stationary distribution.



We note $\lambda$ is the same parameter that controlled compression in~\cite{Cannon2016}, where particle configurations converged to a distribution proportional to $\lambda^{-p(\sigma)}$.
That work showed that $\lambda > 1$ is not sufficient to ensure compression, so we restrict our attention to $\lambda > \cbound$, the regime where compression provably occurs.

To ensure our algorithm maintains some desired invariants throughout its execution, we introduce two properties every movement must satisfy.
Specifically, these properties maintain system connectivity\footnote{Since particles treat objects as static particles, the particle system may actually disconnect into several components which remain connected through objects.}, prevent holes from forming, and ensure it is possible for our Markov chain to be reversible; more details can be found in~\cite{Cannon2016}.
These last two conditions are necessary for applying established tools from Markov chain analysis.

We use the following notation.
For a location $\ell$, let $N(\ell)$ denote the set of particles and objects\footnote{The notion of location neighborhoods has been extended from~\cite{Cannon2016} to include objects.} adjacent to $\ell$.
For adjacent locations $\ell$ and $\ell'$, we use $N(\ell \cup \ell')$ to denote the set $N(\ell) \cup N(\ell')$, excluding particles or objects occupying $\ell$ or~$\ell'$.
Let $\mathbb{S} = N(\ell) \cap N(\ell')$ be the particles and objects adjacent to both locations; we note $|\mathbb{S}| \in \{0,1,2\}$.
The following properties can be locally checked by an expanded particle occupying $\ell$ and $\ell'$ (e.g., as in Step~\ref{alg:M:conds} of $\M$, Algorithm~\ref{alg:M}), and are symmetric with respect to these locations.

\begin{property} \label{prop:1}
$|\mathbb{S}| \in \{1,2\}$ and every particle or object in ${N(\ell \cup \ell')}$ is connected to a particle or object in $\mathbb{S}$ by a path through $N(\ell \cup \ell')$.
\end{property}
\begin{property} \label{prop:2}
$|\mathbb{S}| = 0$, $\ell$ and $\ell'$ each have at least one neighbor, all particles and objects in $N(\ell) \setminus \{\ell'\}$ are connected by paths within this set, and all particles and objects in $N(\ell') \setminus \{\ell\}$ are connected by paths within this set.
\end{property}

We can now present our Markov chain $\M$ for an instance $(L,O,\sigma_0,c,\alpha)$ of shortcut bridging.
For input parameter $\lambda > \cbound$, set $\gamma = \lambda^{c-1}$.
Beginning at initial configuration $\sigma_0$, which we assume is connected and hole-free\footnote{If $\sigma_0$ has holes, our algorithm will eliminate them and they will not reform~\cite{Cannon2016}; for simplicity, we focus only on the behavior of the system after this occurs.}, repeat the steps of Algorithm~\ref{alg:M}.

\begin{algorithm}
\caption{Markov Chain $\M$ for Shortcut Bridging} \label{alg:M}
\begin{algorithmic}[1]
	\State Choose a particle $P$ uniformly at random (u.a.r.) from all $n$ particles; let $\ell$ be its location. \label{alg:M:select1}
	\State Choose a neighboring location $\ell'$ and $q \in (0,1)$ u.a.r. \label{alg:M:select2}
	\If {$\ell'$ is unoccupied}
	\State $P$ expands to occupy both $\ell$ and $\ell'$. \label{alg:M:expand}
	\State Let $\sigma$ (resp., $\sigma'$) be the configuration with $P$ at $\ell$ (resp., at $\ell'$). \label{alg:M:sigmas}
	    \If {$(i)$ $\ell$ and $\ell'$ satisfy Property~\ref{prop:1} or Property~\ref{prop:2}, $(ii)$ $|N(\ell)| < 5$, and $(iii)$ $q < \lambda^{p(\sigma)-p(\sigma')}\gamma^{g(\sigma)-g(\sigma')}$} $P$ contracts to $\ell'$. \label{alg:M:conds}
	    \Else {} $P$ contracts back to $\ell$. \label{alg:M:back}
	    \EndIf
	\EndIf
\end{algorithmic}
\end{algorithm}

Conditions $(i)$ and $(ii)$ of Step~\ref{alg:M:conds} ensure that the particle system remains connected and no new holes are formed during the execution of $\M$.
In particular, condition $(ii)$ explicitly disallows a particle with five neighbors from moving into the only unoccupied location in its neighborhood, as doing so would create a hole.
Condition $(iii)$ is the Metropolis filter discussed above; the proposed particle move, once confirmed to be valid, only occurs with probability
\[\min\{1, \lambda^{p(\sigma)-p(\sigma')}\gamma^{g(\sigma)-g(\sigma')}\} = \min\{1, \lambda^{\pweight{c} - \overline{p}(\sigma', c)}\},\]
where $\sigma$ is the configuration with $P$ at location $\ell$ and $\sigma'$ is the configuration with $P$ at location $\ell'$.
Although $p(\sigma) - p(\sigma')$ and $g(\sigma) - g(\sigma')$ are values defined at system-level scale, we show these differences can be calculated locally.

\begin{lemma} \label{lem:local}
An expanded particle $P$ occupying adjacent locations $\ell$ and $\ell'$ in $\Gtri$ can calculate the values of $p(\sigma) - p(\sigma')$ and $g(\sigma) - g(\sigma')$ in Step~\ref{alg:M:conds}$(iii)$ of $\M$ using only local information involving $\ell$, $\ell'$, and $N(\ell \cup \ell')$.
\end{lemma}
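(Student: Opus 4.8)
The plan is to reduce both global quantities $p(\sigma)-p(\sigma')$ and $g(\sigma)-g(\sigma')$ to sums of purely local contributions, exploiting the fact that $\sigma$ and $\sigma'$ differ only in the location of the single particle $P$. Write $S$ for the set of all particles and objects other than $P$; crucially, $S$ occupies exactly the same locations in $\sigma$ (where $P$ sits at $\ell$) and in $\sigma'$ (where $P$ sits at $\ell'$). The first step is to record a combinatorial formula for the perimeter of a connected, hole-free configuration. Counting each traversal of the boundary walk as an incidence between a configuration edge and an adjacent triangular face of $\Gtri$ whose apex is unoccupied, one obtains
$$p(\sigma) = 2E(\sigma) - 3T(\sigma),$$
where $E(\sigma)$ is the number of configuration edges and $T(\sigma)$ is the number of fully occupied unit triangles; here a cut edge (whose two apexes are both empty) is correctly counted twice and an interior edge zero times. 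The same incidence count, with each edge weighted according to whether it lies between two gap locations, between two land locations, or straddles the two, yields the analogous expression for $g(\sigma)$.

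The second step localizes the difference. Because $S$ is identical in both configurations, every configuration edge not incident to $\ell$ or $\ell'$ appears in both, and every fully occupied triangle not having $\ell$ or $\ell'$ as a vertex appears in both, so all such terms cancel. What remains is
$$E(\sigma)-E(\sigma') = d(\ell) - d(\ell'), \qquad T(\sigma) - T(\sigma') = \tau(\ell) - \tau(\ell'),$$
where $d(\ell)$ is the number of occupied neighbors of $\ell$ in $\sigma$ and $\tau(\ell)$ is the number of adjacent pairs of occupied neighbors of $\ell$ in $\sigma$ (and likewise for $\ell'$ in $\sigma'$). Both $d$ and $\tau$ depend only on the occupancy of the neighboring locations of $\ell$ and of $\ell'$, all of which lie in $N(\ell\cup\ell')$ together with the known cells $\ell,\ell'$ themselves --- exactly the information the expanded particle $P$ can read. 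Substituting gives the local formula
$$p(\sigma)-p(\sigma') = 2\bigl(d(\ell)-d(\ell')\bigr) - 3\bigl(\tau(\ell)-\tau(\ell')\bigr),$$
and the gap-weighted version produces $g(\sigma)-g(\sigma')$ as an analogous local sum in which each contributing edge and triangle is weighted by the fixed, locally known gap/land labels of its endpoints.

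The main obstacle is justifying the perimeter formula against the boundary definition given in Section~\ref{subsec:model}: one must verify that the edge--face incidence count equals the length of the \emph{single} external boundary walk, with cut edges counted with multiplicity two, and this is precisely where the hole-free and connectivity hypotheses enter --- these are guaranteed at execution time by Properties~\ref{prop:1} and~\ref{prop:2}. A second point requiring care is the bookkeeping of the expanded particle: while $P$ occupies both $\ell$ and $\ell'$, edges incident to its head are excluded from the configuration, so I interpret $\sigma$ and $\sigma'$ as the two \emph{contracted} configurations obtained by contracting $P$ to $\ell$ or to $\ell'$ (as in Step~\ref{alg:M:sigmas} of $\M$), and I confirm that objects are treated as occupied throughout. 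Once these conventions are fixed, the cancellation of every non-local term is immediate and the lemma follows.
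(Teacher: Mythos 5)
Your proof is correct, but it takes a genuinely different route from the paper's. The paper splits the two quantities: for the perimeter it simply quotes the identity $p(\sigma)-p(\sigma') = |N(\ell')|-|N(\ell)|$ from~\cite{Cannon2016}, and for the gap perimeter it re-expresses $g$ as a \emph{vertex} count --- the number of gap particles on the perimeter, counted with multiplicity $\delta(R,\cdot)$ --- and then computes $\Delta(R)=\delta(R,\sigma)-\delta(R,\sigma')$ by an explicit case analysis over $R \in N(\ell\cup\ell')$. You instead work entirely with edge--face incidences: $p(\sigma)=2E(\sigma)-3T(\sigma)$ for connected, hole-free configurations, with a weighted analogue for $g$, and you localize the difference by cancelling every term not incident to $\ell$ or $\ell'$. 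Both arguments are valid; yours has the advantage of being self-contained (it re-derives rather than imports the perimeter-difference formula --- indeed, combining your identity with Euler's formula $T = E - n + 3 - 2$ gives $p = 3n - E - 3$ and hence recovers exactly the formula cited from~\cite{Cannon2016}) and of treating $p$ and $g$ by one uniform mechanism, while the paper's vertex-based case table for $\Delta(R)\in\{-1,0,1\}$ is closer to a directly implementable particle rule and leans on already-established results.

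One point needs a correction. You attribute the hole-freeness of $\sigma'$ to Properties~\ref{prop:1} and~\ref{prop:2}, but those properties alone do not prevent the hole created when a particle whose location $\ell$ has five occupied neighbors vacates $\ell$; that is exactly what condition $(ii)$ of Step~\ref{alg:M:conds} ($|N(\ell)|<5$) rules out. This matters for your argument specifically: if $|N(\ell)|=5$ and $P$ moves to $\ell'$, then $\ell$ becomes a hole in $\sigma'$, and $2E(\sigma')-3T(\sigma')$ counts that hole's boundary as well, overshooting the external perimeter by $6$, so your formula would return a wrong value precisely in that case. The fix is the observation the paper makes in the first line of its proof: the quantities in Step~\ref{alg:M:conds}$(iii)$ only need to be computed when conditions $(i)$ \emph{and} $(ii)$ hold (otherwise the move is rejected regardless), and under $(i)$ and $(ii)$ both $\sigma$ and $\sigma'$ are connected and hole-free, so your incidence formula legitimately applies to both configurations.
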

\begin{proof}
Observe that these values need only be calculated if conditions $(i)$ and $(ii)$ of Step~\ref{alg:M:conds} holds.
By a result of~\cite{Cannon2016},
\[p(\sigma) - p(\sigma') = |N(\ell')| - |N(\ell)|,\]
which can be calculated using only local information.

Recall that gap perimeter is defined as the number of boundary edges in the gap, counting edges between gap and land as half an edge; this is equal to the number of particles that are on the perimeter and in the gap, counted with appropriate multiplicity if a particle appears on the perimeter more than once.
Given a particle $R$ and a configuration $\tau$, let $G(R,\tau)$ be equal to $1$ if $R$ occupies a gap location in $\tau$ and $0$ otherwise.
Let $\delta(R,\tau)$ be the number of times $R$ appears on the perimeter of $\tau$.
Then the desired difference is:
\[g(\sigma) - g(\sigma') = \sum_R\left[G(R,\sigma)\delta(R,\sigma) - G(R,\sigma')\delta(R,\sigma')\right].\]

Define $\Delta(R) = \delta(R,\sigma) - \delta(R,\sigma')$.
For particle $P$, since conditions $(i)$ and $(ii)$ of Step~\ref{alg:M:conds} hold, $\Delta(P) = 0$.
For any particle $R \not\in \{P\} \cup N(\ell \cup \ell')$, $\Delta(R) = 0$ since its neighborhood is not affected by the movement of $P$.
Moreover, for any particle $R \neq P$, $G(R,\sigma) = G(R,\sigma')$ since it does not move. So:
\begin{align*}
g(\sigma) - g(\sigma') &= \delta(P,\sigma)\left[G(P,\sigma) - G(P,\sigma')\right] + \sum_{R \in N(\ell \cup \ell')} G(R,\sigma)\Delta(R).
\end{align*}

The first term is easily calculated locally.
For the summation, it remains to show that $P$ can locally calculate $\Delta(R)$ for any $R \in N(\ell \cup \ell')$.
First suppose that $R$ is occupies a location adjacent to $\ell$ but not $\ell'$. Then:
\[\Delta(R) = \left\{ \begin{array}{cl}
-1 & \text{if $R$ has two neighbors in $N(\ell)$,}\\
1 & \text{if $R$ has no neighbors in $N(\ell)$, and}\\
0 & \text{otherwise.}
\end{array} \right.\]
The opposite is true if $R$ occupies a location adjacent to $\ell'$ but not $\ell$.
Lastly, suppose $R$ occupies a location adjacent to both $\ell$ and $\ell'$. Then:
\[\Delta(R) = \left\{ \begin{array}{cl}
0 & \text{if $R$ has zero or two neighbors in $N(\ell \cup \ell')$,}\\
-1 & \text{if $R$ shares a neighbor with $\ell$ but not $\ell'$, and}\\
1 & \text{if $R$ shares a neighbor with $\ell'$ but not $\ell$.}
\end{array} \right.\]
In all cases, $P$ can calculate $\Delta(R)$, and thus also $g(\sigma) - g(\sigma')$, using only local information.
\end{proof}

The state space $\Omega$ of $\M$ is the set of all configurations reachable from $\sigma_0$ via valid transitions of $\M$.
We conjecture that this includes all connected, hole-free configurations of~$n$ particles connected to both objects, but proving all such configurations are reachable from $\sigma_0$ is not necessary for our results.
(The proof of the corresponding result in \cite{Cannon2016} does not generalize due to the presence of static objects).

\subsection{From $\M$ to a Distributed, Local Algorithm $\A$} \label{subsec:localalg}

In order for individual particles to run $\M$, a Markov chain with centralized control, we must translate $\M$ into a distributed, local, asynchronous algorithm $\A$ that fully respects the constraints of the amoebot model (Section~\ref{subsec:model}).
In particular, the uniformly at random particle selection in Step~\ref{alg:M:select1} of $\M$ must be translated to individual, asynchronous particle activations and a particle's combined expansion and contraction in Steps~\ref{alg:M:expand}--\ref{alg:M:back} of $\M$ must be decoupled into two separate activations because a particle can perform at most one movement per activation.
The remainder of $\M$ can be executed directly in $\A$: Properties~\ref{prop:1} and~\ref{prop:2} are locally verifiable as they only involve a particle's immediate neighborhood, and Lemma~\ref{lem:local} showed that the differences $p(\sigma) - p(\sigma')$ and $g(\sigma) - g(\sigma')$ used in Step~\ref{alg:M:conds} of $\M$ can be calculated locally.
Full details of this construction can be found in~\cite{Cannon2016}.

Under the usual assumptions of the asynchronous model from distributed computing, one cannot assume that the next particle to be activated is equally likely to be any particle, as specified in Step~\ref{alg:M:select1} of $\M$.
To mimic this uniformly random activation sequence in a local way, we assume each particle has its own Poisson clock with mean $1$ and activates after a delay $t$ drawn with probability $e^{-t}$.
After completing its activation, a new delay is drawn to its next activation, and so on.
The exponential distribution guarantees that, regardless of which particle has just activated, all particles are equally likely to be the next to activate (see, e.g.,~\cite{Feller1968}).
We could even better approximate asynchronous activation sequences by allowing each particle to have its own constant mean for its Poisson clock, allowing for some particles to activate more often than others in expectation.
In this setting, the probability that a particle $P$ is the next of the $n$ particles to activate is not $1/n$, but rather some probability $a_P$ that depends on all particles' Poisson means\footnote{Probability $a_P$ only plays a role in the analysis of $\A$ and $\M$, not in their execution. Particle $P$ does not need to know or calculate $a_P$.}.
This does not change the stationary distribution of $\M$; Lemma~\ref{lem:stationary1} still holds with a nearly identical proof that replaces $1/n$ with $a_P$, and Lemma~\ref{lem:stationary2} and Theorem~\ref{thm:alphabridge} still follow.
Because the same results hold regardless of the rates of particles' Poisson clocks, we assume clocks with mean $1$ for simplicity.

Unlike in $\M$, the amoebot model assumes a particle~$P$ can perform at most one movement per activation (Section~\ref{subsec:model}), so we must decouple $P$'s movement in one iteration of $\M$, which includes both an expansion and a contraction, into two activations.
However, due to asynchrony, other particles may expand into $P$'s neighborhood after it has expanded but before it contracts.
We utilize flag-locking mechanisms to ensure $P$ retains consistent snapshots of its neighborhood regardless of the movements of other particles between its activations.
When $P$ expands from location $\ell$ to also occupy neighboring location $\ell'$ (Step~\ref{alg:M:expand} of $\M$), it sets a Boolean flag~$f$ to \textsc{True} if it is the only expanded particle in its neighborhood, and to \textsc{False} otherwise.
When $P$ is later activated again, it checks its flag: if $f$ is \textsc{False}, it simply contracts back to its original position $\ell$ since some other particle in its neighborhood activated and expanded earlier.
Otherwise,~$P$ checks the conditions of Step~\ref{alg:M:conds} of $\M$ (ignoring any expanded heads, see the next paragraph) and decides whether to contract to $\ell$ or $\ell'$ accordingly.
Particle $P$ then resets $f$ to \textsc{False} and completes its second activation.
This ensures that at most one particle per neighborhood moves at a time, mimicking the sequential nature of $\M$.

Some explanation is warranted on how particle $P$ identifies expanded heads in its neighborhood and why it ignores them when checking the conditions of Step~\ref{alg:M:conds} of $\M$.
Recall from Section~\ref{subsec:model} that a particle stores whether it is expanded or contracted and which neighboring locations are adjacent to its head in memory.
Particle $P$ can read this information from its neighbors to identify expanded heads in its neighborhood.
Moreover, for particle $P$ to reach Step~\ref{alg:M:conds} of $\M$, its flag $f$ must be set to \textsc{True}.
Any other particle $Q$ that expands into the neighborhood of $P$ must then set its flag to \textsc{False}, since it observes $P$ is already expanded.
Thus, $P$ should ignore the heads of these expanded neighbors, since it is only a matter of time before they are activated again and simply contract their expanded heads.

We have shown our Markov chain $\M$ can be translated into a distributed, local, asynchronous algorithm $\A$, but such an implementation is not always possible in general.
Any Markov chain for particle systems that relies on non-local particle moves or has transition probabilities that rely on non-local information cannot be executed by a local, distributed algorithm.
Moreover, many algorithms under the amoebot model are not stochastic and thus cannot be meaningfully described as Markov chains; see, e.g.~\cite{Daymude2017,Derakhshandeh2017}.

\subsection{Properties of Markov Chain $\M$} \label{subsec:mcprops}

We now show some useful properties of Markov chain~$\M$.
Our first two claims follow from work in~\cite{Cannon2016} and basic properties of Markov chains and our particle systems.

\begin{lemma} \label{lem:connectholes}
If $\sigma_0$ is connected and has no holes, then at every iteration of $\M$, the current configuration is connected and has no holes.
\end{lemma}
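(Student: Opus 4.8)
The plan is to prove Lemma~\ref{lem:connectholes} by induction on the iterations of $\M$, showing that each individual move of the Markov chain preserves both connectivity and hole-freeness. The base case is immediate since $\sigma_0$ is assumed connected and hole-free. For the inductive step, I would assume the current configuration $\sigma$ is connected and hole-free, and show that after one iteration of $\M$ the resulting configuration retains both properties. Since the only configuration-changing transition in $\M$ is Step~\ref{alg:M:conds}, where a particle $P$ occupying $\ell$ and $\ell'$ contracts to $\ell'$ (effectively moving from $\ell$ to $\ell'$), it suffices to analyze the effect of this single move. Crucially, such a move is only executed when conditions $(i)$ and $(ii)$ of Step~\ref{alg:M:conds} hold, so I get to assume that $\ell$ and $\ell'$ satisfy Property~\ref{prop:1} or Property~\ref{prop:2}, and that $|N(\ell)| < 5$.

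The key step is to argue that Properties~\ref{prop:1} and~\ref{prop:2} are exactly the local conditions guaranteeing that connectivity is maintained. The intuition is that moving $P$ from $\ell$ to $\ell'$ can only sever connections through the local neighborhood $N(\ell \cup \ell')$; as long as every neighbor of $P$ remains connected to $P$'s new location via the shared neighbors $\mathbb{S}$ (Property~\ref{prop:1}) or the two endpoints stay locally self-connected (Property~\ref{prop:2}), no global disconnection can occur. Concretely, I would argue that any path in $\sigma$ that passed through $\ell$ can be rerouted through $N(\ell \cup \ell')$ to avoid $\ell$ and reach $\ell'$ instead, using the connectivity guaranteed by whichever property holds. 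This establishes that $\sigma'$ is connected. Since these two properties are imported directly from~\cite{Cannon2016}, where they were designed precisely to preserve connectivity under single-particle moves, I can cite that work for the detailed verification rather than reproving it in full.

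For hole-freeness, the argument rests on condition $(ii)$, namely $|N(\ell)| < 5$. The point is that a new hole could only be created if $P$, upon vacating $\ell$, were to surround some unoccupied region; but the only way a single particle move can enclose a previously accessible empty location is if that particle had five occupied neighbors and was moving into its unique empty neighboring cell, thereby sealing off a pocket. Condition $(ii)$ explicitly forbids exactly this scenario (as noted in the text following Algorithm~\ref{alg:M}), so no hole can form at $\ell$. One must also check that moving $P$ into $\ell'$ does not create a hole, but since $\ell'$ was previously unoccupied and is now filled, filling a cell can only merge or shrink empty regions, never create a new finite enclosed component. Again, this matches the corresponding argument in~\cite{Cannon2016}.

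The main obstacle, and the part requiring the most care, is verifying that Properties~\ref{prop:1} and~\ref{prop:2} genuinely certify connectivity in \emph{all} local configurations --- this is a finite but somewhat tedious case analysis over the possible occupancy patterns of the cells in $N(\ell \cup \ell')$, governed by the value of $|\mathbb{S}| \in \{0,1,2\}$. Since the lemma states that these claims ``follow from work in~\cite{Cannon2016},'' my strategy would be to lean on that prior analysis: the properties were engineered there to be precisely the local conditions preserving connectivity and hole-freeness for single-particle moves on the triangular lattice, and the addition of static objects here does not affect the argument since objects behave identically to particles for the purposes of connectivity and neighbor-counting. Thus the bulk of the proof reduces to observing that the objects-as-particles convention leaves the~\cite{Cannon2016} analysis intact, and invoking it to conclude that each move of $\M$ preserves both invariants, completing the induction.
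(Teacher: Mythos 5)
Your high-level strategy---induct on iterations and defer the per-move case analysis to \cite{Cannon2016}, on the grounds that $\M$ checks exactly the same local conditions and that objects can be treated as static particles---is essentially the paper's own proof, which consists of the single observation that the moves permitted by $\M$ are a subset of those permitted by the compression algorithm of \cite{Cannon2016}, already shown there to preserve connectivity and hole-freeness.

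However, the hole-freeness portion of your sketch contains a genuinely false claim, and it would be a real error if the argument did not ultimately fall back on the citation. You assert that once the $|N(\ell)| = 5$ scenario at the vacated location is excluded, no other hole can form, because ``filling a cell can only merge or shrink empty regions, never create a new finite enclosed component.'' That is not true: deleting a vertex from a connected empty region can disconnect it, and one of the resulting pieces can be finite. For instance, if five of the six cells surrounding some empty cell $c$ are occupied and $\ell'$ is the sixth, then a particle stepping from $\ell$ into $\ell'$ (with $\ell \neq c$) turns $c$ into a hole, even though $|N(\ell)|$ may be small, so condition $(ii)$ is satisfied; more generally, a particle can move into the mouth of a ``C''-shaped pocket and seal it off. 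What forbids such moves is not condition $(ii)$---which constrains only the old location $\ell$---nor any general geometric principle, but Properties~\ref{prop:1} and~\ref{prop:2} themselves: in the example above, the two occupied neighbors of $\ell'$ on opposite sides of the pocket's mouth are not connected to $\mathbb{S}$ by a path inside $N(\ell \cup \ell')$, so Property~\ref{prop:1} fails and the move is rejected. A symmetric issue arises at the vacated end: a hole containing $\ell$ together with a pre-existing empty pocket can be sealed off by occupying $\ell'$ even when $|N(\ell)| < 5$. So the correct attribution is that Properties~\ref{prop:1} and~\ref{prop:2}, together with $(ii)$, jointly prevent both disconnection \emph{and} hole formation; your clean split of ``Properties for connectivity, condition $(ii)$ for holes'' does not survive the very case analysis you are deferring to \cite{Cannon2016}.
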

\begin{proof}
Cannon et al.~\cite{Cannon2016} proved that no moves allowed in their compression algorithm could introduce holes or disconnect the particle system.
Since the moves allowed by $\M$ are a subset of those in the compression algorithm (since the local properties checked at each iteration are the same), $\M$ cannot introduce holes or disconnect the system.
\end{proof}

\begin{lemma} \label{lem:ergodic}
If $\sigma_0$ has no holes, then $\M$ is ergodic.
\end{lemma}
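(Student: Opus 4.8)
The plan is to verify the two ingredients of ergodicity separately: aperiodicity and irreducibility. For aperiodicity I would exhibit a positive holding probability $Q(\sigma,\sigma) > 0$ at every state $\sigma \in \Omega$, which immediately forces $\gcd\{t : Q^t(\sigma,\sigma) > 0\} = 1$. This is easy to arrange: since every $\sigma \in \Omega$ is a nonempty connected configuration that also connects the two objects, it contains a particle $P$ adjacent to some particle or object. With positive probability Step~\ref{alg:M:select1} selects this $P$ and Step~\ref{alg:M:select2} selects that occupied location as $\ell'$; the conditional ``$\ell'$ is unoccupied'' then fails, $P$ never moves, and the configuration is unchanged. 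Hence $Q(\sigma,\sigma) > 0$ and $\M$ is aperiodic. (A rejected Metropolis step in Step~\ref{alg:M:conds}$(iii)$ would give the same conclusion.)

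For irreducibility I would exploit the definition of $\Omega$ as exactly the set of configurations reachable from $\sigma_0$ through valid transitions. It therefore suffices to show that every move of $\M$ is \emph{reversible}, i.e., that whenever $\sigma \to \sigma'$ occurs with positive probability, so does $\sigma' \to \sigma$. Granting this, the one-step reachability relation is symmetric, so any $\sigma \in \Omega$ can reach $\sigma_0$ by reversing the transition path that produced $\sigma$, and $\sigma_0$ can reach any $\tau \in \Omega$ by the definition of $\Omega$; composing the two yields $\sigma \to \sigma_0 \to \tau$, so $Q^t(\sigma,\tau) > 0$ for some $t$.

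It remains to argue reversibility of a single move in which particle $P$ relocates from $\ell$ to an unoccupied neighbor $\ell'$, transforming $\sigma$ into $\sigma'$. Three conditions gate the move. Condition~$(i)$ is unproblematic because Properties~\ref{prop:1} and~\ref{prop:2} are explicitly symmetric in $\ell$ and $\ell'$, so they hold equally for the reverse move. The Metropolis filter $(iii)$ is also automatic, since the reverse acceptance probability $\min\{1,\lambda^{p(\sigma')-p(\sigma)}\gamma^{g(\sigma')-g(\sigma)}\}$ is strictly positive. The one condition that is not manifestly symmetric, and which I expect to be the crux, is $(ii)$, namely $|N(\ell)| < 5$, which forbids vacating a location whose departure would enclose it as a hole. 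To handle it I would invoke the invariant from Lemma~\ref{lem:connectholes} that every configuration in $\Omega$ is hole-free. Suppose, for contradiction, that the reverse move were blocked by $|N(\ell')| \ge 5$ in $\sigma'$. Since $\ell$ is an unoccupied neighbor of $\ell'$ in $\sigma'$, this means the five neighbors of $\ell'$ other than $\ell$ are all occupied in $\sigma'$; these five locations were unaffected by $P$'s move and so are occupied in $\sigma$ as well, and in $\sigma$ the location $\ell$ is additionally occupied by $P$. Thus in $\sigma$ all six neighbors of $\ell'$ would be occupied while $\ell'$ itself is unoccupied, making $\ell'$ a hole in $\sigma$ and contradicting hole-freeness. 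Hence $|N(\ell')| < 5$ in $\sigma'$ and condition~$(ii)$ holds for the reverse move as well. Combining the three conditions shows every move is reversible, which establishes irreducibility and, together with aperiodicity, proves that $\M$ is ergodic.
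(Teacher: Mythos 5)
Your proof is correct and follows essentially the same route as the paper: aperiodicity from the positive probability that a selected particle's chosen direction is blocked (the paper phrases this as ``each particle has at least one neighbor, so with probability at least $1/6$ no move occurs''), and irreducibility from the definition of $\Omega$ as the configurations reachable from $\sigma_0$. Where you go beyond the paper is worth noting: the paper's irreducibility argument is a one-liner that, read literally, only shows $\sigma_0$ reaches every state of $\Omega$; the other half --- that every move is reversible, so every state can return to $\sigma_0$ --- is left implicit there and only asserted later, in the proof of Lemma~\ref{lem:stationary1}, where Properties~\ref{prop:1} and~\ref{prop:2} are said to make valid moves symmetric, with no mention of condition $(ii)$ of Step~\ref{alg:M:conds}. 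Your observation that a blocked reverse move with $|N(\ell')| \geq 5$ would force all six neighbors of $\ell'$ to be occupied in $\sigma$, making $\ell'$ a hole and contradicting Lemma~\ref{lem:connectholes}, is precisely the detail needed to make that symmetry claim airtight; so your write-up is a strictly more complete version of the paper's own argument rather than a different one.
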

\begin{proof}
Markov chain $\M$ is irreducible because we defined~$\Omega$ to be precisely those configurations reachable by valid transitions of~$\M$ starting from $\sigma_0$.
$\M$ is aperiodic because at each iteration there is a probability of at least $1/6$ that no move occurs, as each particle has at least one neighbor.
Thus, the chain~$\M$ is ergodic.
\end{proof}

As $\M$ is finite and ergodic, it converges to a unique stationary distribution, and we can find that distribution using detailed balance.

\begin{lemma} \label{lem:stationary1}
The stationary distribution of $\M$ is
\[\pi(\sigma) = \lambda^{-p(\sigma)} \gamma^{-g(\sigma)}/Z,\]
where $Z = \sum_{\sigma' \in \Omega} \lambda^{-p(\sigma')} \gamma^{-g(\sigma')}$.
\end{lemma}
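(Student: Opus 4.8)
The plan is to verify that the claimed distribution $\pi(\sigma) = \lambda^{-p(\sigma)}\gamma^{-g(\sigma)}/Z$ satisfies the detailed balance condition $\pi(\sigma)Q(\sigma,\tau) = \pi(\tau)Q(\tau,\sigma)$ for every pair of states $\sigma,\tau \in \Omega$. Since $\M$ is finite and ergodic by Lemma~\ref{lem:ergodic}, any distribution satisfying detailed balance must be \emph{the} unique stationary distribution, so this suffices. First I would dispose of the trivial cases: when $\sigma = \tau$ detailed balance holds automatically, and when $\sigma$ and $\tau$ are not adjacent (i.e.\ not related by a single valid transition of $\M$) both $Q(\sigma,\tau)$ and $Q(\tau,\sigma)$ vanish, so the equation holds as $0 = 0$. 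The substance is therefore entirely in the case where $\tau$ is obtained from $\sigma$ by a single particle $P$ moving from location $\ell$ to an adjacent location $\ell'$.

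For that case I would write out $Q(\sigma,\tau)$ explicitly from Algorithm~\ref{alg:M}. Moving from $\sigma$ to $\tau$ requires selecting the particle $P$ (probability $1/n$), selecting the correct neighboring target location $\ell'$ out of the six lattice directions (probability $1/6$), and then passing the Metropolis filter of Step~\ref{alg:M:conds}$(iii)$, which accepts with probability $\min\{1, \lambda^{p(\sigma)-p(\tau)}\gamma^{g(\sigma)-g(\tau)}\}$. A key observation is that conditions $(i)$ and $(ii)$ of Step~\ref{alg:M:conds} depend only on the unordered pair of locations $\{\ell,\ell'\}$ and the surrounding occupied sites, which are identical whether we view the move as $\sigma \to \tau$ or $\tau \to \sigma$; hence if the move is allowed in one direction it is allowed in the reverse direction, and the same particle and the same neighboring-location selection probabilities $1/n$ and $1/6$ appear in $Q(\tau,\sigma)$. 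Thus
\[
\frac{Q(\sigma,\tau)}{Q(\tau,\sigma)} = \frac{\min\{1,\ \lambda^{p(\sigma)-p(\tau)}\gamma^{g(\sigma)-g(\tau)}\}}{\min\{1,\ \lambda^{p(\tau)-p(\sigma)}\gamma^{g(\tau)-g(\sigma)}\}}.
\]
Setting $x = \lambda^{p(\sigma)-p(\tau)}\gamma^{g(\sigma)-g(\tau)}$, the right-hand side becomes $\min\{1,x\}/\min\{1,1/x\}$, which equals $x$ for all $x > 0$ (check $x \le 1$ and $x \ge 1$ separately). Therefore $Q(\sigma,\tau)/Q(\tau,\sigma) = \lambda^{p(\sigma)-p(\tau)}\gamma^{g(\sigma)-g(\tau)}$.

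To finish, I would compute the reverse ratio from the candidate distribution: $\pi(\tau)/\pi(\sigma) = \lambda^{-p(\tau)}\gamma^{-g(\tau)} / (\lambda^{-p(\sigma)}\gamma^{-g(\sigma)}) = \lambda^{p(\sigma)-p(\tau)}\gamma^{g(\sigma)-g(\tau)}$, where the normalizing constant $Z$ cancels. Comparing the two displayed ratios gives $\pi(\sigma)Q(\sigma,\tau) = \pi(\tau)Q(\tau,\sigma)$, which is exactly detailed balance. I anticipate the main obstacle is not any hard calculation but rather the careful bookkeeping around the transition-probability symmetry: one must confirm that $\sigma$ and $\tau$ differ by exactly one particle move (guaranteed since all moves of $\M$ are single-particle unit moves), that the proposal probability $1/(6n)$ is genuinely symmetric in the two directions, and that the validity conditions $(i)$ and $(ii)$ are symmetric with respect to $\ell$ and $\ell'$ (the paper explicitly notes Properties~\ref{prop:1} and~\ref{prop:2} are symmetric, which I would cite). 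A subtle point worth stating is that $Q(\sigma,\sigma)$ absorbs all the self-loop probability from rejected proposals, so it need not be analyzed; detailed balance on off-diagonal pairs together with the rows of $Q$ summing to $1$ is all that is required.
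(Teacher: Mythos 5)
Your proposal is correct and follows essentially the same route as the paper: both verify detailed balance for the candidate distribution, using the symmetry of Properties~\ref{prop:1} and~\ref{prop:2} to argue the transition is valid in both directions and writing $Q(\sigma,\tau) = \frac{1}{6n}\min\{1, \lambda^{p(\sigma)-p(\tau)}\gamma^{g(\sigma)-g(\tau)}\}$. The only cosmetic difference is that you cancel the Metropolis factors via the identity $\min\{1,x\}/\min\{1,1/x\} = x$, whereas the paper argues without loss of generality on which direction of the move has acceptance ratio at most $1$; these are the same computation.
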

\begin{proof}
Properties~\ref{prop:1} and~\ref{prop:2} ensure that particle $P$ moving from location $\ell$ to location $\ell'$ is valid if and only if $P$ moving from $\ell'$ to $\ell$ is.
This implies for any configurations $\sigma$ and~$\tau$, $Q(\sigma, \tau) > 0$ if and only if $Q(\tau, \sigma) > 0$.
Using this, we easily verify the lemma via detailed balance.

Let $\sigma,\tau \in \Omega$ be distinct configurations that differ by one valid move of a particle $P$ from location $\ell$ to neighboring location $\ell'$, and let $n$ be the number of particles.
Then,
\begin{align*}
Q(\sigma,\tau) &= \frac{1}{n} \cdot \frac{1}{6} \cdot \min\{\lambda^{p(\sigma) - p(\tau)}\gamma^{g(\sigma) - g(\tau)}, 1\}, \text{ and}\\
Q(\tau,\sigma) &= \frac{1}{n}\cdot \frac{1}{6}\cdot \min\{\lambda^{p(\tau) - p(\sigma)}\gamma^{g(\tau) - g(\sigma)}, 1\}.
\end{align*}

Without loss of generality, assume that $\lambda$ and $\gamma$ satisfy $\lambda^{p(\sigma) - p(\tau)}\gamma^{g(\sigma) - g(\tau)} \leq 1$. Then,
\[\pi(\sigma)Q(\sigma,\tau) = \frac{\lambda^{-p(\sigma)}\gamma^{-g(\sigma)}}{Z} \cdot \frac{\lambda^{p(\sigma) - p(\tau)}\gamma^{g(\sigma) - g(\tau)}}{6n}
= \frac{\lambda^{-p(\tau)}\gamma^{-g(\tau)}}{Z} \cdot \frac{1}{6n}
= \pi(\tau)Q(\tau,\sigma).\]

The definition of $Z$ implies $\pi$ satisfies $\sum_{\sigma' \in \Omega}\pi(\sigma') = 1$, so $\pi$ is a valid probability distribution and we conclude $\pi$ is the unique stationary distribution of $\M$.
\end{proof}

The stationary distribution can be alternately expressed using weighted perimeter.

\begin{lemma} \label{lem:stationary2}
For $c = 1 + \log_{\lambda}\gamma$, the stationary distribution of $\M$ is given by \[\pi(\sigma) = \lambda^{-\pweight{c}} / Z,\]
where $Z = \sum_{\sigma' \in \Omega} \lambda^{-\overline{p}(\sigma',c)}$.
\end{lemma}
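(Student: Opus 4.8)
The plan is to obtain this expression directly from the stationary distribution already computed in Lemma~\ref{lem:stationary1}, by performing the change of variables that converts the pair of bias parameters $(\lambda,\gamma)$ into the single weighted-perimeter parameter $c$. The only structural fact needed is the decomposition $\pweight{c} = p(\sigma) + (c-1)g(\sigma)$ noted at the start of Section~\ref{sec:algorithm}, together with the relation defining $c$.

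First I would observe that the hypothesis $c = 1 + \log_{\lambda}\gamma$ is equivalent to $\gamma = \lambda^{c-1}$. Since $\lambda > 1$ and $\gamma > 1$ are assumed throughout, this relation is well defined and invertible, and it yields $c > 1$, consistent with the standing assumption on the gap weight. Substituting $\gamma = \lambda^{c-1}$ into the numerator of $\pi$ from Lemma~\ref{lem:stationary1} gives
\[
\lambda^{-p(\sigma)}\gamma^{-g(\sigma)} = \lambda^{-p(\sigma)}\left(\lambda^{c-1}\right)^{-g(\sigma)} = \lambda^{-\left(p(\sigma)+(c-1)g(\sigma)\right)} = \lambda^{-\pweight{c}},
\]
where the final equality is exactly the weighted-perimeter decomposition identity.

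Next I would apply the identical substitution termwise to the normalizing constant $Z = \sum_{\sigma' \in \Omega} \lambda^{-p(\sigma')} \gamma^{-g(\sigma')}$ from Lemma~\ref{lem:stationary1}, which turns each summand into $\lambda^{-\overline{p}(\sigma',c)}$ and hence gives $Z = \sum_{\sigma' \in \Omega} \lambda^{-\overline{p}(\sigma',c)}$, matching the claimed form. Combining the rewritten numerator and denominator yields $\pi(\sigma) = \lambda^{-\pweight{c}}/Z$ as stated.

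There is no real obstacle here: the statement is a reparametrization of Lemma~\ref{lem:stationary1} rather than a new fact, and the entire argument reduces to the substitution $\gamma = \lambda^{c-1}$ combined with the weighted-perimeter decomposition. The only point warranting any care is confirming that this change of variables is legitimate --- i.e., that the $c$ defined by $c = 1 + \log_{\lambda}\gamma$ lies in the admissible range $c > 1$ --- which follows immediately from $\lambda, \gamma > 1$.
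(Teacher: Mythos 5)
Your proof is correct and takes exactly the same route as the paper, which simply states that the lemma ``follows from the definition of $\pweight{c}$'' --- that is, from the identity $\pweight{c} = p(\sigma) + (c-1)g(\sigma)$ together with the substitution $\gamma = \lambda^{c-1}$ applied to Lemma~\ref{lem:stationary1}. Your write-up just spells out the substitution (numerator and normalizing constant) that the paper leaves implicit.
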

\begin{proof}
This follows from the definition of $\pweight{c}$.
\end{proof}

\begin{theorem} \label{thm:alphabridge}
Consider an execution of Markov chain $\M$ on state space $\Omega$, with $\lambda > \cbound =: \cshort$ and $\gamma > 1$, where starting configuration $\sigma_0$ has $n$ particles.
For any constant $\alpha$ satisfying
\[\alpha > \frac{\log\lambda}{\log\lambda - \log\cshort} > 1,\]
the probability that a particle configuration $\sigma$ drawn at random from $\M$'s stationary distribution $\pi$ satisfies \[\overline{p}(\sigma, 1+\log_{\lambda}\gamma) > \alpha \cdot \overline{p}_{min}\] is exponentially small in $n$ for sufficiently large $n$, where $\overline{p}_{min}$ is the minimum weighted perimeter of a configuration in $\Omega$.
\end{theorem}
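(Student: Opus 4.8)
The plan is to run a Peierls argument directly on the stationary distribution $\pi$ from Lemma~\ref{lem:stationary2}. Writing $S = \{\sigma \in \Omega : \pweight{c} > \alpha\,\overline{p}_{min}\}$ for the set of ``bad'' configurations (with $c = 1+\log_\lambda\gamma$), the goal is to bound
\[
\pi(S) = \frac{\sum_{\sigma \in S}\lambda^{-\pweight{c}}}{Z}, \qquad Z = \sum_{\sigma'\in\Omega}\lambda^{-\overline{p}(\sigma',c)}.
\]
The denominator is easy: since $\Omega$ is finite and nonempty there is a configuration attaining the minimum weighted perimeter, and all terms are positive, so $Z \ge \lambda^{-\overline{p}_{min}}$. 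Everything then reduces to showing the numerator is much smaller than $\lambda^{-\overline{p}_{min}}$.

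First I would establish the key combinatorial input: an upper bound of the form $\#\{\sigma : p(\sigma) = p\} \le \text{poly}(p)\cdot\nu^p$ on the number of (connected, hole-free) configurations of $n$ particles with perimeter exactly $p$, where $\nu = \cbound$. This is precisely the enumeration underlying the compression analysis of~\cite{Cannon2016}: tracing the single external boundary yields a closed walk of length $p$ on $\Gtri$ whose number of extensions at each step is controlled, producing the base $\nu$; requiring connectivity to the objects of $O$ only restricts the family, so the bound still applies (the fixed objects conveniently anchor the walk, avoiding translational overcounting). Since $\pweight{c} = p(\sigma) + (c-1)g(\sigma)$ with $0 \le g(\sigma) \le p(\sigma)$, a configuration of weighted perimeter $w$ has perimeter $p \in [w/c, w]$; summing the perimeter bound over this range gives $A(w) := \#\{\sigma : \pweight{c} = w\} \le \text{poly}(w)\cdot\nu^w$, a bound on the number of configurations of each weighted-perimeter value $w$.

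With this in hand, I would bound the numerator by grouping the bad configurations according to their weighted perimeter and summing a geometric tail:
\[
\sum_{\sigma \in S}\lambda^{-\pweight{c}} = \sum_{w > \alpha\,\overline{p}_{min}} A(w)\,\lambda^{-w} \le \sum_{w > \alpha\,\overline{p}_{min}} \text{poly}(w)\left(\frac{\nu}{\lambda}\right)^{w}.
\]
Because $\lambda > \nu$, the ratio $\nu/\lambda < 1$, so this tail converges and is dominated (up to a polynomial factor) by its leading term $(\nu/\lambda)^{\alpha\,\overline{p}_{min}}$. Dividing by the lower bound on $Z$ gives
\[
\pi(S) \le \text{poly}(\overline{p}_{min})\cdot\frac{(\nu/\lambda)^{\alpha\,\overline{p}_{min}}}{\lambda^{-\overline{p}_{min}}} = \text{poly}(\overline{p}_{min})\cdot\bigl(\nu^{\alpha}\lambda^{1-\alpha}\bigr)^{\overline{p}_{min}}.
\]
The exponent base satisfies $\nu^{\alpha}\lambda^{1-\alpha} < 1$ exactly when $\alpha\log\nu + (1-\alpha)\log\lambda < 0$, i.e.\ $(\alpha-1)\log\lambda > \alpha\log\nu$, which rearranges to the hypothesis $\alpha > \log\lambda/(\log\lambda - \log\nu)$; this is the reason for that precise threshold. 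Finally, since $\overline{p}_{min} \ge p_{min} = \Omega(\sqrt n)$ by the isoperimetric inequality on $\Gtri$ (any connected hole-free configuration of $n$ particles has perimeter growing without bound with $n$), the quantity $\bigl(\nu^{\alpha}\lambda^{1-\alpha}\bigr)^{\overline{p}_{min}}$ decays exponentially in $\overline{p}_{min}$ and hence is exponentially small in $n$, beating the polynomial prefactor for large $n$.

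I expect the main obstacle to be the combinatorial counting lemma and its transfer to weighted perimeter, rather than the (essentially mechanical) Peierls bookkeeping: one must carefully argue that boundary walks of length $p$ are counted by $\text{poly}(p)\cdot\nu^p$ in the presence of the static objects and the gap/land distinction, and that passing from perimeter to weighted perimeter (where $g(\sigma)$ may be half-integral and $c$ is an arbitrary real) does not lose the clean base $\nu$. Once the per-value count $A(w) \le \text{poly}(w)\,\nu^w$ is secured, the remaining geometric-series estimate and the verification that the $\alpha$-threshold makes the base strictly less than $1$ are routine.
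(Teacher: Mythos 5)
Your proposal is correct and takes essentially the same route as the paper's own proof: a Peierls argument that stratifies the bad set by weighted perimeter, applies the perimeter-counting bound of~\cite{Cannon2016} (converted to weighted perimeter using $p \leq \pweight{c}$), compares the numerator against the single term $\lambda^{-\overline{p}_{min}}$ in $Z$, derives the same threshold condition $\lambda^{1-\alpha}\cshort^{\alpha} < 1$, and finishes with $\overline{p}_{min} \geq \sqrt{n}$. The only discrepancy is that the counting bound from~\cite{Cannon2016} carries a subexponential (not necessarily polynomial) prefactor $f(p)$, but your argument goes through verbatim with that weaker bound, exactly as in the paper.
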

\begin{proof}
This proof mimics that of $\alpha$-compression in~\cite{Cannon2016}, but additional insights and care are necessary to accommodate the difficulties introduced by considering weighted perimeter instead of perimeter.
Throughout we consider weighted perimeter $\overline{p}(\sigma) = \pweight{1 + \log_\lambda \gamma}$.

Define the weight of a configuration $\sigma \in \Omega$ to be:
\[w(\sigma) := \pi(\sigma) \cdot Z = \lambda^{-p(\sigma)}\gamma^{-g(\sigma)} = \lambda^{-\overline{p}(\sigma)},\]
where $Z = \sum_{\sigma' \in \Omega} \lambda^{-p(\sigma')} \gamma^{-g(\sigma')}$.
For a set of configurations $S \subseteq \Omega$, we define its weight $w(S) = \sum_{\sigma \in S} w(\sigma)$; analogously, let $\pi(S) = \sum_{\sigma \in S} \pi(\sigma) = w(S) / Z$.
Let $\sigma_{min} \in \Omega$ be a configuration with minimal weighted  perimeter $\overline{p}_{min}$, and let $S_\alpha$ be the set of configurations with weighted perimeter at least $\alpha \cdot \overline{p}_{min}$. We show that for sufficiently large $n$,
\[\pi(S_\alpha) = \frac{w(S_\alpha)}{Z} < \frac{w(S_\alpha)}{w(\sigma_{min})} \leq \zeta^{\sqrt{n}},\]
where $\zeta < 1$.
The first equality and inequality follow directly from the definitions of $Z$, $w$, and $\sigma_{min}$.
We focus on the last inequality.

Stratify $S_\alpha$ into sets of configurations that have the same weighted perimeter; there are at most $O\left(n^2\right)$ such sets, as the total perimeter and gap perimeter can each take on at most $O(n)$ values.
Label these sets as $A_1, A_2,$ ..., $A_m$ in order of increasing weighted perimeter, where $m$ is the total number of distinct weighted perimeters of configurations in $S_\alpha$.
Let~$\overline{p}_i$ be the weighted perimeter of all configurations in set~$A_i$; since $A_i \subseteq S_\alpha$, then $\overline{p}_i \geq \alpha \cdot \overline{p}_{min}$.

Note $w(\sigma) = \lambda^{-\overline{p}_i}$ for every $\sigma \in A_i$, so to bound $w(A_i)$ it suffices to bound $|A_i|$.
A configuration with weighted perimeter $\overline{p}_i$ has perimeter $p \leq\overline{p}_i$, and a result from~\cite{Cannon2016} that exploits a connection to self-avoiding walks in the hexagon lattice~\cite{Duminil-Copin2012} implies the number of connected, hole-free particle configurations with perimeter $p$ is at most $f(p)\cshort^p$, for some subexponential function $f$.
Letting $p_{min}$ denote the minimum possible (unweighted) perimeter of a configuration of $n$ particles, we conclude that:
\[w(A_i) = \lambda^{-\overline{p}_i} |A_i| \leq \lambda^{-\overline{p}_i} \cdot \sum_{p = p_{min}}^{\overline{p}_i} f(p) \cshort^p \leq \lambda^{-\overline{p}_i} f_1(\overline{p}_i) \cshort^{\overline{p}_i},\]
where $f_1(\overline{p}_i) = \sum_{p = p_{min}}^{\overline{p}_i} f(p)$ is necessarily also a subexponential function because it is a sum of at most a linear number of subexponential terms.
So,
\[w(S_\alpha) = \sum_{i=1}^m w(A_i) \leq \sum_{i=1}^m f_1(\overline{p}_i) \left(\frac{\cshort}{\lambda}\right)^{\overline{p}_i} \leq f_2(n) \left(\frac{\cshort}{\lambda}\right)^{\alpha \cdot \overline{p}_{min}},\]
where $f_2(n) = \sum_{i=1}^m f_1(\overline{p}_i)$ is a subexponential function because $\overline{p}_i = O(n)$, $m = O\left(n^2\right)$, and $f_1$ is subexponential.
The last inequality above holds as $\lambda > \cshort$ and $\overline{p}_i \geq \alpha \cdot \overline{p}_{min}$.  
Then, since $w(\sigma_{min}) = \lambda^{-\overline{p}_{min}}$,
\[\pi(S_\alpha) < \frac{w(S_\alpha)}{w(\sigma_{min})}
\leq f_2(n) \left(\frac{\cshort}{\lambda}\right)^{\alpha \cdot \overline{p}_{min}} \lambda^{\overline{p}_{min}}
= f_2(n) \left[\lambda \left(\frac{\cshort}{\lambda}\right)^\alpha \right]^{\overline{p}_{min}}.\]

The constant $\lambda(\cshort / \lambda)^\alpha$ is less than one whenever $\alpha > \frac{\log\lambda}{\log\lambda - \log\cshort}$.
Since the perimeter of any configuration of $n$ particles is at least $\sqrt{n}$, $\overline{p}_{min} \geq \sqrt{n}$.
Because $f_2(n)$ is subexponentially large but $(\lambda(\cshort / \lambda)^\alpha)^{\sqrt{n}}$ is exponentially small, asymptotically the latter term dominates and we conclude there exists $\zeta < 1$ such that for all sufficiently large $n$,
\[\pi(S_\alpha) < f_2(n) (\lambda(\cshort / \lambda)^\alpha)^{\sqrt{n}} < \zeta^{\sqrt{n}},\]
which proves the theorem.
\end{proof}

Though Theorem~\ref{thm:alphabridge} is proved only in the case where the number of particles is sufficiently large, we expect and observe it to hold for much smaller $n$.
However, we are unable to compute an explicit bound on how large $n$ must be for these results to hold because the exact form of the subexponential function $f(p)$ in the above proof is unknown (see Section 4 of~\cite{Duminil-Copin2012} and references therein).

The following corollary shows that our algorithm solves any instance $(L,O,\sigma_0, c,\alpha)$ of the shortcut bridging problem when parameters $\lambda$ and $\gamma$ are chosen accordingly.

\begin{corollary} \label{cor:correctness}
The distributed, local algorithm $\A$ associated with Markov chain $\M$ solves any valid instance of the shortcut bridging problem where the number of particles is sufficiently large.
\end{corollary}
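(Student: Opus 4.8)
The plan is to assemble the pieces already established rather than to prove anything new. The construction of Section~\ref{subsec:localalg} shows that $\A$ faithfully simulates $\M$, so it suffices to argue that $\M$ converges to a distribution concentrated on configurations of near-minimal weighted perimeter, and then to translate that per-time-step guarantee into the ``reaches and remains'' language of the problem definition. The first task is to verify that, for any valid instance $(L,O,\sigma_0,c,\alpha)$ with constant $\alpha > 1$ and $c > 1$, the bias parameters $\lambda$ and $\gamma$ can be chosen to simultaneously satisfy every hypothesis used in the earlier lemmas and in Theorem~\ref{thm:alphabridge}.

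I would choose the parameters as follows. Since $\frac{\log\lambda}{\log\lambda - \log\cshort} \to 1$ as $\lambda \to \infty$, for any fixed $\alpha > 1$ we may pick $\lambda > \cbound =: \cshort$ large enough that $\frac{\log\lambda}{\log\lambda - \log\cshort} < \alpha$; we then set $\gamma = \lambda^{c-1}$, which exceeds $1$ because $\lambda > 1$ and $c > 1$. With these choices all of the standing hypotheses hold: $\lambda > \cshort$ and $\gamma > 1$ are exactly the conditions of Theorem~\ref{thm:alphabridge}, and $c = 1 + \log_\lambda \gamma$ holds by construction, so Lemma~\ref{lem:stationary2} identifies the stationary distribution of $\M$ as $\pi(\sigma) = \lambda^{-\pweight{c}}/Z$. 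By validity of the instance together with the hole-free assumption on $\sigma_0$, Lemma~\ref{lem:ergodic} gives that $\M$ is ergodic and hence converges to this unique $\pi$.

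It then remains to invoke Theorem~\ref{thm:alphabridge} and package the conclusion. Let $\Sigma^* = \Omega \setminus S_\alpha$ be the set of configurations whose weighted perimeter is at most $\alpha \cdot \overline{p}_{min}$. Because the chosen $\alpha$ satisfies $\alpha > \frac{\log\lambda}{\log\lambda - \log\cshort}$, the theorem yields $\pi(S_\alpha) < \zeta^{\sqrt{n}}$ for some $\zeta < 1$ and all sufficiently large $n$; equivalently $\pi(\Sigma^*) \geq 1 - \zeta^{\sqrt{n}}$, which is at least $1 - 1/\text{poly}(n)$ and hence holds w.h.p. Once $\M$ has converged, at every subsequent step the current configuration is distributed as $\pi$, so at any such step it lies in $\Sigma^*$ w.h.p.; this is precisely the sense in which $\A$ ``reaches and remains in'' $\Sigma^*$ emphasized in Section~\ref{subsec:results}. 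The ``sufficiently large $n$'' caveat is inherited directly from Theorem~\ref{thm:alphabridge}.

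The main obstacle is conceptual rather than computational: it lies in making precise the ``remains'' clause of the definition of solving. Theorem~\ref{thm:alphabridge} is a statement about a single draw from the stationary distribution, so it controls the configuration at each fixed time after mixing but does not directly bound the probability of \emph{ever} leaving $\Sigma^*$ over an unbounded horizon (a naive union bound over infinitely many steps would be vacuous). I would therefore stress that ``remains'' is to be read in the stationary sense used throughout the paper — the limiting distribution $\pi$ places all but exponentially small mass on $\Sigma^*$, so at any moment after convergence the system is in a near-optimal configuration w.h.p. — rather than as an almost-sure absorption claim.
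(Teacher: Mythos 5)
Your proposal is correct and follows essentially the same route as the paper: choose $\lambda$ large enough that $\alpha > \frac{\log\lambda}{\log\lambda - \log\cshort}$ (the paper makes this explicit as $\lambda > (\cbound)^{\frac{\alpha}{\alpha-1}}$, which is algebraically equivalent to your limiting condition), set $\gamma = \lambda^{c-1}$, and invoke Theorem~\ref{thm:alphabridge} together with the observation that exponentially small failure probability is stronger than the polynomially small bound required by the problem definition. Your added remarks on ergodicity and on reading ``remains'' in the stationary-distribution sense are consistent with the paper's own usage and simply make explicit what its proof leaves implicit.
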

\begin{proof}
Given any valid instance $(L,O,\sigma_0,c,\alpha)$ of the shortcut bridging problem, it suffices to run $\A$ starting from configuration $\sigma_0$ with parameters $\lambda > (\cbound)^{\frac{\alpha}{\alpha - 1}}$ and ${\gamma = \lambda^{c-1}}$.
Then $\alpha > \frac{\log(\lambda)}{\log(\lambda) - \log(\cbound)} > 1$, so by Theorem~\ref{thm:alphabridge} the system reaches and remains with all but exponentially small probability in a set of configurations with weighted perimeter $\pweight{c} \leq \alpha \cdot \overline{p}_{min}$, where $\overline{p}_{min}$ is the minimum weighted perimeter of a configuration in $\Omega$.
Solving the shortcut bridging problem only requires the weaker condition that this occurs with all but a polynomially small probability, which our algorithm certainly achieves.
\end{proof}


\section{Dependence on Gap Angle} \label{sec:proofs}

To understand the relationship between bridging and shape, we consider V-shaped land masses of various angles (e.g., Figure~\ref{fig:initconfig_V}).
We prove our shortcut bridging algorithm has a dependence on the internal angle $\theta$ of the gap similar to that of the army ant bridges studied by Reid et al.~\cite{Reid2015}.
We show that when $\theta$ is sufficiently small, with all but exponentially small probability the bridge constructed by the particles stays close to the bottom of the gap (away from the apex of angle $\theta$).
On the other hand, we show that for some large values of $\theta$, when $\lambda$ and $\gamma$ satisfy certain conditions, with all but exponentially small probability the bridge stays close to the top of the gap.
We prove these results with a Peierls argument and careful analysis of the geometry of the gap.
Simulations of our shortcut bridging algorithm for varying angles can be found in Section~\ref{sec:simulations}.

We first give a formal construction for the V-shaped land mass $L$ given any $\theta \in (0,\pi)$ and constant width $w \geq 2$.
Let $e \in E$ be any edge of the triangular lattice and label its endpoints as $v_1$ and $v_2$.
Extend line segment $\ell_1$ from $v_1$ such that it forms an angle of $\pi/2 + \theta/2$ with $e$.
Similarly extend line segment $\ell_2$ from $v_2$, of the same length and on the same side of $e$ as $\ell_1$, also forming an angle of $\pi/2 + \theta/2$ with~$e$.
Segments $\ell_1$ and $\ell_2$ then differ in their orientation by angle $\theta$.
Without loss of generality, we assume $\ell_1$ is clockwise from $\ell_2$ around $e$.
Let $b$ be the line through $\ell_1$ and $\ell_2$'s other endpoints (not $v_1$ and $v_2$).
The land mass consists of $v_1$, $v_2$, and all vertices of $\Gtri$ that are outside of $\ell_1$ and $\ell_2$ and from which there exists a lattice path of length at most $w$ to a vertex strictly between $\ell_1$ and $\ell_2$.
Vertices of $\Gtri$ on the opposite side of $b$ from $e$ are not included in the land mass.
For example, Figure~\ref{fig:V_shape30} depicts a land mass with $\theta \sim \pi/6$ and Figure~\ref{fig:V_shape90} shows another with $\theta \sim \pi/2$; both have width $w = 5$.
This careful definition involving edge $e$ is necessary to ensure there are no adjacent land locations on opposite sides of the gap, as could happen for small~$\theta$ if the land mass is not constructed carefully.

\begin{figure}
\centering
\begin{subfigure}{.45\textwidth}
	\centering
	\includegraphics[scale = 0.45]{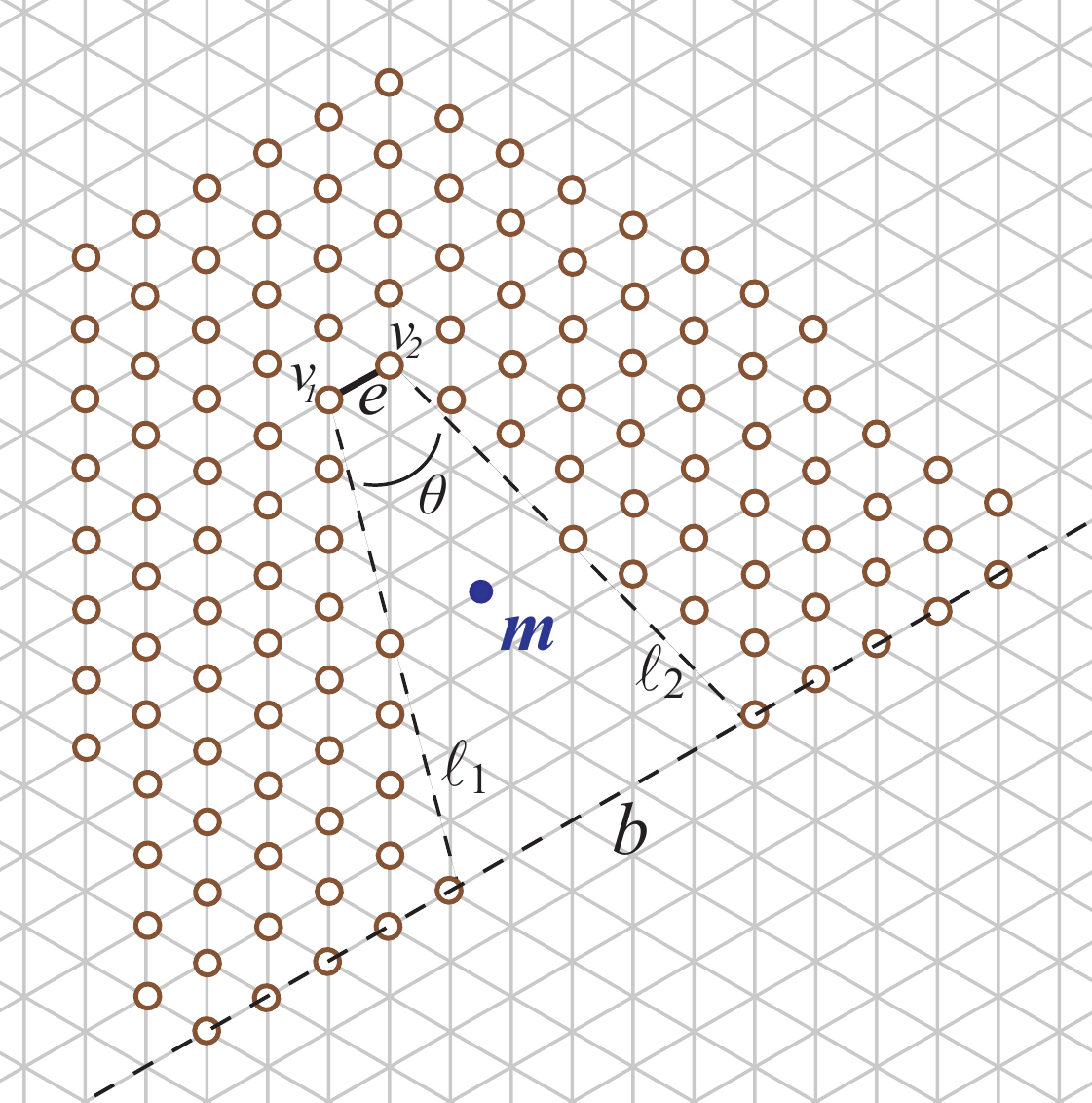}
	\caption{}
	\label{fig:V_shape30}
\end{subfigure}%
\begin{subfigure}{.45\textwidth}
	\centering
	\includegraphics[scale = 0.45]{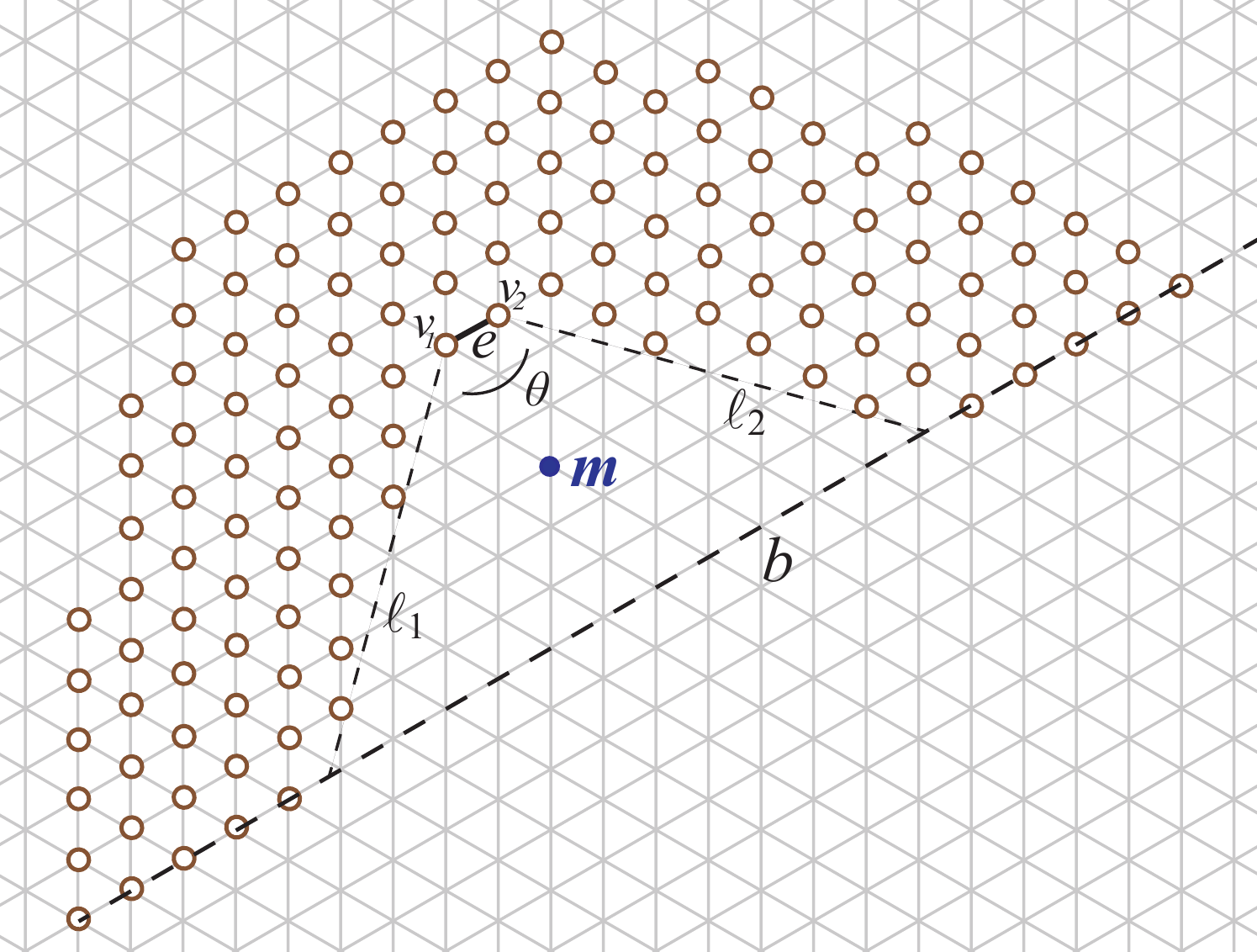}
	\caption{}
	\label{fig:V_shape90}
\end{subfigure}\\ \vspace{3mm}
\begin{subfigure}{.45\textwidth}
	\centering
	\includegraphics[scale = 0.45]{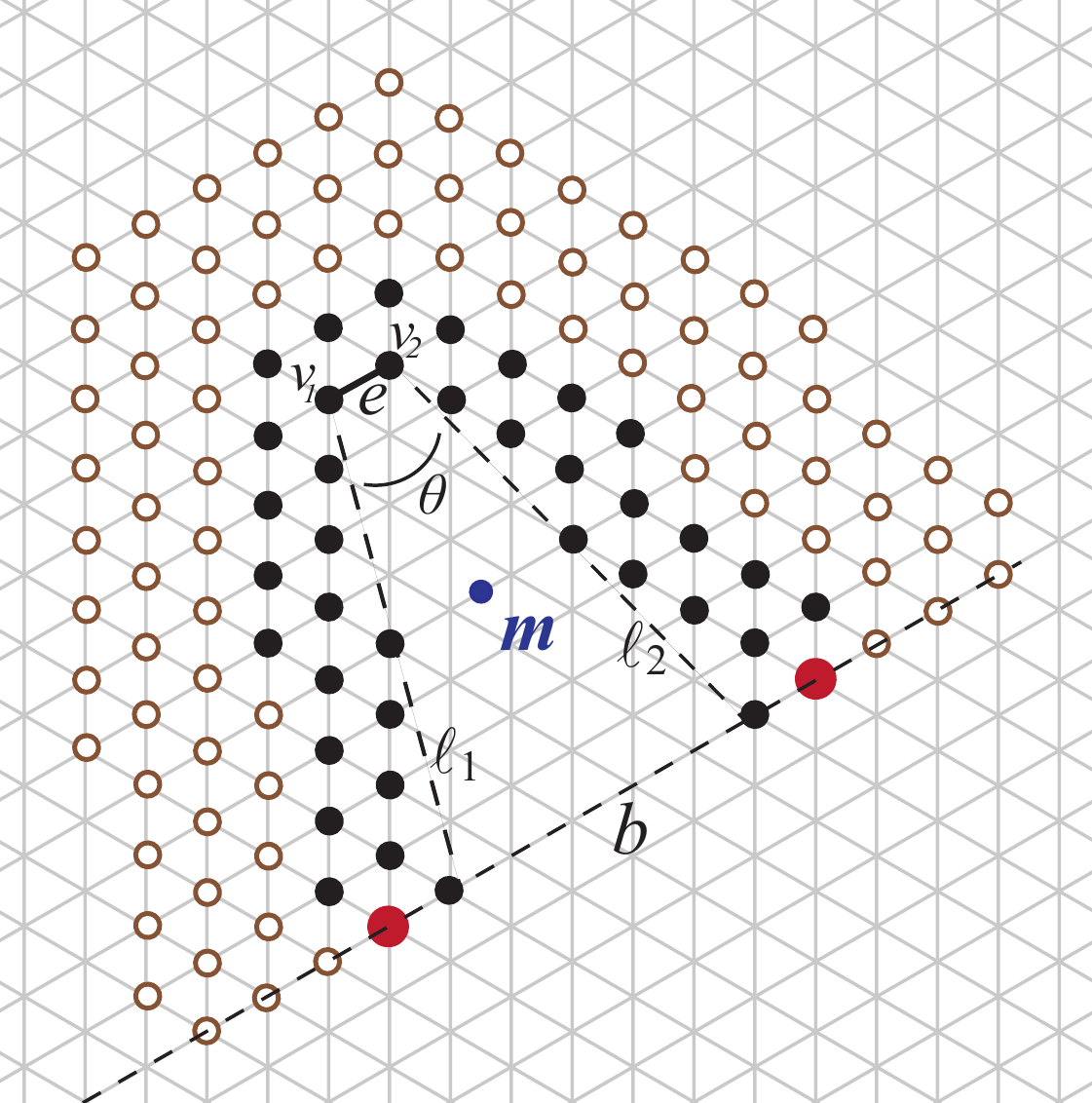}
	\caption{}
	\label{fig:V_shape_initial30}
\end{subfigure}%
\begin{subfigure}{.45\textwidth}
	\centering
	\includegraphics[scale = 0.45]{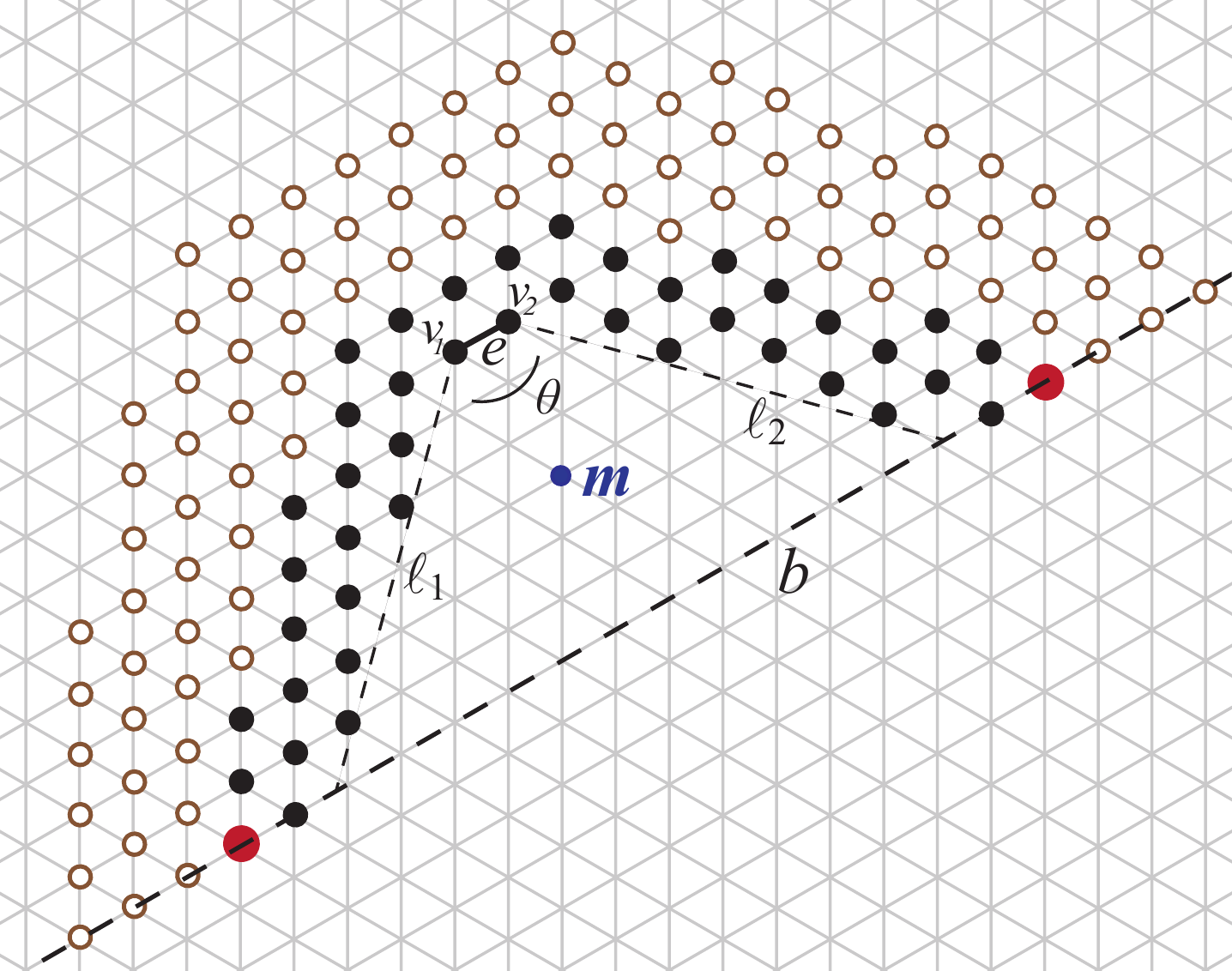}
	\caption{}
	\label{fig:V_shape_initial90}
\end{subfigure}
\caption{The land mass $L$ of constant width $5$ for (a) a small value of $\theta \sim \pi/6$ and height $8$ and (b) a large value of $\theta \sim \pi/2$ and height $9$. The initial configuration $\sigma_0$, with particles shown in black and objects enlarged and red, for (a) a small value of $\theta \sim \pi/6$ and (b) a large value of $\theta \sim \pi/2$. Point $m$ is the midpoint of the segment between the midpoints of $\ell_1$ and $\ell_2$, and $b$ is shown as a dashed line.}
\label{fig:V_shapes}
\end{figure}

From now on we will, in a slight abuse of notation, refer to the gap locations between $\ell_1$ and $\ell_2$ as \emph{the gap}.
By the \emph{bottom of the gap}, we mean the line $b$ through $\ell_1$ and $\ell_2$'s other endpoints (not $v_1$ and $v_2$).
We may assume $b$ is a line of the triangular lattice by truncating $\ell_1$ and $\ell_2$ so that both end on a lattice line; this does not change the land mass $L$.
We also assume $b \cap \ell_1$ and $b \cap \ell_2$ are not vertices of the triangular lattice $\Gtri$; if they are, we can perturb $\ell_1$ and $\ell_2$ slightly, without changing the land mass.
Note $b$ is always parallel to~$e$.

The \emph{height} of land mass $L$ is the length of a shortest path in $\Gtri$ from $v_1$ or $v_2$ to $b$ that only visits land locations; the land mass in Figure~\ref{fig:V_shape30} has height 8, while the land mass in Figure~\ref{fig:V_shape90} has height 9.
Let $m$ be the midpoint of the segment connecting the midpoints of $\ell_1$ and $\ell_2$; $m$ is in the center of the gap, halfway between $e$ and $b$.

The initial configuration $\sigma_0$ we consider is a path of width 2 lining the interior sides of the land mass $L$; see Figures~\ref{fig:V_shape_initial30}--\ref{fig:V_shape_initial90}.
We position the two fixed objects of $O$ in line $b$ at the second vertices outside $\ell_1$ and $\ell_2$, anchoring the particles on either side of the gap.
Note the height of $L$ is exactly the number of particles in $\sigma_0$ next to $\ell_1$ (or $\ell_2$), excluding $v_1$ and $v_2$.

\begin{lemma} \label{lem:V_shape_initial_n}
Let $L$ be a V-shaped land mass of height $k$ and angle $\theta$. The initial configuration $\sigma_0$ has $4k + 5$ particles and two objects.
\end{lemma}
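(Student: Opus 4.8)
The plan is a direct counting argument that decomposes $\sigma_0$ into two symmetric arms together with a constant-size apex region, and then exploits the fact (already noted above) that the height $k$ equals the number of particles adjacent to each of $\ell_1$ and $\ell_2$, excluding $v_1$ and $v_2$.

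First I would fix the inner layers. By the definition of height and the placement of $\sigma_0$ along the interior of $L$, the row of particles immediately adjacent to $\ell_1$ (other than $v_1$) has exactly $k$ particles, and symmetrically the row adjacent to $\ell_2$ (other than $v_2$) has exactly $k$ particles. Next I would account for the second layer forced by the width-$2$ requirement: alongside each of these inner rows runs a parallel outer row of lattice vertices. On the triangular lattice a row parallel to a straight lattice segment of $k$ vertices again contains $k$ vertices, so each arm contributes $2k$ particles and the two arms together contribute $4k$.

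It then remains to count the particles in the apex region, where the two width-$2$ arms join. I would argue that this region consists of exactly five particles: the two endpoints $v_1, v_2$ of the base edge $e$ (which is precisely why they are singled out and excluded in the height count), together with three further particles capping the turn between the two arms. The essential point is that this local configuration is identical for every angle $\theta \in (0,\pi)$: the careful construction of $L$ normalizes the apex to the single fixed edge $e$ with the arms emanating at angles $\pi/2 \pm \theta/2$, so the width-$2$ path wraps around the apex in the same way regardless of $\theta$ and the count of five is angle-independent. Combining the three parts gives $4k + 5$ particles, while the two objects of $O$ sit separately on line $b$ and are not counted among the particles; this matches the configurations of Figures~\ref{fig:V_shape_initial30}--\ref{fig:V_shape_initial90}.

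The main obstacle is precisely this apex bookkeeping: one must verify both that the outer layer of each arm has the same length $k$ as the inner layer (rather than $k \pm 1$, because of the lattice offset between adjacent rows) and that the number of capping particles at the apex is exactly the claimed constant and does not drift with $\theta$. Both reduce to a local inspection of the triangular lattice near $e$, enabled by the construction's use of the edge $e$ to keep the two sides of the gap from ever becoming adjacent; away from the apex the two arms are straight and the counting is routine.
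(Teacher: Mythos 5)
Your decomposition (two width-$2$ arms plus a five-particle apex) is consistent with the figures and does total $4k+5$, but your argument has a genuine gap at exactly the step you flag as ``the main obstacle,'' and the justification you offer for it rests on a false premise. You assert that away from the apex the two arms are straight, so that the outer layer of an arm trivially has the same count $k$ as the inner layer. However, $\ell_1$ and $\ell_2$ make angle $\pi/2 + \theta/2$ with $e$, and this is a lattice direction of $\Gtri$ only when $\theta = \pi/3$; for every other angle in $(0,\pi)$ the arms are staircase paths whose step pattern depends on $\theta$ (compare Figures~\ref{fig:V_shape_initial30} and~\ref{fig:V_shape_initial90}). So the statement ``a row parallel to a straight lattice segment of $k$ vertices again contains $k$ vertices'' does not apply---and even for genuinely straight segments it is not quite right, since on the triangular lattice the parallel row alongside $k$ collinear vertices contains $k-1$, $k$, or $k+1$ vertices depending on how the ends are treated. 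The equality you need (outer layer exactly $k$) is an end-effect statement: at the bottom the outer layer is truncated by the object, and at the top it attaches to the apex, and how these attachments work changes with $\theta$.

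Likewise, your claim that the apex configuration is ``identical for every angle $\theta \in (0,\pi)$'' is asserted rather than proven, and the way the two staircases meet the cap above $e$ and the objects on $b$ is not angle-independent. This is precisely why the paper's proof splits into two cases. For $\theta \leq \pi/3$ it slices $\sigma_0$ by lattice lines parallel to $e$: there are exactly $k$ such lines meeting $\ell_1$ and $\ell_2$, each containing exactly four particles, plus two particles in line $b$ and three in the line above $e$, giving $4k+2+3$. For $\theta > \pi/3$ that slicing fails (a line parallel to $e$ no longer meets each arm in exactly two particles, and there are fewer than $k$ such lines), so the paper instead slices each arm by lattice lines in a diagonal direction, obtaining $2k$ particles per arm plus five leftover particles ($2k+2k+4+1$). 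Your outline defers all of this angle-dependent verification to ``local inspection,'' but that inspection is the entire content of the lemma; to complete your proof you would have to carry out essentially the same case analysis (or an equivalent one adapted to the staircase arms), and the uniform, angle-independent shortcut you propose does not exist as stated.
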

\begin{proof}
First, suppose $\theta \leq \pi/3$, as in Figure~\ref{fig:V_shape_initial30}.
Each lattice line parallel to $e$ and intersecting $\ell_1$ and $\ell_2$, up to but not including $b$, contains exactly four particles.
There are $k$ such lattice lines.
Line $b$ contains two particles.
In the lattice line above and parallel to $e$, there are three particles.
In total, this gives $4k + 2 + 3 = 4k + 5$ particles and two objects.

Now, suppose $\theta > \pi/3$, as in Figure~\ref{fig:V_shape_initial90}; a different counting approach is necessary.
Consider the lattice line through~$v_1$ and the gap location adjacent to $v_1$ and $v_2$; this line and all lines parallel to it intersecting $\ell_1$ contain exactly two particles, and there are $k$ such lines.
The same is true for $v_2$ and $\ell_2$.
Uncounted by this approach are five additional particles: the two particles adjacent to each of the two objects, and the particle adjacent to $v_1$ and~$v_2$.
In total, this gives $2k + 2k + 4 + 1 = 4k + 5$ particles and two objects.
\end{proof}

For a given $\sigma$, let $x$ be the particle or object contained in line $b$ farthest outside of $\ell_1$, and let $y$ be the particle or object in line $b$ farthest outside of $\ell_2$.
We will refer to the perimeter of $\sigma$ traversed counterclockwise from $x$ to $y$ as the \emph{inner perimeter} of $\sigma$. 
We say the inner perimeter is \emph{above a point $p$} if $p$ is to the right of the inner perimeter traversed from $x$ to $y$; it is \emph{below a point $p$} if $p$ is to its left.

We can partition $\Omega$ into two sets $S_1$ and $S_2$, where $S_1$ contains all configurations whose inner perimeter is strictly above midpoint $m$ of the gap and $S_2$ contains all configurations whose inner perimeter goes through or below $m$.
We first prove that for $\lambda > \cbound$ (i.e., in the range of compression) and $\gamma > 1$, there is an angle $\theta_1$ such that for all $\theta < \theta_1$, $\pi(S_1)$ is exponentially small.
We then prove that for $\lambda > \cbound$ and $\gamma > \lambda^4(\cbound)^4$, there is a $\theta_2$ such that for all $\theta \in (\pi/3, \theta_2)$, $\pi(S_2)$ is exponentially small.
We expect much better bounds $\theta_1$ and $\theta_2$ can be obtained with more effort, and that these results generalize to all $\lambda > \cbound$ and $\gamma >1$, but here we simply demonstrate it is possible to give rigorous results about the dependence of the bridge structure on $\theta$.

\subsection{Proofs for Small $\theta$} \label{subsec:smalltheta}

We begin with some structural lemmas.

\begin{lemma} \label{lem:inperimabovelen}
Let $L$ be a V-shaped land mass of height $k$ and angle $\theta \leq \pi/3$. Then any path in $\Gtri$ that starts and ends at the bottom of the gap and goes strictly above the midpoint $m$ of the gap has length at least $k+1$.
\end{lemma}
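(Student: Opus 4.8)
The plan is to establish a lower bound on the length of any lattice path that starts and ends on the bottom line $b$ but detours strictly above the midpoint $m$ of the gap. The key geometric fact to exploit is that $m$ sits halfway between $e$ and $b$, so a path reaching above $m$ must travel a vertical distance of roughly half the height of the gap, both on the way up and on the way back down. Since the land mass has height $k$, the full gap spans $k$ lattice lines from $b$ up to the line just below $e$, so the midpoint $m$ is at ``vertical height'' about $k/2$ above $b$. First I would set up a coordinate system aligned with the lattice lines parallel to $b$ (equivalently parallel to $e$), indexing each lattice line by its integer distance from $b$, so that $b$ is at level $0$ and the apex edge $e$ is at level $k$ (or thereabouts), placing $m$ at level $k/2$.

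The core of the argument is a ``projection'' or monotonicity observation: each edge of the triangular lattice changes the level coordinate by at most $1$. A path that begins at level $0$, reaches a vertex strictly above $m$ (level $> k/2$), and returns to level $0$ must therefore contain at least $\lceil k/2 \rceil$ edges that increase the level and at least $\lceil k/2 \rceil$ edges that decrease it, giving a crude lower bound of about $k$ edges. To sharpen this to the claimed bound of $k+1$, I would argue that the path cannot consist purely of vertical-progress edges: because the endpoints $x$ and $y$ lie on line $b$ at horizontally separated positions (the two objects, or the outermost land/particle locations outside $\ell_1$ and $\ell_2$), the path must also cover horizontal displacement, and at the turning point near level $k/2$ at least one additional edge is forced that does not contribute net vertical progress. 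The restriction $\theta \leq \pi/3$ is what guarantees the geometry is ``narrow'' enough that the two sides $\ell_1, \ell_2$ do not splay out, so that the horizontal width at the midpoint level is small relative to $k$ and the dominant cost of any such path is genuinely the vertical climb; this is presumably why the hypothesis $\theta \leq \pi/3$ appears.

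The steps, in order, would be: (1) fix the level coordinate and record that $m$ lies at level $\lfloor k/2 \rfloor$ (or determine the exact level from the construction of $L$ and the definition of $m$ as the midpoint between $e$ and $b$); (2) prove the per-edge Lipschitz bound that any lattice edge changes the level by at most one; (3) conclude that a round trip from level $0$ to strictly above level $k/2$ and back costs at least $2\lceil k/2 \rceil \geq k$ edges; (4) account for the parity or the forced extra edge at the apex of the detour to upgrade $k$ to $k+1$, checking the cases where $k$ is even versus odd separately since $\lceil k/2\rceil$ behaves differently. I would verify the exact placement of $m$ against Figure~\ref{fig:V_shape30} to make sure the level of $m$ is what the bound requires.

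The main obstacle I anticipate is pinning down the exact level of $m$ and extracting the clean ``$+1$'' rather than just ``$\geq k$.'' The projection argument naturally yields $\geq k$, and squeezing out the final edge requires a careful case analysis: one must argue that the path cannot achieve both endpoints on $b$ and a strict excursion above $m$ using exactly the minimum number of vertical edges, because returning to line $b$ with nonzero horizontal displacement forces at least one edge that is ``wasted'' with respect to vertical progress. Handling the triangular-lattice edge directions (which allow diagonal edges making simultaneous horizontal and vertical progress) is the delicate point, since a naive $\ell_1$-style Manhattan bound does not directly apply; I would need the precise lattice geometry and the $\theta \leq \pi/3$ hypothesis to rule out a path that ``cheats'' by using diagonal edges to cover horizontal and vertical distance simultaneously, and to confirm that strictly exceeding level $k/2$ (rather than merely reaching it) is what forces the extra unit.
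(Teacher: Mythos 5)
Your core idea --- projecting onto the lattice lines parallel to $b$ and counting crossings --- is the same argument the paper uses; the paper counts at least two vertices on each of the $\lceil (k-1)/2\rceil$ lattice lines strictly between $b$ and $e$ that lie below or contain $m$, plus two vertices on $b$, plus one vertex strictly above $m$, for a total of at least $k+2$ vertices and hence $k+1$ edges. But your execution of the final step has a genuine gap. As written, your step (3) only yields $2\lceil k/2\rceil \geq k$, and the repair proposed in step (4) --- that returning to $b$ with nonzero horizontal displacement forces at least one ``wasted'' edge --- is false in the triangular lattice: for odd $k$, a ``tent'' path made entirely of diagonal edges starts on $b$, rises to level $(k+1)/2 > k/2$ over the point $m$, and returns to $b$ in exactly $k+1$ steps, with every single edge making vertical progress and the horizontal displacement absorbed for free. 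So no extra edge can ever be charged to horizontal motion, and for even $k$ your count would remain stuck at $k$. The correct fix is the one you only gesture at in your final sentence: since vertices sit on integer levels and $m$ sits at level exactly $k/2$, a vertex strictly above $m$ is at level at least $\lfloor k/2\rfloor + 1$, so the path needs at least $\lfloor k/2\rfloor+1$ ascending edges and $\lfloor k/2\rfloor+1$ descending edges, i.e., length at least $2\lfloor k/2\rfloor + 2 \geq k+1$ in both parities. No horizontal accounting is needed at all, and this edge count is exactly equivalent to the paper's vertex count.

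A second correction: the hypothesis $\theta \leq \pi/3$ is not there to keep the horizontal width at mid-height ``small relative to $k$''; the bound never uses the width. It is there to guarantee that the height $k$, defined as the length of a shortest \emph{land} path from $v_1$ or $v_2$ to $b$, coincides with the vertical lattice distance from $e$ to $b$ --- equivalently, that there are exactly $k-1$ lattice lines strictly between $b$ and $e$, so that $m$ really sits at level $k/2$ in your coordinate. For $\theta > \pi/3$ the shortest land path is forced sideways and $k$ strictly exceeds the vertical depth (this is precisely the content of Lemma~\ref{lem:q}, where the depth $q$ satisfies $k = \lceil(\tfrac{1}{2} + \tfrac{\sqrt{3}}{2}\tan\tfrac{\theta}{2})q\rceil$), so your normalization of levels --- and the lemma's conclusion --- would fail there.
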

\begin{proof}
For $\theta \leq \pi/3$, there are $k-1$ lattice lines parallel to~$b$ strictly between $b$ and $e$.
Of these lines exactly $\lceil (k-1)/2 \rceil$ are below or contain $m$.
Any path from $b$ to a location above~$m$ and back to $b$ must contain at least two vertices in each of these lattice lines, two vertices in $b$, and one vertex strictly above $m$, giving a total of \[3 + 2 \lceil (k-1)/2 \rceil \geq 3 + 2 ( (k-1)/2 ) = k + 2\] vertices.
As the length of a path is the number of edges it contains, the path must have length at least $k+1$.
\end{proof}

\begin{lemma} \label{lem:bilen}
The $i$-th lattice line below and parallel to $e$ contains $h(i)$ gap locations between $\ell_1$ and $\ell_2$, where
\[i\sqrt{3} \tan\frac{\theta}{2} \leq h(i) \leq i\sqrt{3} \tan\frac{\theta}{2} + 2.\]
\end{lemma}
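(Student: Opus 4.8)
The plan is to reduce the two-dimensional count $h(i)$ to a one-dimensional lattice-point count on a single horizontal line, and then compare that count against the Euclidean width of the gap along that line. First I would fix Cartesian coordinates aligned with the construction: place $e$ horizontally with endpoints $v_1$ and $v_2$ at distance $1$, so that the lattice lines of $\Gtri$ parallel to $e$ are exactly the horizontal rows of vertices. In a unit triangular lattice, consecutive such rows are separated by vertical distance $\sqrt{3}/2$ (the height of a unit equilateral triangle), so the $i$-th line below $e$ sits at vertical drop $d = i\sqrt{3}/2$.

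Next I would locate the two boundary crossings on this line. Each of $\ell_1$ and $\ell_2$ makes angle $\pi/2 + \theta/2$ with $e$, equivalently angle $\theta/2$ with the vertical, and the two segments diverge symmetrically outward from $v_1$ and $v_2$. Hence at vertical drop $d$ each segment has moved outward horizontally by $d\tan(\theta/2)$, so on the $i$-th line the gap occupies an open horizontal interval of Euclidean width
\[
W \;=\; 1 + 2d\tan\tfrac{\theta}{2} \;=\; 1 + i\sqrt{3}\,\tan\tfrac{\theta}{2}.
\]
The genericity built into the definition of $L$ (the freedom to perturb $\ell_1,\ell_2$ off lattice vertices) lets me treat this as a genuinely open interval with non-lattice endpoints, so no gap location lies exactly on a boundary segment.

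The last step is to count the gap locations, i.e.\ the lattice vertices strictly inside this interval. Since the vertices on any horizontal row of $\Gtri$ are spaced exactly $1$ apart, an open interval of length $W$ contains at most $\lfloor W\rfloor + 1 \le W + 1$ such vertices (if there are $h(i)$ of them, the outermost two are at distance $h(i)-1 \le W$) and at least $\lceil W\rceil - 1 \ge W - 1$ of them (the count is minimized when an endpoint is itself an integer position). Writing $a := i\sqrt{3}\tan(\theta/2) = W - 1$, these two estimates become precisely $a \le h(i) \le a + 2$, which is the claim.

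I would expect the main obstacle to be bookkeeping rather than anything conceptual: pinning down the vertical spacing $\sqrt{3}/2$ and the per-side horizontal offset $d\tan(\theta/2)$ so that $W$ is correct down to its constant term, and then tracking how the $\pm 1$ slack inherent in the open-interval lattice count combines with the ``$+1$'' coming from the width of the edge $e$ to yield a window of width exactly $2$. The only genuine edge case is a vertex lying on $\ell_1$ or $\ell_2$, and that is dispatched by the perturbation already present in the construction of $L$.
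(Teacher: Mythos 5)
Your proof is correct and follows essentially the same route as the paper's: both compute the Euclidean width of the gap along the $i$-th lattice line as $i\sqrt{3}\tan(\theta/2)+1$ (the paper via the triangle formed with the perpendicular $\ell^*$ at $v_1$, you via horizontal offsets in coordinates, which is the same trigonometry) and then bound the number of unit-spaced lattice points in a segment of that length by the length plus or minus one. Your extra care about open versus closed intervals and perturbing lattice points off $\ell_1,\ell_2$ is a minor refinement the paper elides, and the bounds hold either way.
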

\begin{proof}
Let $b_i$ be the $i$-th lattice line below and parallel to~$e$. We use trigonometry to analyze the length of~$b_i$ between~$\ell_1$ and $\ell_2$; see Figure~\ref{fig:bilen}.
Consider the triangle formed by $b_i$,~$\ell_1$, and the line perpendicular to $e$ at $v_1$, which we call $\ell^*$.
Lines~$\ell_1$ and $\ell^*$ form an angle of $\theta/2$, and the distance between $e$ and~$b_i$ along $\ell^*$ is $i\sqrt{3} / 2$.
It follows that the length of $b_i$ between~$\ell_1$ and $\ell^*$ is $i\sqrt{3} \tan(\theta/2) / 2$.
Altogether, this implies~$b_i$ between $\ell_1$ and $\ell_2$ is of length $i\sqrt{3} \tan(\theta/2) + 1$.
As each edge of the triangular lattice has length 1, this means there are between $i\sqrt{3} \tan(\theta/2)$ and $i\sqrt{3} \tan(\theta/2) + 2$ gap locations in $b_i$, as claimed.
\end{proof}

\begin{figure}
\centering
\begin{subfigure}{.45\textwidth}
	\centering
	\includegraphics[scale = 0.74]{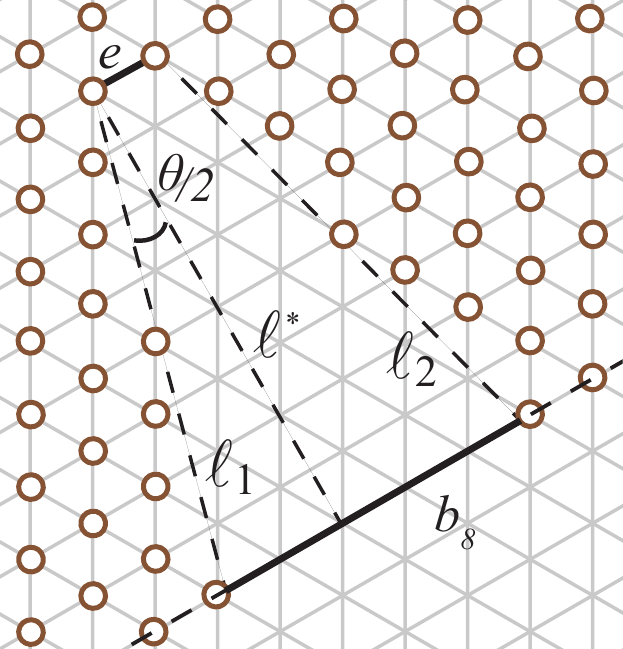}
	\caption{}
	\label{fig:bilen}
\end{subfigure}%
\begin{subfigure}{.45\textwidth}
	\centering
	\includegraphics[scale = 0.45]{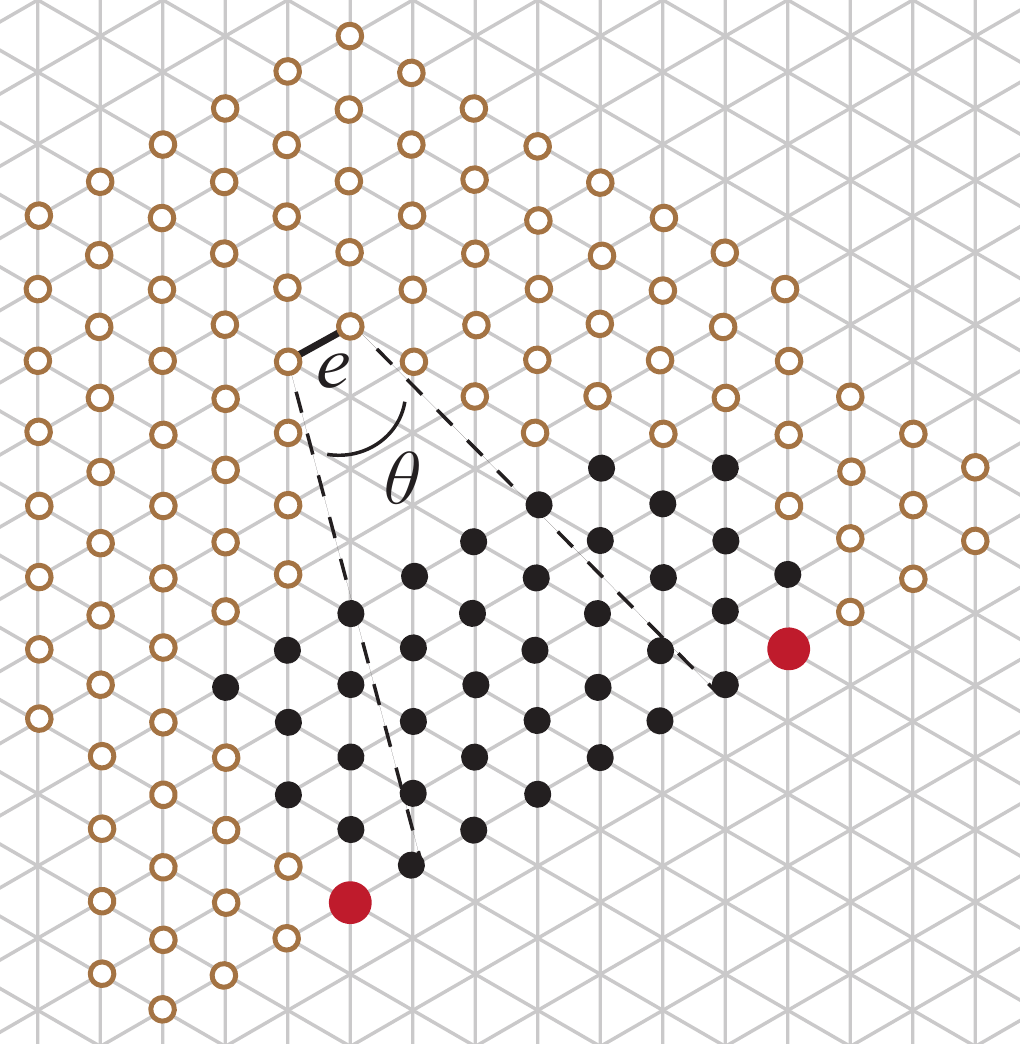}
	\caption{}
	\label{fig:sigma*}
\end{subfigure}
\caption{Figures from proofs in Section~\ref{subsec:smalltheta}. (a) A depiction of the notation used in the proof of Lemma~\ref{lem:bilen}; the intersection of $b_8$ and the gap is depicted as a solid segment, which is of length $8\sqrt{3} \tan(\theta/2)+1$ and contains 4 gap locations. (b) The configuration $\sigma^*$ used in Lemma~\ref{lem:Z1} for $\theta = \pi/6$ and $k = 8$.}
\label{fig:proofsupport}
\end{figure}

\begin{lemma} \label{lem:Z1}
Let $L$ be a V-shaped land mass of height $k$ and angle $\theta \leq \pi/3$. Then the normalizing constant $Z$ of the stationary distribution $\pi$ of $\M$	satisfies
\[Z \geq C \left[(\lambda\gamma)^{-2\sqrt{3}\tan\frac{\theta}{2}}\right]^k,\]
for a constant $C$ that depends on $\theta$, $\lambda$, and $\gamma$ but not on $k$.
\end{lemma}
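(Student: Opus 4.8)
The plan is to establish the lower bound on $Z$ by exhibiting a single, explicitly constructed configuration $\sigma^* \in \Omega$ and bounding its weight from below, since $Z = \sum_{\sigma' \in \Omega} \lambda^{-p(\sigma')}\gamma^{-g(\sigma')} \geq \lambda^{-p(\sigma^*)}\gamma^{-g(\sigma^*)}$ for any $\sigma^* \in \Omega$. The natural choice, suggested by Figure~\ref{fig:sigma*}, is the configuration in which the particles hug the bottom of the gap: they line the land mass down to line $b$ and then fill a thin strip along $b$, staying as far from the apex as possible. This configuration has its inner perimeter near $b$, so it belongs to $S_2$, and it should be reachable from $\sigma_0$ by valid moves of $\M$ so that it genuinely lies in $\Omega$. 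I would first argue reachability (or cite that such bottom-hugging configurations are reachable), then compute $p(\sigma^*)$ and $g(\sigma^*)$ and show each is at most $2\sqrt{3}\tan(\theta/2) \cdot k$ plus a $k$-independent additive constant.

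The key computation is to bound the gap perimeter $g(\sigma^*)$ and the perimeter $p(\sigma^*)$. For a thin strip of particles sitting along the bottom of the gap, the number of particles in the gap is governed by the width of the gap near line $b$, which by Lemma~\ref{lem:bilen} is proportional to $\sqrt{3}\tan(\theta/2)$ times the depth. Since $\sigma^*$ is designed to keep particles confined to a bounded-height band near $b$, both $g(\sigma^*)$ and $p(\sigma^*)$ should grow with the horizontal extent of the gap at depth $\sim k$, which is at most $k\sqrt{3}\tan(\theta/2) + O(1)$ by Lemma~\ref{lem:bilen} applied with $i \approx k$. I would show $p(\sigma^*) + g(\sigma^*) \leq 2\sqrt{3}\tan(\theta/2)\, k + O(1)$, where the $O(1)$ term absorbs the two ``arms'' of particles lining $\ell_1$ and $\ell_2$ (which contribute $O(1)$ gap perimeter since they sit on land) and the ends near the objects. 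Then
\[
Z \;\geq\; \lambda^{-p(\sigma^*)}\gamma^{-g(\sigma^*)} \;\geq\; (\lambda\gamma)^{-p(\sigma^*)-g(\sigma^*)} \;\geq\; C\left[(\lambda\gamma)^{-2\sqrt{3}\tan\frac{\theta}{2}}\right]^k,
\]
using $\lambda,\gamma > 1$ to bound each factor by $(\lambda\gamma)^{-(p+g)}$, and folding the additive $O(1)$ exponent into the constant $C = C(\theta,\lambda,\gamma)$.

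The main obstacle is the geometric bookkeeping of exactly how many gap edges and boundary edges $\sigma^*$ contributes, and in particular ensuring the bound is genuinely $2\sqrt{3}\tan(\theta/2)$ rather than a larger multiple. The factor of $2$ arises because the strip has two sides (an upper and lower boundary edge for each column of gap width), and I must be careful that the perimeter contribution and the gap-perimeter contribution are each individually controlled by $\sqrt{3}\tan(\theta/2)\,k$, using the convention that land-gap boundary edges count as half. I would handle this by summing Lemma~\ref{lem:bilen}'s width bound over the relevant lattice lines occupied by the strip and verifying that the telescoping/maximal-width estimate yields the stated coefficient; the subtlety is confirming that the configuration's height above $b$ is bounded by a $k$-independent constant so that the dominant contribution is exactly the horizontal extent at the deepest lines. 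A secondary point requiring care is confirming $\sigma^* \in \Omega$, i.e., that it is connected, hole-free, connects both objects, and is reachable from $\sigma_0$; this should follow from the same reachability arguments underlying the state space definition, but it must be asserted cleanly so the inequality $Z \geq w(\sigma^*)$ is valid.
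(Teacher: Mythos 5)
Your overall strategy is the same as the paper's: exhibit one explicit bottom-hugging configuration $\sigma^*$ (a band of particles along $b$) and lower-bound $Z$ by its single weight $\lambda^{-p(\sigma^*)}\gamma^{-g(\sigma^*)}$. However, the quantitative core of your argument has a genuine gap. You merge the two bias factors via $\lambda^{-p}\gamma^{-g} \geq (\lambda\gamma)^{-(p+g)}$ (valid since $\lambda,\gamma>1$, but lossy), and this forces you to need $p(\sigma^*)+g(\sigma^*) \leq 2\sqrt{3}\tan\frac{\theta}{2}\,k + O(1)$. That bound is impossible: the boundary of any configuration in $\Omega$ is a closed walk enclosing both objects, which lie at distance $k\sqrt{3}\tan\frac{\theta}{2}+O(1)$ apart (Lemma~\ref{lem:bilen} with $i=k$), so the perimeter \emph{alone} satisfies $p(\sigma)\geq 2\sqrt{3}\tan\frac{\theta}{2}\,k-O(1)$ for every $\sigma \in \Omega$. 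Moreover, your own $\sigma^*$ has $g(\sigma^*)=\Theta\bigl(k\tan\frac{\theta}{2}\bigr)$, since both the row along $b$ and the upper side of the band consist of gap locations; hence $p(\sigma^*)+g(\sigma^*)\approx 4\sqrt{3}\tan\frac{\theta}{2}\,k$. Your chain of inequalities therefore yields at best $Z \geq C\bigl[(\lambda\gamma)^{-4\sqrt{3}\tan\frac{\theta}{2}}\bigr]^k$, which is exponentially weaker than the lemma claims and does not imply it (downstream, it would roughly halve the threshold angle $\theta_1$ in Theorem~\ref{thm:theta_1}).

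The fix is to not merge the factors. Bound $p(\sigma^*)$ and $g(\sigma^*)$ \emph{separately}, each by $2\sqrt{3}\tan\frac{\theta}{2}\,k+O(1)$, which is exactly what the paper does: with rows of $u \leq k\sqrt{3}\tan\frac{\theta}{2}+6$ particles (Lemma~\ref{lem:bilen}) stacked $v=\lceil (4k+7)/u\rceil = O(1)$ rows high, one gets $p(\sigma^*) \leq 2u+2v-4$ and $g(\sigma^*) \leq (u-4)+z$, where $z \leq (k-v+1)\sqrt{3}\tan\frac{\theta}{2}+2$ counts the gap particles on the upper boundary. Then
\[
\lambda^{-p(\sigma^*)}\gamma^{-g(\sigma^*)} \;\geq\; \lambda^{-2\sqrt{3}\tan\frac{\theta}{2}k - c_\lambda}\,\gamma^{-2\sqrt{3}\tan\frac{\theta}{2}k - c_\gamma} \;=\; C\left[(\lambda\gamma)^{-2\sqrt{3}\tan\frac{\theta}{2}}\right]^k,
\]
with $c_\lambda, c_\gamma$ constants depending only on $\theta$; no factor is ever given away. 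Two smaller points: your phrase ``line the land mass down to line $b$'' must not mean arms of length $\Theta(k)$ along $\ell_1$ and $\ell_2$ --- that would add roughly $4k$ to the perimeter and destroy the bound for small $\theta$; all $4k+5$ particles must sit in the $O(1)$-height band, as your later remark about bounded height correctly requires. Your concern about verifying $\sigma^*\in\Omega$ (reachability from $\sigma_0$) is legitimate, though the paper glosses over this point as well.
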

\begin{proof}
Observe that $Z = \sum_{\sigma \in \Omega} \lambda^{-p(\sigma)}\gamma^{-g(\sigma)}$ satisfies $Z \geq \lambda^{-p(\sigma^*)}\gamma^{-g(\sigma^*)}$ for any $\sigma^* \in \Omega$.
We now construct a particular $\sigma^*$ (Figure~\ref{fig:sigma*}) and calculate its perimeter and gap perimeter.
Let $\sigma^*$ contain a straight line of particles along~$b$ connecting the two objects, and let $u$ be the number of objects and particles in this line.
By Lemma~\ref{lem:bilen}, since $b = b_k$ and $u$ includes two particles on land as well as two objects,
\[k\sqrt{3} \tan\frac{\theta}{2} + 4 \leq u \leq k\sqrt{3} \tan \frac{\theta}{2} + 6.\]

Continue constructing $\sigma^*$ by placing rows of $u$ particles above this initial row such that the row starts and ends on opposite sides of the gap.
By Lemma~\ref{lem:V_shape_initial_n}, there are ${4k+7}$ total objects and particles, so there will be $v = \lceil (4k+7)/u\rceil$ such rows, with the last row possibly incomplete.
We note that $v$ satisfies:
\begin{align*}
v = \left\lceil \frac{4k + 7}{u} \right\rceil &\leq \frac{4k + 7}{u} + 1 \leq \frac{4k + 7}{k\sqrt{3} \tan\frac{\theta}{2} + 4} + 1
\leq \frac{4}{\sqrt{3} \tan \frac{\theta}{2}} + \frac{7}{4} + 1
\leq \frac{4}{\sqrt{3} \tan \frac{\theta}{2} } + 3;\\
v = \left\lceil \frac{4k + 7}{u} \right\rceil &\geq \frac{4k+7}{u} \geq \frac{4k + 7}{k\sqrt{3} \tan\frac{\theta}{2} + 6}
\geq \frac{4k}{k\sqrt{3} \tan\frac{\theta}{2} + 6k}
\geq \frac{4}{\sqrt{3} \tan\frac{\theta}{2} + 6}.
\end{align*}

Configuration~$\sigma^*$ has perimeter at most $ 2u + 2v-4$ and gap perimeter at most $u - 4 + z$, where $z$ is the number of particles occupying gap locations in the upper perimeter of~$\sigma^*$.
These $z$ remaining particles must be in either the $(k - v + 1)$-th or $(k - v + 2)$-th lattice lines below $e$, so we can bound $z$ by again applying Lemma~\ref{lem:bilen}:
\[z \leq (k-v+1) \sqrt{3} \tan \frac{\theta}{2} + 2.\]
Altogether, this implies:
\[p(\sigma^*) \leq 2u + 2v - 4
\leq 2k\sqrt{3} \tan\frac{\theta}{2} + 12 + \frac{8}{\sqrt{3} \tan\frac{\theta}{2}} + 6 - 4
\leq k\left(2\sqrt{3} \tan\frac{\theta}{2}\right) + \left(\frac{8} {\sqrt{3} \tan\frac{\theta}{2}} + 14\right),\]
and
\begin{align*}
g(\sigma^*) &\leq u - 4 + z\\
&\leq k\sqrt{3} \tan\frac{\theta}{2} + 6 - 4 + (k - v + 1)\sqrt{3} \tan\frac{\theta}{2} + 2\\
&\leq 2k\sqrt{3} \tan\frac{\theta}{2} + \left(-\frac{4}{\sqrt{3}\tan\frac{\theta}{2} + 6} + 1\right) \sqrt{3} \tan\frac{\theta}{2} + 4\\
&\leq k\left(2\sqrt{3} \tan\frac{\theta}{2}\right) + \left(\sqrt{3}\tan\frac{\theta}{2} - \frac{4 \sqrt{3} \tan\frac{\theta}{2}}{\sqrt{3}\tan\frac{\theta}{2} + 6} + 4\right).
\end{align*}
We note that the second parentheses in the final bounds above for $p(\sigma^*)$ and $g(\sigma^*)$ are constants that only depend on $\theta$.
This implies that there is a constant
\[C = \lambda^{-\left(14 + \frac{8}{\sqrt{3} \tan\frac{\theta}{2}}\right)} \gamma^{-\left( \sqrt{3}\tan\frac{\theta}{2} -\frac{4 \sqrt{3} \tan\frac{\theta}{2}}{\sqrt{3}\tan\frac{\theta}{2} + 6} + 4\right)}\]
such that
\[Z \geq \lambda^{-p(\sigma^*)} \gamma^{-g(\sigma^*)} \geq C \left[\left(\lambda\gamma\right)^{-2\sqrt{3}\tan\frac{\theta}{2}}\right]^k.\]

As claimed, $C$ depends only on $\lambda$, $\gamma$, and $\theta$, and is independent of $k$.
\end{proof}

\begin{theorem} \label{thm:theta_1}
Let $\lambda > \cbound =: \cshort$ and $\gamma > 1$. Then there exists a constant $\theta_1$ such that for all V-shaped land masses with angle $\theta < \theta_1$, the probability that the inner perimeter is above midpoint $m$ is exponentially small in $k$, the height of the gap, provided $k$ is sufficiently large. In particular,
\[\theta_1 = 2\tan^{-1}\left(\frac{\log_{\lambda\gamma} \left(\lambda / \cshort\right)}{\sqrt{3}}\right).\]
\end{theorem}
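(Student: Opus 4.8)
The plan is to run a Peierls argument, bounding $\pi(S_1) = w(S_1)/Z$ by combining the lower bound on $Z$ from Lemma~\ref{lem:Z1} with an upper bound on $w(S_1) = \sum_{\sigma \in S_1}\lambda^{-p(\sigma)}\gamma^{-g(\sigma)}$. Lemma~\ref{lem:Z1} gives $Z \geq C[(\lambda\gamma)^{-2\sqrt{3}\tan(\theta/2)}]^k$, so it suffices to show that $w(S_1)$ decays fast enough that the quotient is exponentially small in the height $k$ precisely when $\theta < \theta_1$.

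To bound $w(S_1)$, I would stratify the configurations of $S_1$ by their perimeter $p$ and gap perimeter $g$. The number of connected, hole-free configurations with perimeter $p$ is at most $f(p)\nu^p$ for a subexponential $f$, by the self-avoiding walk bound of~\cite{Cannon2016,Duminil-Copin2012} (the same ingredient used in Theorem~\ref{thm:alphabridge}); this controls the entropy of each stratum, while each configuration contributes weight $\lambda^{-p}\gamma^{-g}$. The heart of the argument is the geometric input: for $\sigma \in S_1$ the inner perimeter is a walk that starts and ends on $b$ and rises strictly above $m$, so by Lemma~\ref{lem:inperimabovelen} it has length at least $k+1$. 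The key observation is that a configuration can reach above $m$ in only two ways, each costly. Either its excursion above $m$ runs along the land bordering the gap --- in which case anchoring the two objects on $b$ forces both the inner and outer boundaries to climb and descend, making the total perimeter $p$ large (of order $2k$) while the gap perimeter stays negligible --- or the excursion passes through the interior of the gap, in which case Lemma~\ref{lem:bilen}, controlling the width of each lattice row, forces a correspondingly large gap perimeter $g$ that is penalized by the $\gamma^{-g}$ factor. Quantitatively, I would establish a trade-off inequality, linear in $k$, between $p(\sigma)$ and $g(\sigma)$ valid for all $\sigma \in S_1$, so that in either regime the entropy-weighted contribution $\nu^p\lambda^{-p}\gamma^{-g}$ is suppressed.

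Assembling the pieces, I expect $w(S_1)$ to be dominated by thin configurations that hug the land and cross the gap just above $m$, contributing on the order of $(\nu/\lambda)^{2k}$. Dividing by $Z \gtrsim (\lambda\gamma)^{-2\sqrt{3}\tan(\theta/2)\,k}$ then yields $\pi(S_1) \lesssim \big[\nu^2 \lambda^{2t-2}\gamma^{2t}\big]^k$ with $t := \sqrt{3}\tan(\theta/2)$, and forcing the bracketed base below $1$ gives exactly $t < \log_{\lambda\gamma}(\lambda/\nu)$, i.e.\ $\theta < \theta_1$. Since $f$ and the polynomially many (at most $O(k^2)$) perimeter/gap-perimeter strata contribute only subexponentially in $k$, they are absorbed and the decay is genuinely exponential once $k$ is large.

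The step I expect to be the main obstacle is establishing the perimeter/gap-perimeter trade-off uniformly over $S_1$ with the right constants. The delicate case is configurations that reach above $m$ cheaply through the narrow part of the gap (for instance a thin spike rising off a bottom bridge): these have small perimeter, so controlling them requires using the $\gamma^{-g}$ suppression together with the fact that such configurations share a bridge-like base with the configuration $\sigma^*$ witnessing the lower bound on $Z$, so that this common weight cancels in the quotient $w(S_1)/Z$ rather than in an absolute bound on $w(S_1)$. Making the row-width geometry of Lemma~\ref{lem:bilen} interface cleanly with the self-avoiding walk entropy --- so that the exponents combine to give precisely the stated $\theta_1$ and not merely a weaker threshold --- is where the care is needed.
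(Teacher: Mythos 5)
Your proposal is, in skeleton, the paper's proof: a Peierls argument bounding $\pi(S_1) = w(S_1)/Z$ by combining Lemma~\ref{lem:Z1}'s lower bound $Z \geq C\left[(\lambda\gamma)^{-2\sqrt{3}\tan(\theta/2)}\right]^k$, the perimeter lower bound coming from Lemma~\ref{lem:inperimabovelen} (which gives $p(\sigma) \geq 2k+2$ for every $\sigma \in S_1$, since both the inner perimeter and the rest of the perimeter must pass above $m$), the self-avoiding-walk counting bound $f(p)\cshort^p$, and exactly your closing arithmetic: the base $(\cshort/\lambda)(\lambda\gamma)^{\sqrt{3}\tan(\theta/2)}$ is less than one precisely when $\theta < \theta_1$.

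The difference is that the step you call the ``heart'' of the argument and flag as the main obstacle --- a trade-off inequality between $p(\sigma)$ and $g(\sigma)$ over $S_1$, split into land-hugging versus gap-crossing excursions --- does not appear in the paper and is not needed. Since $\gamma > 1$ and $g(\sigma) \geq 0$, the paper simply discards the gap factor via $\gamma^{-g(\sigma)} < 1$; the numerator bound $w(S_1) \leq f_1(k)(\cshort/\lambda)^{2k+2}$ uses the perimeter bound alone, and notice that your own final assembly ($w(S_1)$ of order $(\cshort/\lambda)^{2k}$ divided by $Z$) also uses only that. All of the $\gamma$-dependence of $\theta_1$ enters through the denominator: the witness $\sigma^*$ in Lemma~\ref{lem:Z1} must itself cross the gap along line $b$, so $Z$ is only guaranteed to be as large as $C(\lambda\gamma)^{-2\sqrt{3}\tan(\theta/2)k}$, and that is where $\gamma$ (and the row-width geometry of Lemma~\ref{lem:bilen}) shows up. Your ``delicate case'' --- a thin spike rising above $m$ off a bridge near the bottom of the gap --- rests on a misreading of $S_1$: such a configuration's inner perimeter passes below $m$ along the bridge, so it belongs to $S_2$, not $S_1$. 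Membership in $S_1$ means the entire inner perimeter stays strictly above $m$, which is exactly why $p(\sigma) \geq 2k+2$ holds uniformly with no exceptional configurations, and why no cancellation between $w(S_1)$ and $Z$ beyond these two independent bounds is required. Stripped of the trade-off machinery, your plan collapses to the paper's proof.
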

\begin{proof}
Recall that $S_1 \subseteq \Omega$ is the set of configurations for which the inner perimeter is strictly above~$m$.
We show that~$S_1$ has exponentially small weight at stationarity; in particular, we show $\pi(S_1)$ is bounded above by $f_2(k) \xi^k$, where $f_2(k)$ is a subexponential function and $\xi < 1$ is a constant.

If $\sigma \in S_1$, then by Lemma~\ref{lem:inperimabovelen} we have $p(\sigma) \geq 2k+2$, as its inner perimeter --- and thus the rest of the perimeter as well --- must be above $m$.
Furthermore, because the perimeter by definition includes both objects and particles, which number $4k+7$ by Lemma~\ref{lem:V_shape_initial_n}, any configuration $\sigma \in \Omega$ has $p(\sigma) \leq 2(4k + 7) - 2 = 8k + 12$.
A result from~\cite{Cannon2016} exploits a connection to self-avoiding walks in the hexagon lattice to show the number of connected, hole-free particle configurations with perimeter $p$ is at most $f(p)(\cshort)^p$ for some subexponential function $f$. This is certainly also an upper bound on the number of configurations in $S_1$ with perimeter $p$.
Because ${\gamma^{-g(\sigma)} < 1}$, we have:
\[\pi(S_1) = \sum_{\sigma \in S_1} \frac{\lambda^{-p(\sigma)} \gamma^{-g(\sigma)}}{Z} < \sum_{p = 2k + 2}^{8k + 12} \frac{f(p)\cshort^p\lambda^{-p}}{Z}.\]
Let $f_1(k) = \sum_{p = 2k + 2}^{8k + 12}f(p)$, and note that this function is subexponential in~$k$ because its number of summands is linear in~$k$.
Because $\lambda > \cshort$ and $p \geq 2k + 2$, we have that:
\[\pi(S_1) \leq \frac{f_1(k) \left(\frac{\cshort}{\lambda}\right)^{2k + 2}}{Z}.\]

By Lemma~\ref{lem:Z1}, there is a constant $C_1 = \cshort^2/(\lambda^2 C)$ such that:
\[\pi(S_1) \leq \frac{f_1(k) \left(\frac{\cshort}{\lambda}\right)^{2k + 2}}{C \left[ \left(\lambda\gamma\right)^{-2\sqrt{3}\tan\frac{\theta}{2}} \right]^k} = C_1f_1(k) \left(\frac{\cshort(\lambda\gamma)^{\sqrt{3}\tan\frac{\theta}{2}}}{\lambda}\right)^{2k}.\]
For all $\theta < 2 \tan^{-1}\left(\log_{\lambda\gamma} (\lambda/\cshort) / \sqrt{3}\right)$, the term in parentheses above is less than one:
\[\frac{\cshort(\lambda\gamma)^{\sqrt{3}\tan\frac{\theta}{2}}}{\lambda} < \frac{\cshort(\lambda\gamma)^{\log_{\lambda\gamma}\left(\frac{\lambda}{\cbound}\right)}}{\lambda} = 1.\]
Because $C_1f_1(k)$ is a subexponential function but the term above, raised to the $2k$ power, is exponentially small, the latter eventually dominates and we conclude there is a constant $\xi < 1$ such that for sufficiently large $k$, $\pi(S_1) < \xi^k$, proving the theorem.
\end{proof}

Since $n = 4k+5$ by Lemma~\ref{lem:V_shape_initial_n}, the probability that the inner perimeter is above point $m$ is also exponentially small in $n$, the number of particles.

As an example, for $\lambda = 4$ and $\gamma = 2$ (the parameters of the simulations in Figure~\ref{fig:theta_dependence} and Figure~\ref{fig:antbridge}), our methods give $\theta_1 = 0.0879 \sim 5.03^\circ$.
However, simulations suggest this bound is far from tight.
In general, as $\lambda$ increases, so does the angle~$\theta_1$: a stronger bias towards a shorter perimeter means the bridge forms closer to the bottom of the gap and at even larger angles the bridge remains below $m$.
Similarly, as $\gamma$ decreases the bridge moves down towards the bottom of the gap and at even larger angles remains below $m$.

As with Theorem~\ref{thm:alphabridge}, we are unable to give explicit bounds on the ``sufficiently large $k$" required by the statement of Theorem~\ref{thm:theta_1} because determining the exact form of the subexponential function $f(p)$ in the above proof remains an open problem (see Section 4 of~\cite{Duminil-Copin2012}).
However, we expect and observe that the claims of this theorem hold even for the small $k$ for which our proofs do not apply.

\begin{figure}
\centering
\includegraphics[scale = 0.75]{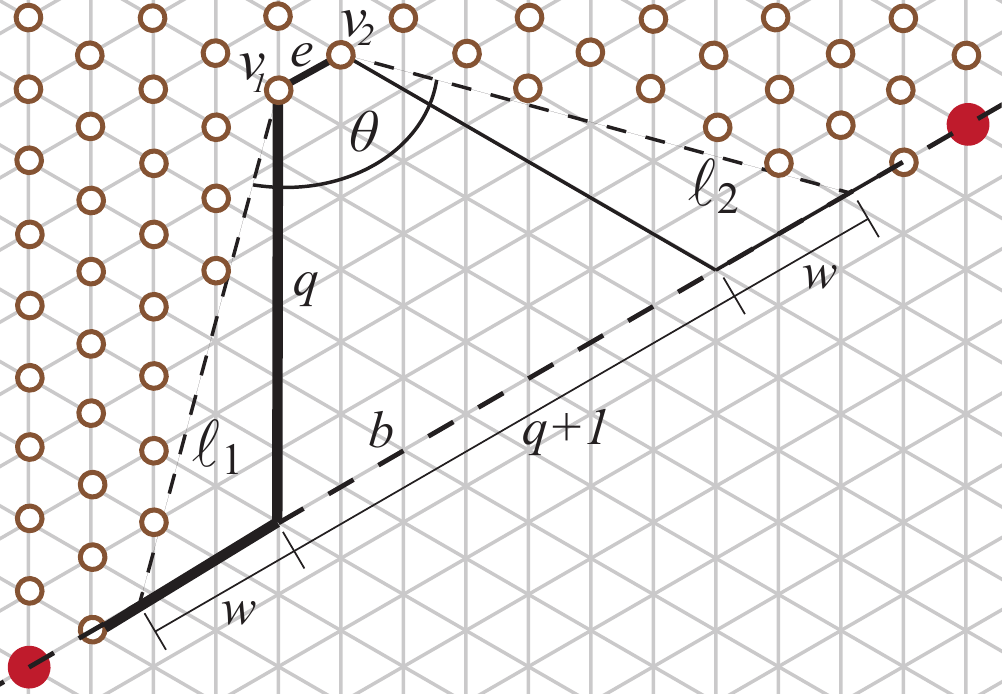}
\caption{The path of length $k$ (bold) from vertex $v_1$ to the first land location in line $b$ considered in the proof of Lemma~\ref{lem:q}; this path is used to calculate the gap height $k$ in terms of the gap depth $q$. By also considering the reflection of this path from $v_2$ (solid line), we can calculate the distance between the two objects to be $q + 2\lceil w \rceil + 3$ (Lemma~\ref{lem:p_min}).}
\label{fig:q-relationships}
\end{figure}


\subsection{Proofs for Large $\theta$} \label{subsec:largetheta}

We now consider the set $S_2 = \Omega \setminus S_1$, which consists of all configurations where the inner perimeter goes through or below $m$.
We will show that for some large angles $\theta$, for all $\lambda > \cbound$ and $\gamma > (\cbound)^4 \lambda^4$, $\pi(S_2)$ is exponentially small.
While a lower bound on $\gamma$ is necessary for the proofs presented below, we believe this is an artifact of our proof rather than the problem itself and suspect this requirement can be loosened or removed altogether.

For $\theta \geq \pi/3$, it is no longer true that a V-shaped land mass of height $k$ has exactly $k-1$ lattice lines between $b$ and~$e$.
We define a new quantity $q$, the \emph{gap depth}, as the length of a shortest path from $e$ to $b$ in $\Gtri$; unlike in the definition of the height $k$ of a gap, this shortest path is not required to stay on land locations.
The Euclidean distance between $e$ and $b$ is then $\sqrt{3}q / 2$.
Furthermore, $q$ can be expressed as a function of $k$ and $\theta$.

\begin{lemma}\label{lem:q}
For a V-shaped land mass of height $k$ and angle $\theta \geq \pi/3$, the gap depth $q$ satisfies
\[k = \left\lceil \left(\frac{1}{2} + \frac{\sqrt{3}}{2} \tan\frac{\theta}{2} \right) q \right\rceil.\]
\end{lemma}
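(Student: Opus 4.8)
The plan is to convert the two definitions into explicit lattice geometry, compute the endpoints, and then evaluate the length of a shortest on-land path from $v_1$ to $b$ exactly. First I would fix coordinates: place $v_1$ at the origin with $e$ on a horizontal lattice line (call it line $0$) and the gap opening downward, so that land near $v_1$ lies to the left of $\ell_1$. Since $e$ and $b$ are parallel lattice lines and a shortest lattice path between two parallel lattice lines spends exactly one descending edge per intervening line, $b$ is precisely the $q$-th lattice line below $e$, at perpendicular distance $\sqrt{3}q/2$ (consistent with the distance already stated). Reusing the triangle of Lemma~\ref{lem:bilen} --- with $\ell^*$ the perpendicular to $e$ at $v_1$ and $\ell_1$ making angle $\theta/2$ with $\ell^*$ --- the intersection $\ell_1 \cap b$ sits at horizontal offset $\frac{\sqrt{3}q}{2}\tan\frac{\theta}{2}$ to the left of $v_1$. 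The endpoint of the height-realizing path is the first lattice vertex on $b$ strictly outside $\ell_1$, i.e.\ just to the left of this intersection; the standing assumption that $b \cap \ell_1$ is not a lattice vertex makes this ``first vertex'' unambiguous.

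For the lower bound I would work with the two adjacent triangular-lattice generators ``down-left'' $\left(-\tfrac12, -\tfrac{\sqrt{3}}{2}\right)$ and ``left'' $(-1, 0)$. Any on-land path from $v_1$ to line $b$ must cross all $q$ intervening lattice lines, so at least $q$ of its edges are descending; each descending edge shifts the horizontal coordinate leftward by at most $\tfrac12$ (down-left contributes $-\tfrac12$, down-right $+\tfrac12$), so $q$ descending edges supply at most $q/2$ of leftward displacement. Because the endpoint lies at horizontal distance at least $\frac{\sqrt{3}q}{2}\tan\frac{\theta}{2}$ left of $v_1$, the deficit must be made up by horizontal edges, each contributing $1$, forcing at least $\frac{\sqrt{3}q}{2}\tan\frac{\theta}{2} - \frac{q}{2}$ of them. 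Summing gives $k \ge q\left(\tfrac12 + \tfrac{\sqrt{3}}{2}\tan\frac{\theta}{2}\right)$, and since $k$ is an integer, $k \ge \lceil \cdot \rceil$.

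For the matching upper bound I would exhibit an explicit monotone staircase using exactly $q$ down-left edges and $c$ left edges that hugs $\ell_1$ from the land side; since $w \ge 2$, this staircase stays within the width-$w$ land band and terminates at the first land vertex of $b$. This realizes the landing vertex at horizontal position $-(q/2 + c)$, so $k = q/2 + |x_{\text{land}}|$. The final step is a parity observation: on lattice line $q$ the admissible horizontal coordinates share the parity of $q/2$, so $q/2 + |x_{\text{land}}|$ is automatically an integer and $|x_{\text{land}}|$ is exactly the round-up of $\frac{\sqrt{3}q}{2}\tan\frac{\theta}{2}$ to the lattice; hence $q/2 + |x_{\text{land}}| = \left\lceil q/2 + \frac{\sqrt{3}q}{2}\tan\frac{\theta}{2}\right\rceil$, which combined with the lower bound yields the claimed identity. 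The same down-left/left decomposition covers the entire range $\theta \in [\pi/3, \pi)$, since for $\theta \ge \pi/3$ the line $\ell_1$ tilts at least $30^\circ$ from vertical and down-left steps alone are too rightward to stay on land, making the horizontal steps genuinely necessary.

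The routine parts are the trigonometry and the two-generator distance formula; the main obstacle is the discretization bookkeeping --- showing that the interaction of the ``first land vertex'' with the $q/2$ offset produces exactly a ceiling with no off-by-one, and that the staircase provably never crosses $\ell_1$. This is precisely where the construction's insistence that $b \cap \ell_1$ be off-lattice earns its keep, so I would isolate that rounding argument and verify it against the boundary case $\theta = \pi/3$ (where the coefficient collapses to $1$ and $k = q$, matching the small-$\theta$ regime of Section~\ref{subsec:smalltheta}).
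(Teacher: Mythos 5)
Your proposal is correct, and it in fact proves slightly more than the paper's own argument spells out. The paper's proof is a one-path computation: it takes the bold path of Figure~\ref{fig:q-relationships} --- the lattice diagonal leaving $v_1$ at angle $2\pi/3$ with $e$, followed by a walk along $b$ to the first land location --- asserts that its length is $k$, and computes the horizontal correction $w = \frac{q\sqrt{3}}{2}\tan\frac{\theta}{2} - \frac{q}{2}$ by the law of sines in the obtuse triangle bounded by that diagonal, $b$, and $\ell_1$, giving $k = q + \lceil w \rceil = \lceil\left(\frac{1}{2} + \frac{\sqrt{3}}{2}\tan\frac{\theta}{2}\right)q\rceil$. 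You arrive at the same key offset $\frac{\sqrt{3}q}{2}\tan\frac{\theta}{2}$ by the same trigonometry, but you replace the paper's unproved assertion that this particular path realizes $k$ with a genuine two-sided optimality argument: a displacement-accounting lower bound (at least $q$ descending edges, each contributing at most $1/2$ of leftward motion, the rest supplied by horizontal edges) and an explicit on-land staircase as the matching upper bound, with the off-lattice assumption on $b \cap \ell_1$ delivering the exact ceiling. This added rigor is not gratuitous: $k$ is defined as the length of a shortest path visiting only \emph{land} locations, whereas for $\theta > \pi/3$ every vertex of the paper's diagonal except $v_1$ is a gap location (a fact the paper itself invokes in Lemma~\ref{lem:in_gap}), so the paper's proof tacitly relies on exactly the two bounds you supply. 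Two small repairs to your outline: in the lower bound, allow for ascending edges by observing that a descending/ascending pair yields at most $1$ unit of leftward displacement per $2$ edges, which is less efficient than a horizontal edge, so non-monotone paths only come out longer; and note that a height-realizing path could instead start at $v_2$ or end outside $\ell_2$, which the mirror symmetry of the construction handles. With those one-line fixes your plan goes through, including the band-membership bookkeeping you flagged: every vertex of the greedy staircase hugging $\ell_1$ has its down-right neighbor strictly between $\ell_1$ and $\ell_2$, so it lies within lattice distance $1 \le w$ of the gap and is indeed a land location.
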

\begin{proof}
Consider the path from $v_1$ to line $b$ that leaves $v_1$ forming an angle of $2\pi/3$ with $e$, and then proceeds along $b$ until it reaches a land location; see Figure~\ref{fig:q-relationships}, where this path is shown in bold.
The total length of this path is $k$, and its first segment from $v_1$ to $b$ is length~$q$.
Let $w$ be the length of $b$ between this path's turning point and~$\ell_1$; then $k = q + \lceil w \rceil$.
This path and~$\ell_1$ form an obtuse triangle where two sides have lengths~$q$ and $w$, respectively.
The angle opposite the side of length $w$ is $\theta/2 - \pi/6$, while the angle opposite the side of length $q$ is $\pi - 2\pi/3 - (\theta/2-\pi/6) = \pi/2 - \theta/2$.
Length $w$ can be calculated in terms of length $q$ with the law of sines:
\[w = \frac{\sin\left(\frac{\theta}{2} - \frac{\pi}{6}\right)}{\sin \left(\frac{\pi}{2} - \frac{\theta}{2}\right)}\ q
= \frac{\sin\frac{\theta}{2} \cos\frac{\pi}{6} - \cos\frac{\theta}{2} \sin\frac{\pi}{6}}{\cos\frac{\theta}{2}}\ q
= \frac{\frac{\sqrt{3}}{2} \sin\frac{\theta}{2} - \frac{1}{2}\cos\frac{\theta}{2}}{\cos\frac{\theta}{2}}\ q
= \frac{q\sqrt{3}}{2} \tan\frac{\theta}{2} - \frac{q}{2}.\]

Because $q$ is an integer, it follows that
\[k = q + \lceil w \rceil
= \left\lceil q + \frac{q\sqrt{3}}{2} \tan\frac{\theta}{2} -\frac{q}{2} \right\rceil
= \left\lceil \left(\frac{1}{2} + \frac{\sqrt{3}}{2} \tan\frac{\theta}{2} \right)q \right\rceil,\]
which is the desired result.
\end{proof}

For simplicity, we do the bulk of our analysis using $q$ instead of $k$.
The previous lemma shows that proving an expression is exponentially small in $q$ implies it is also exponentially small in $k$.

\begin{lemma} \label{lem:p_min}
For any V-shaped land mass of gap depth $q$ and angle $\theta \geq \pi/3$, any configuration $\sigma$ has perimeter at least
\[p(\sigma) \geq \left(2\sqrt{3} \tan\frac{\theta}{2}\right)q + 6.\]
\end{lemma}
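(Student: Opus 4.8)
The plan is to reduce the statement to a single ``out and back'' observation: the perimeter of a connected, hole-free configuration is at least twice the graph distance between any two of its boundary vertices. I would apply this to the extreme vertices $x$ and $y$ of the inner perimeter, which lie in line $b$ and are on the boundary of $\sigma$ by their very definition (nothing is occupied farther outside $\ell_1$ or $\ell_2$ in $b$).

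First I would establish $p(\sigma) \ge 2\,d(x,y)$, where $d(\cdot,\cdot)$ is graph distance in $\Gtri$. Since $\sigma$ is connected and hole-free (Lemma~\ref{lem:connectholes}), its boundary is a single closed walk passing through both $x$ and $y$; cutting this walk at $x$ and $y$ yields two walks, each joining $x$ to $y$, so each has length at least $d(x,y)$. The two lengths sum to $p(\sigma)$, giving the inequality. This argument is insensitive to cut-edges being counted twice on the boundary.

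Next I would replace $d(x,y)$ by the fixed distance between the two objects $o_1, o_2 \in O$. Both objects lie in line $b$ and are candidates for $x$ and $y$, so $x$ and $y$ are at least as far outside $\ell_1$ and $\ell_2$ as $o_1$ and $o_2$; as all four points are collinear on the lattice line $b$, this gives $d(x,y) \ge d(o_1,o_2)$. To compute $d(o_1,o_2)$ I would reuse the path from the proof of Lemma~\ref{lem:q}, which leaves $v_1$ at angle $2\pi/3$ to $e$, descends to $b$, and runs along $b$, together with its mirror image from $v_2$ (Figure~\ref{fig:q-relationships}). Because both objects sit on the lattice line $b$, a shortest lattice path between them runs along $b$, and accounting for the two-vertex offset of each object yields $d(o_1,o_2) = q + 2\lceil w\rceil + 3$, where $w = \tfrac{\sqrt{3}}{2}\,q\tan\tfrac{\theta}{2} - \tfrac{q}{2}$ is the side length from that lemma.

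Finally I would combine these bounds, using $\lceil w\rceil \ge w$:
\[ p(\sigma) \ge 2\,d(o_1,o_2) = 2q + 4\lceil w\rceil + 6 \ge 2q + 4w + 6 = \left(2\sqrt{3}\tan\tfrac{\theta}{2}\right)q + 6, \]
which is exactly the claimed bound. I expect the main obstacle to be the exact determination of $d(o_1,o_2)$: one must verify that a shortest lattice path between two collinear points of $b$ cannot be shortened by detouring through the gap, and then carefully track the two-vertex offset of each object together with the ceiling from $\lceil w\rceil$ to land on the precise additive constant $3$. The perimeter inequality and the closing arithmetic are routine by comparison.
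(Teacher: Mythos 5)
Your proposal is correct and follows essentially the same route as the paper's proof: bound the perimeter by twice the distance between the two objects, compute that distance as $q + 2\lceil w\rceil + 3$ via the path of Lemma~\ref{lem:q} and its mirror image, and finish with the same arithmetic using $\lceil w\rceil \geq w$. Your extra steps (cutting the closed boundary walk at the extreme boundary vertices $x,y$ and then comparing $d(x,y)$ to $d(o_1,o_2)$ by collinearity) merely supply rigor for the "perimeter is at least twice this distance" claim that the paper asserts without justification.
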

\begin{proof}
We first bound the distance between the two objects on either side of the gap.
Using the length $w$ from the proof of Lemma~\ref{lem:q}, the distance between the two objects in any configuration is $q + 2\lceil w \rceil + 3 \geq q + 2w + 3$ (see Figure~\ref{fig:q-relationships}).
The perimeter of any particle configuration is at least twice this distance, so for any $\sigma$,
\[p(\sigma) \geq 2q + 4w + 6
= 2q + 4 \left(\frac{q\sqrt{3}}{2} \tan\frac{\theta}{2} -\frac{q}{2} \right) + 6
= \left(2\sqrt{3}\tan\frac{\theta}{2} \right)q + 6,\]
which is the desired bound.
\end{proof}

\begin{figure}
\centering
\begin{subfigure}{.45\textwidth}
	\centering
    \includegraphics[scale = 0.65]{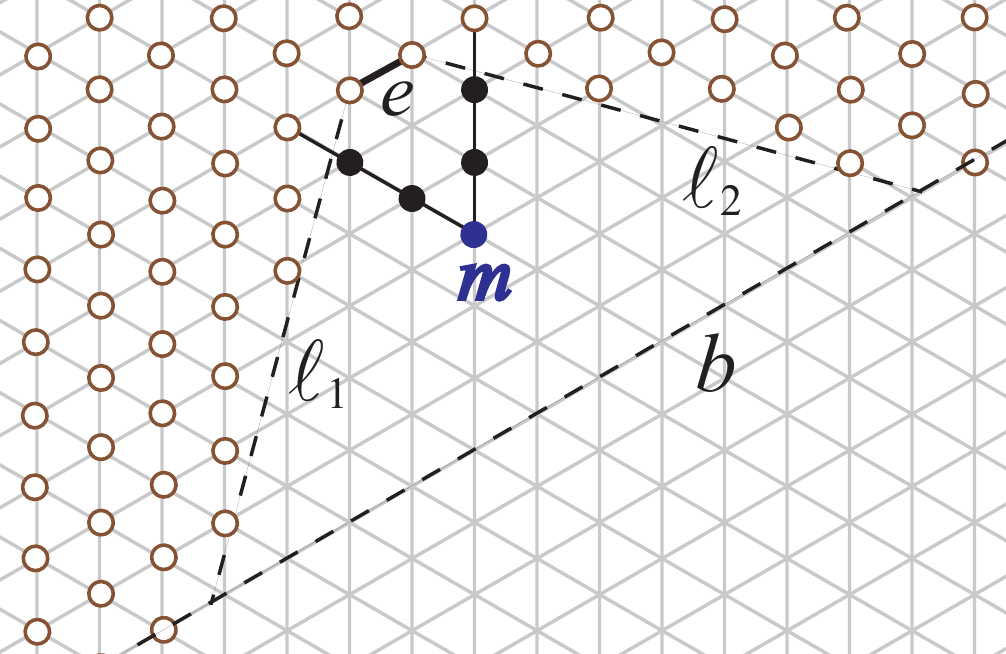}
    \caption{}
    \label{fig:shortpath}
\end{subfigure}%
\begin{subfigure}{.45\textwidth}
	\centering
    \includegraphics[scale = 0.88]{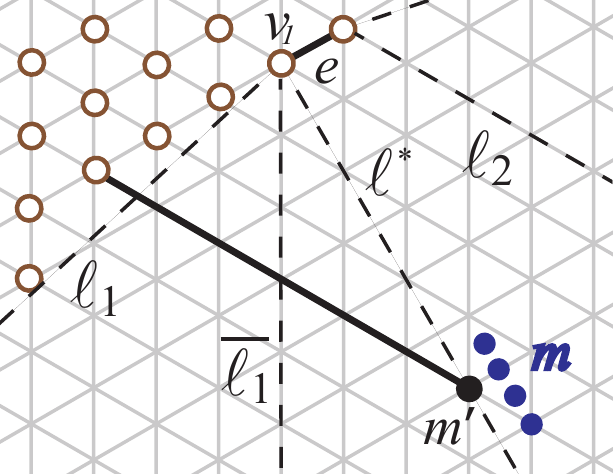}
    \caption{}
    \label{fig:pathintersect}
\end{subfigure}
\caption{From the proof of Lemma~\ref{lem:in_gap}: (a) An example of a shortest path between land locations on opposite sides of the gap passing through midpoint $m$. (b) The four possible locations for midpoint $m$ for which a shortest path passing through or below $m$ contains $m'$, and a shortest path from $m'$ to a land location (solid line).}
\label{fig:max_inner_perim_m}
\end{figure}

\begin{lemma} \label{lem:in_gap}
For any V-shaped land mass of gap depth $q$ and angle $\theta > \pi/3$, any configuration $\sigma \in S_2$ (passing below or through midpoint $m$ of the gap) has gap perimeter $g(\sigma) \geq \frac{q}{2}$.
\end{lemma}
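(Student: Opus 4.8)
The plan is to reduce the gap-perimeter bound to a single lattice-distance computation at the center of the gap. By the characterization of gap perimeter established in the proof of Lemma~\ref{lem:local}, $g(\sigma)$ equals the number of (particle, perimeter-occurrence) incidences at gap locations, so it suffices to exhibit at least $q/2$ such incidences on the inner perimeter alone. Since $\sigma \in S_2$, the inner perimeter --- a boundary walk from $x$ to $y$, both of which lie on line $b$ and hence on land --- passes through or below $m$. I would fix a lattice vertex $m'$ in the gap, at or below $m$ and near its horizontal center, through which this walk passes; pinning down $m'$ relative to $m$ (which need not be a lattice vertex) is exactly where the ``four possible locations'' of Figure~\ref{fig:pathintersect} come in.

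First I would establish the key geometric estimate: the lattice graph distance from $m'$ to the nearest land location is at least $q/4$. Land consists of the vertices lying outside the arms $\ell_1,\ell_2$ together with the apex endpoints of $e$, so the nearest land is reached either by crossing sideways to an arm or by climbing toward the apex. Using that $m$ sits at perpendicular distance $\sqrt3\,q/4$ from both $e$ and $b$, the perpendicular Euclidean distance from $m'$ to $\ell_1$ (equivalently $\ell_2$) is about $\tfrac{\sqrt3}{4}q\sin(\theta/2)$; converting to lattice steps (rows parallel to an arm are spaced $\sqrt3/2$ apart, one edge per row) turns this into a graph distance of about $\tfrac12 q\sin(\theta/2)$, which is at least $q/4$ precisely when $\theta \geq \pi/3$ and strictly larger for $\theta > \pi/3$. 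A parallel computation places the apex vertices at graph distance $\approx q/2$, so the arms are the closest land and the bound $q/4$ stands. This distance computation, carried out carefully on the lattice (Figures~\ref{fig:shortpath}--\ref{fig:pathintersect}), is the main obstacle, since it must track the exact row structure of the gap (cf. Lemmas~\ref{lem:bilen} and~\ref{lem:q}) and the rounding caused by $m$ and the intersections $b\cap\ell_1,\, b\cap\ell_2$ not being lattice vertices.

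Given the estimate, the conclusion follows by a two-sided counting argument. Traversing the inner perimeter away from $m'$ toward $x$, the walk cannot reach a land vertex in fewer than $q/4$ steps, so its first $q/4$ vertices all lie at gap locations; traversing toward $y$ yields another $q/4$ gap occurrences at distinct positions of the walk. These $2\cdot(q/4)=q/2$ perimeter occurrences at gap locations are each counted, with multiplicity, in $g(\sigma)$, giving $g(\sigma)\ge q/2$. The only bookkeeping subtlety is the single occurrence of $m'$ shared by the two traversals, together with the off-by-one arising when $m'$ lies strictly below $m$; both are absorbed by the strict inequality $\theta>\pi/3$ (which makes the arm distance strictly exceed $q/4$) and by the careful placement of $m'$ in the geometric step. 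I expect this lattice-distance bookkeeping, rather than the combinatorial counting, to be the delicate part of the argument.
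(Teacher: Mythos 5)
Your proof follows the same skeleton as the paper's (a deep central point $m'$ on the inner perimeter, a lower bound on the lattice distance from $m'$ to land, and a two-sided count of gap occurrences), but your key estimate is justified by an argument that is false in exactly the regime the lemma covers. You convert the Euclidean distance $\approx \frac{\sqrt{3}}{4}q\sin(\theta/2)$ from $m'$ to $\ell_1$ into a graph distance via ``rows parallel to an arm are spaced $\sqrt{3}/2$ apart, one edge per row.'' But the arm $\ell_1$ makes angle $\pi/2+\theta/2$ with $e$, so it is parallel to a lattice direction only at $\theta=\pi/3$; for every $\theta>\pi/3$ there are no lattice rows parallel to the arm, and a single lattice edge can advance up to $\cos(\pi/3-\theta/2)>\sqrt{3}/2$ of a unit toward $\ell_1$ (at $\theta=2\pi/3$ the perpendicular to the arm is itself a lattice direction, so one edge advances a full unit). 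The correct lower bound on the graph distance is therefore $\frac{\sqrt{3}}{4}q\sin(\theta/2)/\cos(\pi/3-\theta/2)-O(1)$, not $\frac{1}{2}q\sin(\theta/2)$. It happens that the leading coefficient is still at least $1/4$ when $\theta\ge\pi/3$ --- expanding $\cos(\pi/3-\theta/2)$ shows this is equivalent to $\tan(\theta/2)\ge 1/\sqrt{3}$ --- but that is an extra argument your writeup does not contain, and it gives \emph{equality}, i.e.\ no slack, in the leading term as $\theta\to\pi/3$. Consequently your plan to absorb the bookkeeping losses (the shared occurrence of $m'$, the placement of $m'$ below $m$, the non-lattice intersections $b\cap\ell_1$ and $b\cap\ell_2$) ``by the strict inequality $\theta>\pi/3$'' does not work uniformly: the available slack is proportional to $\bigl(\tan(\theta/2)-1/\sqrt{3}\bigr)q$, which for $\theta$ near $\pi/3$ is smaller than those $O(1)$ losses unless $q$ is large, whereas the lemma is stated, and applied, for every gap depth $q$.

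The idea your proof is missing --- and the reason the paper's proof suffers no such loss --- is to never measure distance to the land at all. The paper introduces the auxiliary line $\overline{\ell_1}$ from $v_1$ at angle $2\pi/3$ with $e$: this \emph{is} a lattice line, and it lies inside the gap precisely because $\theta>\pi/3$ (this is exactly where the hypothesis enters). Any path from $m'$ to a land location must first share a vertex with $\overline{\ell_1}$; lattice lines parallel to $\overline{\ell_1}$ really are spaced $\sqrt{3}/2$ apart with each edge crossing at most one of them; and every vertex of the path up to and including that crossing vertex (other than $v_1$) is a gap location. This makes the count exact --- at least $\lfloor(q+1)/4\rfloor+1$ gap locations on each side, with no angle-dependent conversion factor and no uncontrolled error terms --- which is what yields $g(\sigma)\ge q/2$ for all $q$ and all $\theta>\pi/3$. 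If you want to salvage your version, replace your target $\ell_1$ by $\overline{\ell_1}$; your row-spacing claim then becomes literally true and the rest of your counting goes through.
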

\begin{proof}
If $\sigma \in S_2$, i.e., if its inner perimeter passes through or below $m$, then it must contain a path that starts and ends at land locations and also passes through or below $m$.
We consider all such paths and give a lower bound on the number of gap locations they must contain.
The shortest such paths start and end on opposite sides of the gap, so we focus on paths of this type.

If $m$ is a vertex of $\Gtri$, one shortest path between land locations passing through $m$ leaves $m$ along the two lattice lines not parallel to $e$ and follows them until reaching the land mass, as in Figure~\ref{fig:shortpath}.
If $m$ is on a lattice edge, a shortest path passing below $m$ is constructed in the same way, beginning from each of the edge's endpoints.
Otherwise, if $m$ is neither a lattice point nor on a lattice edge, the same procedure is followed for the first lattice point or lattice edge below $m$.
In all cases, let $m'$ be the point of intersection between this path and $\ell^*$, the line perpendicular to~$e$ through~$v_1$.
Figure~\ref{fig:pathintersect} shows all the possible locations of~$m$ producing a particular~$m'$.
Inspection shows that in all of these cases,~$m'$ is contained in the $2\lfloor \frac{q+1}{4} \rfloor$-th lattice line below~$e$.

Let $\overline{\ell_1}$ be the line from $v_1$ to $b$ forming an angle of $2\pi/3$ with $e$; see Figure~\ref{fig:pathintersect}.
Because $\theta > \pi/3$, all vertices of $\Gtri$ contained in $\overline{\ell_1}$, except $v_1$, are gap locations.
Any shortest path from $m'$ to a land location must share a vertex of $\Gtri$ with line $\overline{\ell_1}$.
Because $m'$ is in the $2\lfloor \frac{q+1}{4} \rfloor$-th lattice line below~$e$, any path from $m'$ to $\overline{\ell_1}$ is of length at least $\lfloor \frac{q+1}{4} \rfloor$ and contains at least $\lfloor \frac{q+1}{4} \rfloor + 1$ gap locations, including both of its endpoints.
By symmetry, this means any path between land locations passing below $m$, and thus any inner perimeter of a particle configuration passing below $m$, contains at least
\[2\left(\left\lfloor \frac{q+1}{4} \right\rfloor + 1\right) \geq 2\left(\frac{q-2}{4} + 1\right) \geq \frac{q}{2}\]
gap locations, as claimed.
\end{proof}

\begin{theorem}	\label{thm:theta_2}
Let $\lambda > \cbound =: \cshort$ and $\gamma > (\lambda\cshort)^4$. Then there exists a constant $\theta_2 > \pi/3$ such that for all V-shaped land masses with angle $\theta \in (\pi/3, \theta_2)$, the probability that the inner perimeter goes through or below midpoint $m$ is exponentially small in $k$, the height of the gap, provided $k$ is sufficiently large.
\end{theorem}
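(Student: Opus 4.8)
The plan is to mirror the proof of Theorem~\ref{thm:theta_1}, but with the roles of perimeter and gap perimeter partly exchanged: the exponential suppression of $\pi(S_2)$ will now come primarily from the gap-perimeter penalty $\gamma$ rather than from $\lambda$. I work with the gap depth $q$ in place of the height $k$; by Lemma~\ref{lem:q} we have $q = \Theta(k)$, so any quantity that is exponentially small in $q$ is also exponentially small in $k$. For brevity write $T := \sqrt{3}\tan(\theta/2)$, noting $T > 1$ exactly when $\theta > \pi/3$. First I would bound $w(S_2) = \sum_{\sigma \in S_2}\lambda^{-p(\sigma)}\gamma^{-g(\sigma)}$ from above. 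Every $\sigma \in S_2$ satisfies $g(\sigma) \ge q/2$ by Lemma~\ref{lem:in_gap} and $p(\sigma) \ge 2Tq + 6$ by Lemma~\ref{lem:p_min}, while every configuration has perimeter at most $8k+12 = O(q)$. Bounding $\gamma^{-g(\sigma)} \le \gamma^{-q/2}$ uniformly, stratifying the remaining sum by perimeter value, and invoking the bound from~\cite{Cannon2016} that the number of connected, hole-free configurations of perimeter $p$ is at most $f(p)\cshort^{p}$ for subexponential $f$ (together with $\lambda > \cshort$, so that $(\cshort/\lambda)^p$ is largest at the minimal admissible $p = 2Tq+6$) yields
\[ w(S_2) \;\le\; \gamma^{-q/2}\, f_1(q)\, \Bigl(\tfrac{\cshort}{\lambda}\Bigr)^{2Tq+6}, \]
where $f_1$ is subexponential, being a sum of $O(q)$ subexponential terms.

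Next I would lower-bound the partition function $Z$ by exhibiting a single ``top'' configuration $\sigma^\ast \in \Omega$ that hugs the land and crosses the gap near the narrow apex $e$; this is the large-$\theta$ analogue of Lemma~\ref{lem:Z1}. Concretely, $\sigma^\ast$ is the width-$2$ path running up the interior of one arm, across near $e$, and down the interior of the other arm --- essentially $\sigma_0$ itself --- which connects the two objects, is hole-free, and lies almost entirely on land. Counting particles via Lemma~\ref{lem:V_shape_initial_n} and converting height to depth via Lemma~\ref{lem:q} gives $p(\sigma^\ast) \le (2+2T)q + O(1)$, while crossing at the apex keeps $g(\sigma^\ast) = O(1)$, since by Lemma~\ref{lem:bilen} the topmost lattice lines contain only $O(1)$ gap locations. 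Hence $Z \ge \lambda^{-p(\sigma^\ast)}\gamma^{-g(\sigma^\ast)} \ge c\,\lambda^{-(2+2T)q}$ for a constant $c$ independent of $q$.

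Combining the two bounds, the subexponential and constant factors aside, gives
\[ \pi(S_2) = \frac{w(S_2)}{Z} \;\le\; f_2(q)\,\Bigl[\gamma^{-1/2}\,\cshort^{\,2T}\,\lambda^{2}\Bigr]^{q} \]
for a subexponential $f_2$. The bracketed base is strictly less than $1$ exactly when $\gamma > \cshort^{4T}\lambda^{4}$. Since $T \to 1$ as $\theta \to (\pi/3)^{+}$ and, by hypothesis, $\gamma > (\lambda\cshort)^{4} = \cshort^{4}\lambda^{4}$, this inequality holds with room to spare at $\theta = \pi/3$ and, by continuity of $T$ in $\theta$, persists on an interval $(\pi/3, \theta_2)$; solving $\gamma = \cshort^{4T}\lambda^{4}$ for the threshold yields $\theta_2 = 2\tan^{-1}\!\bigl((\log\gamma - 4\log\lambda)/(4\sqrt{3}\log\cshort)\bigr)$, and the condition $\gamma > (\lambda\cshort)^{4}$ is precisely what forces $\theta_2 > \pi/3$. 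For $\theta \in (\pi/3, \theta_2)$ the base is a constant $\xi < 1$, so $\pi(S_2) \le f_2(q)\,\xi^{q}$ is exponentially small in $q$, hence in $k$, for $k$ sufficiently large.

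The main obstacle I anticipate is the $Z$ lower bound of the second paragraph: unlike the small-$\theta$ case, there is no pre-established lemma bounding $Z$ for $\theta > \pi/3$, so I must build $\sigma^\ast$ explicitly and verify both that it genuinely lies in $\Omega$ (connected, hole-free, connecting the objects) and that its perimeter is as small as $(2+2T)q + O(1)$ with only $O(1)$ gap perimeter. Getting the perimeter constant exactly right is what matters: a looser top configuration would force a hypothesis stronger than $\gamma > (\lambda\cshort)^{4}$, so the near-optimality of crossing at the apex $e$ --- where Lemma~\ref{lem:bilen} makes the gap narrowest --- is what makes the stated constant tight. The geometric bookkeeping near $e$, where the two arms meet and the land/gap classification is most delicate, is where I expect the calculation to demand the most care.
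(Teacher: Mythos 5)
Your proposal is correct and follows essentially the same route as the paper's proof: the same Peierls argument stratifying $S_2$ by perimeter using Lemmas~\ref{lem:q}, \ref{lem:p_min}, and~\ref{lem:in_gap} together with the $f(p)\cshort^p$ counting bound, the same lower bound $Z \geq c\,\lambda^{-(2+2\sqrt{3}\tan(\theta/2))q}$, and the same threshold $\theta_2$ (your expression is algebraically identical to the paper's $2\tan^{-1}\bigl(\tfrac{1}{\sqrt{3}}\log_{\cshort}(\gamma^{1/4}/\lambda)\bigr)$). The only thing to note is that your anticipated ``main obstacle'' is vacuous: the paper takes $\sigma^\ast = \sigma_0$ exactly, which lies in $\Omega$ by definition of the state space and has $g(\sigma_0)=0$, so no new construction or membership verification is needed.
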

\begin{proof}
Recall $S_2$ is the set of all configurations whose inner perimeter goes through or below $m$.
We show that $\pi(S_2)$ is exponentially small in $k$, the height of the gap. By definition,
\[\pi(S_2) = \frac{\sum_{\sigma \in S_2} \lambda^{-p(\sigma)}\gamma^{-g(\sigma)}}{Z}.\]

By Lemma~\ref{lem:V_shape_initial_n}, the number of particles and objects in $\sigma_0$ for a land mass of height $k$ is $4k+7$.
Since $\sigma_0$ is a path of width 2 and every particle occupies a land location, $p(\sigma_0) = 4k+7$ and $g(\sigma_0) = 0$. Thus,
\[Z = \sum_{\sigma \in \Omega} \lambda^{-p(\sigma)}\gamma^{-g(\sigma)} \geq \lambda^{-p(\sigma_0)}\gamma^{-g(\sigma_0)} = \lambda^{-4k-7}.\]
It is simpler to work with gap depth $q$ instead of gap height~$k$.
By Lemma~\ref{lem:q}, $k$ satisfies $k \leq \left(\frac{1}{2} + \frac{\sqrt{3}}{2}\tan\frac{\theta}{2}\right)q + 1$, so
\[Z \geq \lambda^{-4k-7} \geq \lambda^{-4\left(\frac{1}{2} + \frac{\sqrt{3}}{2} \tan\frac{\theta}{2}\right)q - 4 - 7} = \lambda^{-\left(2 + 2\sqrt{3}\tan\frac{\theta}{2}\right)q - 11}.\]
Combining this with Lemma~\ref{lem:in_gap},
\[\pi(S_2) = \sum_{\sigma \in S_2} \frac{\lambda^{-p(\sigma)}\gamma^{-g(\sigma)}}{Z}
\leq \lambda^{\left(2 + 2\sqrt{3} \tan\frac{\theta}{2} \right)q + 11}\sum_{\sigma \in S_2}\lambda^{-p(\sigma)}\gamma^{-\frac{q}{2}}.\]

Let $p_{min}$ (resp., $p_{max}$) be the minimum (resp., maximum) possible perimeter for a valid particle configuration in $S_2$.
By Lemma~\ref{lem:p_min}, $p_{min} \geq 2\sqrt{3}\tan(\theta/2) q$.
As shown in the proof of Theorem~\ref{thm:theta_1}, $p_{max} = 8k+12$; in terms of $q$, by Lemma~\ref{lem:q},
\[p_{max} \leq 8\left(\frac{q}{2} + \frac{q\sqrt{3}}{2}\tan\frac{\theta}{2} + 1\right) + 12
= 4q + 4q\sqrt{3}\tan\frac{\theta}{2} + 20.\]

Using the result from~\cite{Cannon2016} which upper bounds the number of particle configurations with perimeter $p$ by the expression $f(p)\cshort^p$, for some subexponential function $f$, we have that:
\begin{align*}
\pi(S_2) &\leq \lambda^{\left(2 + 2\sqrt{3}\tan\frac{\theta}{2}\right)q + 11} \sum_{p=p_{min}}^{p_{max}} f(p)\cshort^p\lambda^{-p}\gamma^{-\frac{q}{2}}\\
&\leq \lambda^{\left(2 + 2\sqrt{3} \tan\frac{\theta}{2}\right)q + 11} \left(\sum_{p=p_{min}}^{p_{max}} f(p)\right) \left(\frac{\cshort}{\lambda}\right)^{p_{min}} \gamma^{-\frac{q}{2}}\\
&\leq \left(\lambda^{11} \sum_{p = p_{min}}^{p_{max}} f(p)\right) \left(\lambda^{\left(2 + 2\sqrt{3} \tan\frac{\theta}{2}\right)} \left(\frac{\cshort}{\lambda}\right)^{2\sqrt{3}\tan\frac{\theta}{2}} \gamma^{-\frac{1}{2}} \right)^q\\
&= \left(\lambda^{11} \sum_{p = p_{min}}^{p_{max}} f(p)\right) \left(\lambda^2 \cshort^{2\sqrt{3}\tan\frac{\theta}{2}}\gamma^{-\frac{1}{2}}\right)^q.
\end{align*}

The first parentheses is a function $f_1(q)$ that is subexponential in $q$, as it has a polynomial number of summands based on our calculations of $p_{min}$ and $p_{max}$ (which are expressions in terms of $q$), and each summand is subexponential.
When the term in the second set of parentheses above is less than one, the second factor (this term raised to the $q$ power) is exponentially small in $q$, the gap depth, and thus for sufficiently large $q$ this term dominates and the entire expression is exponentially small in $q$.
This holds whenever $\theta$ satisfies:
\[\theta < 2\tan^{-1}\left(\frac{1}{2\sqrt{3}} \log_{\cshort}\left(\gamma^{1/2} \lambda^{-2}\right) \right)
= 2\tan^{-1}\left(\frac{1}{\sqrt{3}} \log_{\cshort}\left(\frac{\gamma^{1/4}}{\lambda}\right) \right) =: \theta_2.\]

Whenever $\gamma^{1/4}/\lambda > \cshort$ --- i.e., whenever $\gamma > (\lambda\cshort)^4$ --- the argument of $\tan^{-1}$ above is at least $1/\sqrt{3}$, and thus $\theta_2 > \pi/3$.
It follows that whenever $\gamma > (\lambda\cshort)^4$ and $\theta \in (\pi/3,\theta_2)$,
\[\pi(S_2) < f_1(q)\psi^q,\]
where $f_1(q)$ is subexponentially large in $q$ and $\psi < 1$ so the second term is exponentially small in $q$.
For sufficiently large $q$, the second term dominates, and we conclude the weight of set $S_2$ at stationarity is exponentially small in $q$.
Because $k$ and $q$ differ only by additive and multiplicative constants, it is also exponentially small in $k$, the gap height, for sufficiently large $k$.
\end{proof}

As was the case for small angles, here also we have that by Lemma~\ref{lem:V_shape_initial_n}, there are $n = 4k + 5$ particles.
Thus, we have that the probability the inner perimeter goes through or below midpoint $m$ when $\theta$ is sufficiently large is also exponentially small in $n$.

If we again use the example value of $\lambda = 4$ (as in the simulations depicted in Figure~\ref{fig:antbridge} and Figure~\ref{fig:theta_dependence}), Theorem~\ref{thm:theta_2} requires $\gamma > (\lambda\cshort)^4 \approx 34,786$.
This value is large, but importantly is constant (i.e., it does not depend on $n$) and is only an artifact of our proof.
For example, when $\lambda = 4$ and $\gamma = 10^5$, our methods show that the resulting bridge remains above midpoint $m$ with high probability for any angle between $\pi / 3 = 60^\circ$ and $\theta_2 \approx 1.2234 \sim 70.10^\circ$.
On the other hand, an experiment with $\lambda = 4$, $\gamma = 2$, and $\theta = 90^\circ$ is shown in Figure~\ref{fig:theta_dependence_90} to remain well above the midpoint $m$, suggesting that this behavior is stable for much smaller values of $\gamma$ and a much larger range of angles than we were able to prove.

As for Theorems~\ref{thm:alphabridge} and~\ref{thm:theta_1}, we are unable to give explicit bounds on the ``sufficiently large $k$" required by the statement of Theorem~\ref{thm:theta_2} because the exact form of $f(p)$ in its proof is unknown, but we expect and observe that it holds even for the small $k$ for which our proof does not apply.

\section{Simulations} \label{sec:simulations}

\begin{figure*}[t]
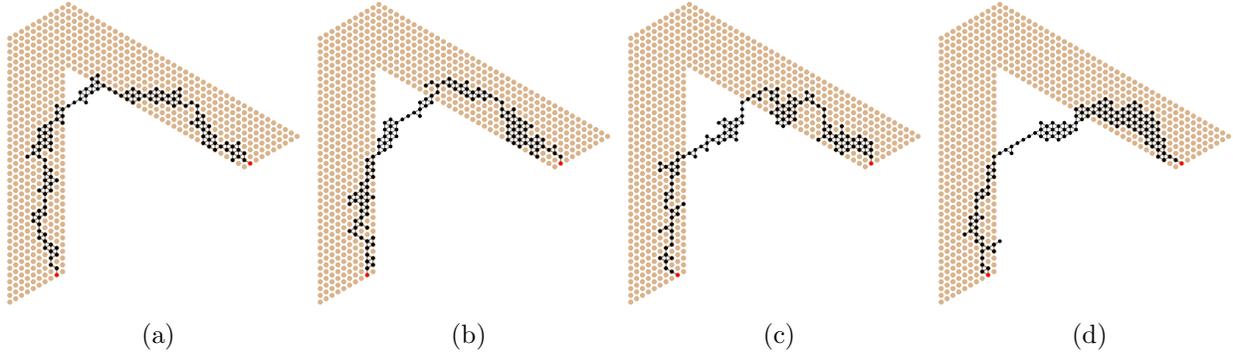

\centering
\begin{subfigure}{.25\textwidth}
	\centering
	\iftikz\input{fig_angle60_lambda4_gamma2_2000000.tex}\fi
	\caption{}
\end{subfigure}%
\begin{subfigure}{.25\textwidth}
	\centering
	\iftikz\input{fig_angle60_lambda4_gamma2_4000000.tex}\fi
	\caption{}
\end{subfigure}%
\begin{subfigure}{.25\textwidth}
	\centering
	\iftikz\input{fig_angle60_lambda4_gamma2_6000000.tex}\fi
	\caption{}
\end{subfigure}%
\begin{subfigure}{.25\textwidth}
	\centering
	\iftikz\input{fig_angle60_lambda4_gamma2_8000000.tex}\fi
	\caption{}
\end{subfigure}
\caption{A particle system using biases $\lambda = 4$ and $\gamma = 2$ to shortcut a V-shaped land mass with $\theta = \pi/3$ after (a) 2 million, (b) 4 million, (c) 6 million, and (d) 8 million iterations of Markov chain $\M$, beginning in configuration $\sigma_0$ shown in Figure~\ref{fig:initconfig_V}.}
\label{fig:antbridge}
\end{figure*}

\begin{figure*}[t]
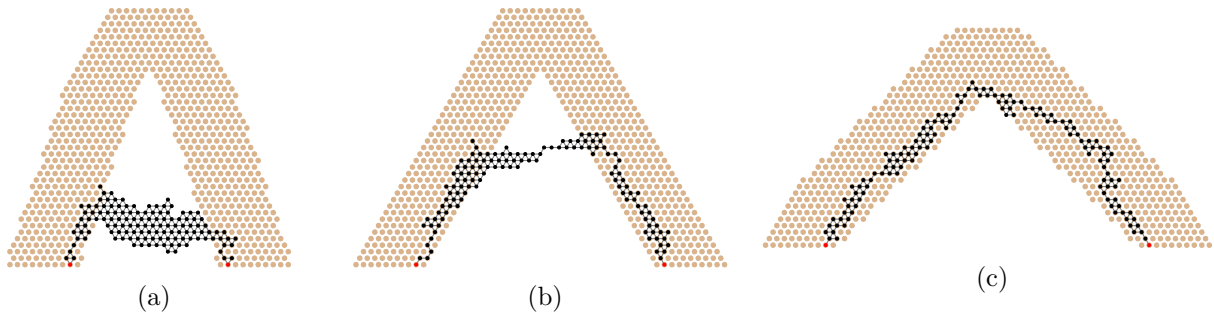

\centering
\begin{subfigure}{.28\textwidth}
	\centering
	\iftikz\input{fig_30_20000000.tex}\fi
	\caption{}
\end{subfigure}%
\begin{subfigure}{.35\textwidth}
	\centering
	\iftikz\input{fig_60_20000000_smaller.tex}\fi
	\caption{}
\end{subfigure}%
\begin{subfigure}{.37\textwidth}
	\centering
	\iftikz\input{fig_90_20000000.tex}\fi
	\caption{}
	\label{fig:theta_dependence_90}
\end{subfigure}
\caption{A particle system using biases $\lambda = 4$ and $\gamma = 2$ to shortcut a V-shaped land mass with angle (a) $\pi/6$, (b) $\pi/3$, and (c) $\pi/2$ after 20 million iterations of Markov chain $\M$. For a given angle, the land mass $L$ and initial configuration $\sigma_0$ were constructed as described in Section~\ref{sec:proofs}.}
\label{fig:theta_dependence}
\end{figure*}

\begin{figure*}[t]
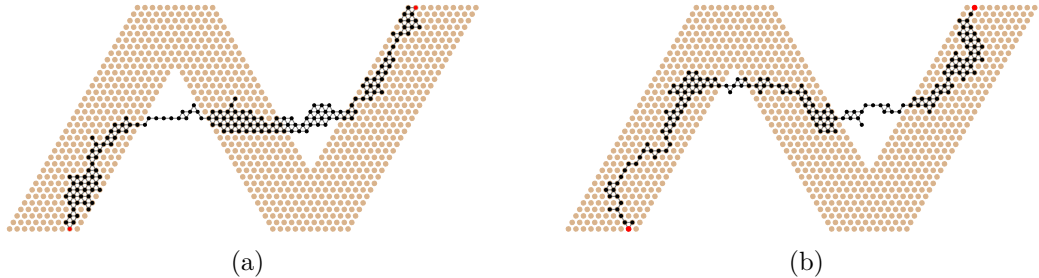

\centering
\begin{subfigure}{.45\textwidth}
	\centering
	\iftikz\input{fig_N_10000000.tex}\fi
	\caption{}
\end{subfigure}%
\begin{subfigure}{.45\textwidth}
	\centering
	\iftikz\input{fig_N_20000000.tex}\fi
	\caption{}
\end{subfigure}
\caption{A particle system using biases $\lambda = 4$ and $\gamma = 2$ to shortcut an N-shaped land mass after (a) 10 million and (b) 20 million iterations of Markov chain $\M$, beginning in configuration $\sigma_0$ shown Figure~\ref{fig:initconfig_N}.}
\label{fig:Nsim}
\end{figure*}

We can see the performance of our algorithm from simulation results on a variety of instances.
Figure~\ref{fig:antbridge} shows snapshots over time for a bridge shortcutting a V-shaped gap with internal angle $\theta = \pi/3$ and biases $\lambda = 4, \gamma = 2$.
Qualitatively, this bridge matches the shape and position of the army ant bridges in~\cite{Reid2015}.
Figure~\ref{fig:theta_dependence} shows the results of an experiment that held $\lambda$, $\gamma$, and the number of iterations of $\M$ constant, varying only the internal angle of the V-shaped land mass.
The particle system exhibits behavior consistent with the theoretical results in Section~\ref{sec:proofs} and the army ant bridges in~\cite{Reid2015}, shortcutting closer to the bottom of the gap when $\theta$ is small and staying almost entirely on land when $\theta$ is large.
Lastly, Figure~\ref{fig:Nsim} shows the resulting bridge structure when the land mass is N-shaped, demonstrating that our algorithm can be generalized beyond the original inspiration of V-shaped land masses to shortcut multiple gaps in more complex structures.

These simulations demonstrate the successful application of our stochastic approach to shortcut bridging.
Moreover, experimenting with variants suggests this approach may be useful for other related applications in the future.

\section{Conclusions and Future Directions} \label{sec:futurework}

In summary, we presented a Markov chain $\M$ that can be directly translated to a stochastic, distributed, local, asynchronous algorithm $\A$ that provably solves the shortcut bridging problem.
Furthermore, in the special case of bridging over the gap in a V-shaped land mass, we rigorously analyzed the effect of the land mass's internal angle, showing that below one threshold angle the bridge will shortcut near the bottom of the gap, and above another threshold angle the bridge will remain close to land, with high probability.

Several directions of further investigation seem promising.
The successful application of our stochastic approach to shortcut bridging suggests it may be useful for other types of problems as well; one related behavior of particular interest is ``exploration bridging'', where a particle system first explores its environment to discover sites of interest, and then converges to a bridge-like structure between them.
We are also interested in formulating alternative local rules for shortcut bridging which yield bridges that appear more ``structurally sound,'' though we suspect the information needed to do so may be difficult to encode in our particle systems~due to the constant-size memory constraint of the amoebot model.

%

\bibliographystyle{abbrv}
\bibliography{ref}

\begin{thebibliography}{10}

\bibitem{Angluin2006}
D.~Angluin, J.~Aspnes, Z.~Diamadi, M.~J. Fischer, and R.~Peralta.
\newblock Computation in networks of passively mobile finite-state sensors.
\newblock {\em Distributed Computing}, 18(4):235--253, 2006.

\bibitem{Baxter1980}
R.~J. Baxter, I.~G. Enting, and S.~K. Tsang.
\newblock Hard-square lattice gas.
\newblock {\em Journal of Statistical Physics}, 22:465--489, 1980.

\bibitem{Blanca2018}
A.~Blanca, Y.~Chen, D.~Galvin, D.~Randall, and P.~Tetali.
\newblock Phase coexistence for the hard-core model on {$\mathbb{Z}^2$}.
\newblock {\em Combinatorics, Probability and Computing}, pages 1--22, 2018.

\bibitem{Camazine1999}
S.~Camazine, K.~P. Visscher, J.~Finley, and S.~R. Vetter.
\newblock House-hunting by honey bee swarms: Collective decisions and
  individual behaviors.
\newblock {\em Insectes Sociaux}, 46(4):348--360, 1999.

\bibitem{Cannon2016}
S.~Cannon, J.~J. Daymude, D.~Randall, and A.~W. Richa.
\newblock A {M}arkov chain algorithm for compression in self-organizing
  particle systems.
\newblock In {\em Proceedings of the 2016 ACM Symposium on Principles of
  Distributed Computing}, PODC '16, pages 279--288, 2016.

\bibitem{Chazelle2009}
B.~Chazelle.
\newblock Natural algorithms.
\newblock In {\em Proceedings of the 20th ACM-SIAM Symposium on Discrete
  Algorithms}, SODA '09, pages 422--431, 2009.

\bibitem{Chen2015}
M.~Chen, D.~Xin, and D.~Woods.
\newblock Parallel computation using active self-assembly.
\newblock {\em Natural Computing}, 14(2):225--250, 2015.

\bibitem{Cheung2011}
K.~C. Cheung, E.~D. Demaine, J.~R. Bachrach, and S.~Griffith.
\newblock Programmable assembly with universally foldable strings (moteins).
\newblock {\em IEEE Transactions on Robotics}, 27(4):718--729, 2011.

\bibitem{Chirikjian1994}
G.~S. Chirikjian.
\newblock Kinematics of a metamorphic robotic system.
\newblock In {\em Proceedings of the 1994 IEEE International Conference on
  Robotics and Automation}, volume~1 of {\em ICRA '94}, pages 449--455, 1994.

\bibitem{Cieliebak2012}
M.~Cieliebak, P.~Flocchini, G.~Prencipe, and N.~Santoro.
\newblock Distributed computing by mobile robots: gathering.
\newblock {\em SIAM Journal on Computing}, 41(4):829--879, 2012.

\bibitem{Daymude2017}
J.~J. Daymude, R.~Gmyr, A.~W. Richa, C.~Scheideler, and T.~Strothmann.
\newblock Improved leader election for self-organizing programmable matter.
\newblock In {\em Algorithms for Sensor Systems}, ALGOSENSORS '17, pages
  127--140, 2017.

\bibitem{sops-amoebot}
J.~J. Daymude, A.~W. Richa, and C.~Scheideler.
\newblock The amoebot model.
\newblock Available online at https://sops.engineering.asu.edu/sops/amoebot,
  2017.

\bibitem{Derakhshandeh2014}
Z.~Derakhshandeh, S.~Dolev, R.~Gmyr, A.~W. Richa, C.~Scheideler, and
  T.~Strothmann.
\newblock Brief announcement: amoebot - a new model for programmable matter.
\newblock In {\em Proceedings of the 26th ACM Symposium on Parallelism in
  Algorithms and Architectures}, SPAA '14, pages 220--222, 2014.

\bibitem{Derakhshandeh2017}
Z.~Derakhshandeh, R.~Gmyr, A.~W. Richa, C.~Scheideler, and T.~Strothmann.
\newblock Universal coating for programmable matter.
\newblock {\em Theoretical Computer Science}, 671:56--68, 2017.

\bibitem{Douglas2009}
S.~Douglas, H.~Dietz, T.~Liedl, B.~H{\"{o}}gberg, F.~Graf, and W.~Shih.
\newblock Self-assembly of {DNA} into nanoscale three-dimensional shapes.
\newblock {\em Nature}, 459:414--418, 2009.

\bibitem{Duminil-Copin2012}
H.~Duminil-Copin and S.~Smirnov.
\newblock The connective constant of the honeycomb lattice equals
  $\sqrt{2+\sqrt{2}}$.
\newblock {\em Annals of Mathematics}, 275(3):1653--1665, 2012.

\bibitem{Feller1968}
W.~Feller.
\newblock {\em An Introduction to Probability Theory and Its Applications},
  volume~1.
\newblock Wiley, 1968.

\bibitem{Flocchini2008}
P.~Flocchini, G.~Prencipe, N.~Santoro, and P.~Widmayer.
\newblock Arbitrary pattern formation by asynchronous, anonymous, oblivious
  robots.
\newblock {\em Theoretical Computer Science}, 407(1):412--447, 2008.

\bibitem{Hastings1970}
W.~K. Hastings.
\newblock Monte carlo sampling methods using {M}arkov chains and their
  applications.
\newblock {\em Biometrika}, 57(1):97--109, 1970.

\bibitem{Jeanson2005}
R.~Jeanson, C.~Rivault, J.~Deneubourg, S.~Blanco, R.~Fournier, C.~Jost, and
  G.~Theraulaz.
\newblock Self-organized aggregation in cockroaches.
\newblock {\em Animal Behaviour}, 69(1):169--180, 2005.

\bibitem{Levin2009}
D.~A. Levin, Y.~Peres, and E.~L. Wilmer.
\newblock {\em {M}arkov chains and mixing times}.
\newblock American Mathematical Society, 2009.

\bibitem{Lynch1996}
N.~Lynch.
\newblock {\em Distributed Algorithms}.
\newblock Morgan Kauffman, 1996.

\bibitem{Mohammed2017}
A.~Mohammed, P.~{\v{S}}ulc, J.~Zenk, and R.~Schulman.
\newblock Self-assembling {DNA} nanotubes to connect molecular landmarks.
\newblock {\em Nature Nanotechnology}, 12:312--316, 2017.

\bibitem{Reid2016}
C.~R. Reid and T.~Latty.
\newblock Collective behaviour and swarm intelligence in slime moulds.
\newblock {\em FEMS Microbiology Reviews}, 40(6):798--806, 2016.

\bibitem{Reid2015}
C.~R. Reid, M.~J. Lutz, S.~Powell, A.~B. Kao, I.~D. Couzin, and S.~Garnier.
\newblock Army ants dynamically adjust living bridges in response to a
  cost--benefit trade-off.
\newblock {\em Proceedings of the National Academy of Sciences},
  112(49):15113--15118, 2015.

\bibitem{Restrepo2013}
R.~Restrepo, J.~Shin, P.~Tetali, E.~Vigoda, and L.~Yang.
\newblock Improving mixing conditions on the grid for counting and sampling
  independent sets.
\newblock {\em Probability Theory and Related Fields}, 156:75--99, 2013.

\bibitem{Rubenstein2014}
M.~Rubenstein, A.~Cornejo, and R.~Nagpal.
\newblock Programmable self-assembly in a thousand-robot swarm.
\newblock {\em Science}, 345(6198):795--799, 2014.

\bibitem{Walter2004-envelop}
J.~E. Walter, M.~E. Brooks, D.~F. Little, and N.~M. Amato.
\newblock Enveloping multi-pocket obstacles with hexagonal metamorphic robots.
\newblock In {\em Proceedings of the 2004 IEEE International Conference on
  Robotics and Automation}, volume~3 of {\em ICRA '04}, pages 2204--2209, 2004.

\bibitem{Walter2004-chains}
J.~E. Walter, J.~L. Welch, and N.~M. Amato.
\newblock Distributed reconfiguration of metamorphic robot chains.
\newblock {\em Distributed Computing}, 17(2):171--189, 2004.

\bibitem{Wei2012}
B.~Wei, M.~Dai, and P.~Yin.
\newblock Complex shapes self-assembled from single-stranded {DNA} tiles.
\newblock {\em Nature}, 485:623--626, 2012.

\bibitem{Woods2015}
D.~Woods.
\newblock Intrinsic universality and the computational power of self-assembly.
\newblock {\em Philosophical Transactions of the Royal Society A}, 373(2046),
  2015.

\bibitem{Woods2013}
D.~Woods, H.-L. Chen, S.~Goodfriend, N.~Dabby, E.~Winfree, and P.~Yin.
\newblock Active self-assembly of algorithmic shapes and patterns in
  polylogarithmic time.
\newblock In {\em Proceedings of the 4th Innovations in Theoretical Computer
  Science Conference}, ITCS '13, pages 353--354, 2013.

\bibitem{Yim2007}
M.~Yim, W.-M. Shen, B.~Salemi, D.~Rus, M.~Moll, H.~Lipson, E.~Klavins, and
  G.~S. Chirikjian.
\newblock Modular self-reconfigurable robot systems.
\newblock {\em IEEE Robotics Automation Magazine}, 14(1):43--52, 2007.

\end{thebibliography}

\end{document}